\begin{document}

\newcommand\cutoffint{\mathop{-\hskip -4mm\int}\limits}
\newcommand\cutoffsum{\mathop{-\hskip -4mm\sum}\limits}
\newcommand\cutoffzeta{-\hskip -1.7mm\zeta} 
\newcommand{\goth}[1]{\ensuremath{\mathfrak{#1}}}
\newcommand{\bbox}{\normalsize {}%
        \nolinebreak \hfill $\blacksquare$ \medbreak \par}
\newcommand{\simall}[2]{\underset{#1\rightarrow#2}{\sim}}

\newtheorem{theorem}{Theorem}[section]
\newtheorem{prop}[theorem]{Proposition}
\newtheorem{lemdefn}[theorem]{Lemma-Definition}

\theoremstyle{definition}
\newtheorem{rk}[theorem]{Remark}
\theoremstyle{definition}
\newtheorem{defn}[theorem]{Definition}
\newtheorem{propdefn}[theorem]{Proposition-Definition}
\newtheorem{lem}[theorem]{Lemma}
\newtheorem{thm}[theorem]{Theorem}
\newtheorem{coro}[theorem]{Corollary}
\newtheorem{ex}[theorem]{Example}
\newtheorem{claim}[theorem]{Claim}
\newtheorem{coex}[theorem]{Counterexample}
\renewcommand{\theenumi}{{\it\roman{enumi}}}
\renewcommand{\theenumii}{\alpha{enumii}}

\newenvironment{thmenumerate}{\leavevmode\begin{enumerate}[leftmargin=1.5em]}{\end{enumerate}}

\newcommand{\delete}[1]{{}}

\newcommand{\nc}{\newcommand}

\nc{\mlabel}[1]{\label{#1}}  
\nc{\mcite}[1]{\cite{#1}}  
\nc{\mref}[1]{\ref{#1}}  
\nc{\mbibitem}[1]{\bibitem{#1}} 

\delete{
\nc{\mlabel}[1]{\label{#1}  
{\hfill \hspace{1cm}{\small{{\ }\hfill(#1)}}}}
\nc{\mcite}[1]{\cite{#1}{\small{{{\ }(#1)}}}}  
\nc{\mref}[1]{\ref{#1}{{{{\ }(#1)}}}}  
\nc{\mbibitem}[1]{\bibitem[\bf #1]{#1}} 
}



\nc{\ola}[1]{\stackrel{#1}{\longrightarrow}}
\nc{\mtop}{\top\hspace{-1mm}}
\nc{\mrm}[1]{{\rm #1}}
\nc{\depth}{{\mrm d}}
\nc{\id}{\mrm{id}}
\nc{\Id}{\mathrm{Id}}
\nc{\mapped}{operated\xspace}
\nc{\Mapped}{Operated\xspace}
\newcommand{\redtext}[1]{{\textcolor{red}{#1}}}
\newcommand{\Hol}{\text{Hol}}
\newcommand{\Mer}{\text{Mer}}
\newcommand{\lin}{\text{lin}}
\nc{\ot}{\otimes}
\nc{\Hom}{\mathrm{Hom}}
\nc{\CS}{\mathcal{CS}}
\nc{\bfk}{\mathbf{K}}
\nc{\lwords}{\calw}
\nc{\ltrees}{\calf}
\nc{\lpltrees}{\calp}
\nc{\Map}{\mathrm{Map}}
\nc{\rep}{\beta}
\nc{\free}[1]{\bar{#1}}
\nc{\OS}{\mathbf{OS}}
\nc{\OM}{\mathbf{OM}}
\nc{\OA}{\mathbf{OA}}
\nc{\based}{based\xspace}
\nc{\tforall}{\text{ for all }}
\nc{\hwp}{\widehat{P}^\calw}
\nc{\sha}{{\mbox{\cyr X}}}
\font\cyr=wncyr10 \font\cyrs=wncyr7
\nc{\Mor}{\mathrm{Mor}}
\def\lc{\lfloor}
\def\rc{\rfloor}
\nc{\oF}{{\overline{F}}}
\nc{\mge}{_{bu}\!\!\!\!{}}
\newcommand{\bottop}{\top\hspace{-0.8em}\bot}
\nc{\supp}{\mathrm{Dep}}
\nc{\orth}{orthogonal\xspace}
\newcommand{\W}{\mathbb{W}}
\newcommand{\R}{\mathbb{R}}
\newcommand{\bbR}{\mathbb{R}}
\newcommand{\C}{\mathbb{C}}
\newcommand{\K}{\mathbb{K}}
\newcommand{\Z}{\mathbb{Z}}
\newcommand{\Q}{\mathbb{Q}}
\newcommand{\bbB}{\mathbb{B}}
\newcommand{\N}{\mathbb{N}}
\newcommand{\F}{\mathbf{F}}
\newcommand{\T}{\mathbf{T}}
\newcommand{\bbG}{\mathbb{G}}
\newcommand{\U}{\mathbb{U}}
\newcommand{\loc}{locality\xspace}
\newcommand{\Loc}{Locality\xspace}
\newcommand {\frakc}{{\mathfrak {c}}}
\newcommand {\frakd}{{\mathfrak {d}}}
\newcommand {\fraku}{{\mathfrak {u}}}
\newcommand {\fraks}{{\mathfrak {s}}}
\newcommand{\frakS}{S}
\newcommand {\bbf}{\ltrees}
\newcommand {\bbg}{{\mathbb{G}}}
\newcommand {\bbp}{\lpltrees}
\newcommand {\bbw}{{\mathbb{W}}}
\newcommand {\cala}{{\mathcal {A}}}
\newcommand {\calb}{\mathcal {B}}
\newcommand {\calc}{{\mathcal {C}}}
\newcommand {\cald}{{\mathcal {D}}}
\newcommand {\cale}{{\mathcal {E}}}
\newcommand {\calf}{{\mathcal {F}}}
\newcommand {\calg}{{\mathcal {G}}}
\newcommand {\calh}{\mathcal{H}}
\newcommand {\cali}{\mathcal{I}}
\newcommand {\call}{{\mathcal {L}}}
\newcommand {\calm}{{\mathcal {M}}}
\newcommand {\calp}{{\mathcal {P}}}
\newcommand {\calr}{{\mathcal {R}}}
\newcommand {\cals}{{\mathcal {S}}}
\newcommand {\calt}{{\mathcal {T}}}
\newcommand {\calv}{{\mathcal {V}}}
\newcommand {\calw}{{\mathcal {W}}}
\nc{\vep}{\varepsilon}
\def \e {{\epsilon}}
\newcommand{\sy  }[1]{{\color{purple}  #1}} 
\newcommand{\cy}[1]{{\color{cyan}  #1}}
\newcommand{\zb }[1]{{\color{blue}  #1}}
\newcommand{\li}[1]{{\color{red} #1}}
\newcommand{\lir}[1]{{\it\color{red} (Li: #1)}}


\newcommand{\tddeux}[2]{\begin{picture}(12,5)(0,-1)
\put(3,0){\circle*{2}}
\put(3,0){\line(0,1){5}}
\put(3,5){\circle*{2}}
\put(3,-2){\tiny #1}
\put(3,4){\tiny #2}
\end{picture}}

\newcommand{\tdtroisun}[3]{\begin{picture}(20,12)(-5,-1)
\put(3,0){\circle*{2}}
\put(-0.65,0){$\vee$}
\put(6,7){\circle*{2}}
\put(0,7){\circle*{2}}
\put(5,-2){\tiny #1}
\put(6,5){\tiny #2}
\put(-5,8){\tiny #3}
\end{picture}}

\def\ta1{{\scalebox{0.25}{ 
\begin{picture}(12,12)(38,-38)
\SetWidth{0.5} \SetColor{Black} \Vertex(45,-33){5.66}
\end{picture}}}}

\def\tb2{{\scalebox{0.25}{ 
\begin{picture}(12,42)(38,-38)
\SetWidth{0.5} \Vertex(45,-3){5.66}
\SetWidth{1.0} \Line(45,-3)(45,-33) \SetWidth{0.5}
\Vertex(45,-33){5.66}
\end{picture}}}}

\def\tc3{{\scalebox{0.25}{ 
\begin{picture}(12,72)(38,-38)
\SetWidth{0.5} \SetColor{Black} \Vertex(45,27){5.66}
\SetWidth{1.0} \Line(45,27)(45,-3) \SetWidth{0.5}
\Vertex(45,-33){5.66} \SetWidth{1.0} \Line(45,-3)(45,-33)
\SetWidth{0.5} \Vertex(45,-3){5.66}
\end{picture}}}}

\def\td31{{\scalebox{0.25}{ 
\begin{picture}(42,42)(23,-38)
\SetWidth{0.5} \SetColor{Black} \Vertex(45,-3){5.66}
\Vertex(30,-33){5.66} \Vertex(60,-33){5.66} \SetWidth{1.0}
\Line(45,-3)(30,-33) \Line(60,-33)(45,-3)
\end{picture}}}}

\def\xtd31{{\scalebox{0.35}{ 
\begin{picture}(70,42)(13,-35)
\SetWidth{0.5} \SetColor{Black} \Vertex(45,-3){5.66}
\Vertex(30,-33){5.66} \Vertex(60,-33){5.66} \SetWidth{1.0}
\Line(45,-3)(30,-33) \Line(60,-33)(45,-3)
\put(38,-38){\em \huge x}
\end{picture}}}}

\def\ytd31{{\scalebox{0.35}{ 
\begin{picture}(70,42)(13,-35)
\SetWidth{0.5} \SetColor{Black} \Vertex(45,-3){5.66}
\Vertex(30,-33){5.66} \Vertex(60,-33){5.66} \SetWidth{1.0}
\Line(45,-3)(30,-33) \Line(60,-33)(45,-3)
\put(38,-38){\em \huge y}
\end{picture}}}}

\def\xldec41r{{\scalebox{0.35}{ 
\begin{picture}(70,42)(13,-45)
\SetColor{Black}
\SetWidth{0.5} \Vertex(45,-3){5.66}
\Vertex(30,-33){5.66} \Vertex(60,-33){5.66}
\Vertex(60,-63){5.66}
\SetWidth{1.0}
\Line(45,-3)(30,-33) \Line(60,-33)(45,-3)
\Line(60,-33)(60,-63)
\put(38,-38){\em \huge x}

\end{picture}}}}

\def\xyrlong{{\scalebox{0.35}{ 
\begin{picture}(70,72)(13,-48)
\SetColor{Black}
\SetWidth{0.5} \Vertex(45,-3){5.66}
\Vertex(30,-33){5.66} \Vertex(60,-33){5.66} \SetWidth{1.0}
\Line(45,-3)(30,-33) \Line(60,-33)(45,-3)
\put(38,-38){\em\huge x}
\SetWidth{0.5}
\Vertex(45,-63){5.66} \Vertex(75,-63){5.66} \SetWidth{1.0}
\Line(60,-33)(45,-63) \Line(60,-33)(75,-63)
\put(55,-63){\em\huge y}
\end{picture}}}}

\def\xyllong{{\scalebox{0.35}{ 
\begin{picture}(70,72)(13,-48)
\SetColor{Black}
\SetWidth{0.5} \Vertex(45,-3){5.66}
\Vertex(30,-33){5.66} \Vertex(60,-33){5.66} \SetWidth{1.0}
\Line(45,-3)(30,-33) \Line(60,-33)(45,-3)
\put(40,-33){\em\huge y}
\SetWidth{0.5}
\Vertex(15,-63){5.66} \Vertex(45,-63){5.66} \SetWidth{1.0}
\Line(30,-33)(15,-63) \Line(30,-33)(45,-63)
\put(25,-63){\em\huge x}
\end{picture}}}}

\def\xyldec43{{\scalebox{0.35}{ 
\begin{picture}(70,62)(13,-25)
\SetColor{Black}
\SetWidth{0.5} \Vertex(45,-3){5.66}
\Vertex(15,-33){5.66} \Vertex(45,-38){5.66}
\Vertex(75,-33){5.66}
\SetWidth{1.0}
\Line(45,-3)(15,-33) \Line(45,-3)(45,-38)
\Line(45,-3)(74,-33)
\put(25,-33){\em\huge x}
\put(50,-33){\em\huge y}
\end{picture}}}}

\def\te4{{\scalebox{0.25}{ 
\begin{picture}(12,102)(38,-8)
\SetWidth{0.5} \SetColor{Black} \Vertex(45,57){5.66}
\Vertex(45,-3){5.66} \Vertex(45,27){5.66} \Vertex(45,87){5.66}
\SetWidth{1.0} \Line(45,57)(45,27) \Line(45,-3)(45,27)
\Line(45,57)(45,87)
\end{picture}}}}

\def\tf41{{\scalebox{0.25}{ 
\begin{picture}(42,72)(38,-8)
\SetWidth{0.5} \SetColor{Black} \Vertex(45,27){5.66}
\Vertex(45,-3){5.66} \SetWidth{1.0} \Line(45,27)(45,-3)
\SetWidth{0.5} \Vertex(60,57){5.66} \SetWidth{1.0}
\Line(45,27)(60,57) \SetWidth{0.5} \Vertex(75,27){5.66}
\SetWidth{1.0} \Line(75,27)(60,57)
\end{picture}}}}

\def\tg42{{\scalebox{0.25}{ 
\begin{picture}(42,72)(8,-8)
\SetWidth{0.5} \SetColor{Black} \Vertex(45,27){5.66}
\Vertex(45,-3){5.66} \SetWidth{1.0} \Line(45,27)(45,-3)
\SetWidth{0.5} \Vertex(15,27){5.66} \Vertex(30,57){5.66}
\SetWidth{1.0} \Line(15,27)(30,57) \Line(45,27)(30,57)
\end{picture}}}}

\def\th43{{\scalebox{0.25}{ 
\begin{picture}(42,42)(8,-8)
\SetWidth{0.5} \SetColor{Black} \Vertex(45,-3){5.66}
\Vertex(15,-3){5.66} \Vertex(30,27){5.66} \SetWidth{1.0}
\Line(15,-3)(30,27) \Line(45,-3)(30,27) \Line(30,27)(30,-3)
\SetWidth{0.5} \Vertex(30,-3){5.66}
\end{picture}}}}

\def\thII43{{\scalebox{0.25}{ 
\begin{picture}(72,57) (68,-128)
    \SetWidth{0.5}
    \SetColor{Black}
    \Vertex(105,-78){5.66}
    \SetWidth{1.5}
    \Line(105,-78)(75,-123)
    \Line(105,-78)(105,-123)
    \Line(105,-78)(135,-123)
    \SetWidth{0.5}
    \Vertex(75,-123){5.66}
    \Vertex(105,-123){5.66}
    \Vertex(135,-123){5.66}
  \end{picture}
  }}}

\def\thj44{{\scalebox{0.25}{ 
\begin{picture}(42,72)(8,-8)
\SetWidth{0.5} \SetColor{Black} \Vertex(30,57){5.66}
\SetWidth{1.0} \Line(30,57)(30,27) \SetWidth{0.5}
\Vertex(30,27){5.66} \SetWidth{1.0} \Line(45,-3)(30,27)
\SetWidth{0.5} \Vertex(45,-3){5.66} \Vertex(15,-3){5.66}
\SetWidth{1.0} \Line(15,-3)(30,27)
\end{picture}}}}

\def\xthj44{{\scalebox{0.35}{ 
\begin{picture}(42,72)(8,-8)
\SetWidth{0.5} \SetColor{Black} \Vertex(30,57){5.66}
\SetWidth{1.0} \Line(30,57)(30,27) \SetWidth{0.5}
\Vertex(30,27){5.66} \SetWidth{1.0} \Line(45,-3)(30,27)
\SetWidth{0.5} \Vertex(45,-3){5.66} \Vertex(15,-3){5.66}
\SetWidth{1.0} \Line(15,-3)(30,27)
\put(25,-3){\em\huge x}
\end{picture}}}}

\def\ti5{{\scalebox{0.25}{ 
\begin{picture}(12,132)(23,-8)
\SetWidth{0.5} \SetColor{Black} \Vertex(30,117){5.66}
\SetWidth{1.0} \Line(30,117)(30,87) \SetWidth{0.5}
\Vertex(30,87){5.66} \Vertex(30,57){5.66} \Vertex(30,27){5.66}
\Vertex(30,-3){5.66} \SetWidth{1.0} \Line(30,-3)(30,27)
\Line(30,27)(30,57) \Line(30,87)(30,57)
\end{picture}}}}

\def\tj51{{\scalebox{0.25}{ 
\begin{picture}(42,102)(53,-38)
\SetWidth{0.5} \SetColor{Black} \Vertex(61,27){4.24}
\SetWidth{1.0} \Line(75,57)(90,27) \Line(60,27)(75,57)
\SetWidth{0.5} \Vertex(90,-3){5.66} \Vertex(60,27){5.66}
\Vertex(75,57){5.66} \Vertex(90,-33){5.66} \SetWidth{1.0}
\Line(90,-33)(90,-3) \Line(90,-3)(90,27) \SetWidth{0.5}
\Vertex(90,27){5.66}
\end{picture}}}}

\def\tk52{{\scalebox{0.25}{ 
\begin{picture}(42,102)(23,-8)
\SetWidth{0.5} \SetColor{Black} \Vertex(60,57){5.66}
\Vertex(45,87){5.66} \SetWidth{1.0} \Line(45,87)(60,57)
\SetWidth{0.5} \Vertex(30,57){5.66} \SetWidth{1.0}
\Line(30,57)(45,87) \SetWidth{0.5} \Vertex(30,-3){5.66}
\SetWidth{1.0} \Line(30,-3)(30,27) \SetWidth{0.5}
\Vertex(30,27){5.66} \SetWidth{1.0} \Line(30,57)(30,27)
\end{picture}}}}

\def\tl53{{\scalebox{0.25}{ 
\begin{picture}(42,102)(8,-8)
\SetWidth{0.5} \SetColor{Black} \Vertex(30,57){5.66}
\Vertex(30,27){5.66} \SetWidth{1.0} \Line(30,57)(30,27)
\SetWidth{0.5} \Vertex(30,87){5.66} \SetWidth{1.0}
\Line(30,27)(45,-3) \SetWidth{0.5} \Vertex(15,-3){5.66}
\SetWidth{1.0} \Line(15,-3)(30,27) \Line(30,57)(30,87)
\SetWidth{0.5} \Vertex(45,-3){5.66}
\end{picture}}}}

\def\tm54{{\scalebox{0.25}{ 
\begin{picture}(42,72)(8,-38)
\SetWidth{0.5} \SetColor{Black} \Vertex(30,-3){5.66}
\SetWidth{1.0} \Line(30,27)(30,-3) \Line(30,-3)(45,-33)
\SetWidth{0.5} \Vertex(15,-33){5.66} \SetWidth{1.0}
\Line(15,-33)(30,-3) \SetWidth{0.5} \Vertex(45,-33){5.66}
\SetWidth{1.0} \Line(30,-33)(30,-3) \SetWidth{0.5}
\Vertex(30,-33){5.66} \Vertex(30,27){5.66}
\end{picture}}}}

\def\tn55{{\scalebox{0.25}{ 
\begin{picture}(42,72)(8,-38)
\SetWidth{0.5} \SetColor{Black} \Vertex(15,-33){5.66}
\Vertex(45,-33){5.66} \Vertex(30,27){5.66} \SetWidth{1.0}
\Line(45,-33)(45,-3) \SetWidth{0.5} \Vertex(45,-3){5.66}
\Vertex(15,-3){5.66} \SetWidth{1.0} \Line(30,27)(45,-3)
\Line(15,-3)(30,27) \Line(15,-3)(15,-33)
\end{picture}}}}

\def\tp56{{\scalebox{0.25}{ 
\begin{picture}(66,111)(0,0)
\SetWidth{0.5} \SetColor{Black} \Vertex(30,66){5.66}
\Vertex(45,36){5.66} \SetWidth{1.0} \Line(30,66)(45,36)
\Line(15,36)(30,66) \SetWidth{0.5} \Vertex(30,6){5.66}
\Vertex(60,6){5.66} \SetWidth{1.0} \Line(60,6)(45,36)
\SetWidth{0.5}
\SetWidth{1.0} \Line(45,36)(30,6) \SetWidth{0.5}
\Vertex(15,36){5.66}
\end{picture}}}}

\def\tq57{{\scalebox{0.25}{ 
\begin{picture}(81,111)(0,0)
\SetWidth{0.5} \SetColor{Black} \Vertex(45,36){5.66}
\Vertex(30,6){5.66} \Vertex(60,6){5.66} \SetWidth{1.0}
\Line(60,6)(45,36) \SetWidth{0.5}
\SetWidth{1.0} \Line(45,36)(30,6) \SetWidth{0.5}
\Vertex(75,36){5.66} \SetWidth{1.0} \Line(45,36)(60,66)
\Line(60,66)(75,36) \SetWidth{0.5} \Vertex(60,66){5.66}
\end{picture}}}}

\def\tr58{{\scalebox{0.25}{ 
\begin{picture}(81,111)(0,0)
\SetWidth{0.5} \SetColor{Black} \Vertex(60,6){5.66}
\Vertex(75,36){5.66} \SetWidth{1.0} \Line(60,66)(75,36)
\SetWidth{0.5} \Vertex(60,66){5.66}
\SetWidth{1.0} \Line(60,36)(60,66) \Line(60,6)(60,36)
\SetWidth{0.5} \Vertex(60,36){5.66} \Vertex(45,36){5.66}
\SetWidth{1.0} \Line(60,66)(45,36)
\end{picture}}}}

\def\ts59{{\scalebox{0.25}{ 
\begin{picture}(81,111)(0,0)
\SetWidth{0.5} \SetColor{Black}
\Vertex(75,36){5.66} \SetWidth{1.0} \Line(60,66)(75,36)
\SetWidth{0.5} \Vertex(60,66){5.66}
\SetWidth{1.0} \Line(60,36)(60,66) \SetWidth{0.5}
\Vertex(60,36){5.66} \Vertex(45,36){5.66} \SetWidth{1.0}
\Line(60,66)(45,36) \Line(75,6)(75,36) \SetWidth{0.5}
\Vertex(75,6){5.66}
\end{picture}}}}

\def\tt591{{\scalebox{0.25}{ 
\begin{picture}(81,111)(0,0)
\SetWidth{0.5} \SetColor{Black}
\Vertex(75,36){5.66} \SetWidth{1.0} \Line(60,66)(75,36)
\SetWidth{0.5} \Vertex(60,66){5.66}
\SetWidth{1.0} \Line(60,36)(60,66) \SetWidth{0.5}
\Vertex(60,36){5.66} \Vertex(45,36){5.66} \SetWidth{1.0}
\Line(60,66)(45,36) \SetWidth{0.5} \Vertex(45,6){5.66}
\SetWidth{1.0} \Line(45,6)(45,36)
\end{picture}}}}

\def\bigdect{{\scalebox{0.4}{ 
\begin{picture}(140,120)(0,-60)
\SetColor{Black}
\SetWidth{0.5} \Vertex(70,60){5.66}
\put(48,60){\em\huge$\alpha$}
\SetWidth{1.0} \Line(70,60)(0,20)
\SetWidth{0.5} \Vertex(0,20){5.66}
\put(-15,25){\em\huge$\beta$}
\SetWidth{1.0} \Line(70,60)(70,20)
\SetWidth{0.5} \Vertex(70,20){5.66}
\put(50,20){\em\huge$e$}
\SetWidth{1.0} \Line(70,60)(140,20)
\SetWidth{0.5} \Vertex(140,20){5.66}
\put(150,25){\em\huge$\delta$}

\SetWidth{1.0} \Line(0,20)(-50,-20)
\SetWidth{0.5} \Vertex(-50,-20){5.66}
\put(-70,-20){\em\huge $a$}
\SetWidth{1.0} \Line(0,20)(0,-20)
\SetWidth{0.5} \Vertex(0,-20){5.66}
\put(-20,-20){\em\huge$\gamma$}
\SetWidth{1.0} \Line(0,20)(50,-20)
\SetWidth{0.5} \Vertex(50,-20){5.66}
\put(50,-38){\em\huge$d$}

\SetWidth{1.0} \Line(0,-20)(-30,-50)
\SetWidth{0.5} \Vertex(-30,-50){5.66}
\put(-45,-68){\em\huge$b$}
\SetWidth{1.0} \Line(0,-20)(30,-50)
\SetWidth{0.5} \Vertex(30,-50){5.66}
\put(25,-68){\em\huge$c$}

\SetWidth{1.0} \Line(140,20)(100,-10)
\SetWidth{0.5} \Vertex(100,-10){5.66}
\put(80,-20){\em\huge$f$}
\SetWidth{1.0} \Line(140,20)(140,-20)
\SetWidth{0.5} \Vertex(140,-20){5.66}
\put(150,-30){\em\huge$\sigma$}
\SetWidth{1.0} \Line(140,-20)(140,-60)
\SetWidth{0.5} \Vertex(140,-60){5.66}
\put(150,-70){\em\huge$g$}
\SetWidth{1.0} \Line(140,20)(180,-10)
\SetWidth{0.5} \Vertex(180,-10){5.66}
\put(190,-30){\em\huge$\tau$}
\SetWidth{1.0} \Line(180,-10)(180,-60)
\SetWidth{0.5} \Vertex(180,-60){5.66}
\put(190,-70){\em\huge$h$}
\end{picture}}}}

\title{ {Renormalisation and locality:  branched zeta values}}

\author{Pierre Clavier}
\address{Institute of Mathematics,
University of Potsdam,
D-14476 Potsdam, Germany}
\email{clavier@math.uni-potsdam.de}

\author{Li Guo}
\address{Department of Mathematics and Computer Science,
         Rutgers University,
         Newark, NJ 07102, USA}
\email{liguo@rutgers.edu}

\author{Sylvie Paycha}
\address{Institute of Mathematics,
University of Potsdam,
D-14469 Potsdam, Germany\\ On leave from the Universit\'e Clermont-Auvergne\\
Clermont-Ferrand, France}
\email{paycha@math.uni-potsdam.de}

\author{Bin Zhang}
\address{School of Mathematics, Yangtze Center of Mathematics,
Sichuan University, Chengdu, 610064, China}
\email{zhangbin@scu.edu.cn}

\date{\today}

\begin{abstract}
Multivariate renormalisation techniques are implemented in order to build, study  and then renormalise at the poles, branched zeta  functions associated with trees.  For this purpose,  we  first prove   algebraic results and develop  analytic tools,  which we then combine  to study branched zeta functions. The algebraic aspects concern universal properties for  \loc  algebraic structures, some of which had been discussed in previous work;  we  "branch/ lift" to trees operators acting on the decoration set of trees, and factorise branched maps through words  by means of   universal properties for words which we prove in the \loc setup. The analytic tools are  multivariate meromorphic  germs of pseudodifferential symbols
with linear poles which generalise the meromorphic germs of functions  with linear poles  studied in previous work. Multivariate meromorphic germs of pseudodifferential symbols  form a   \loc algebra  on which we build various \loc  maps in the framework of locality structures.  We first show  that the finite part at infinity defines  a \loc character from the latter symbol valued  meromorphic germs to the scalar valued ones. We further equip the \loc algebra of germs of pseudodifferential symbols with \loc Rota-Baxter operators  given by regularised sums and integrals. By means of the universal properties in    the framework of locality structures we can  lift Rota-Baxter operators   to trees, and use the lifted discrete sums in order to   build and study
renormalised branched zeta values  associated with trees. By construction these renormalised branched zeta values factorise on  mutually independent (for the \loc relation) trees.
\end{abstract}

\subjclass[2010]{08A55,16T99,81T15, 32A20, 	52B20}

\keywords{locality, Rota-Baxter algebra, symbols, branched zeta values}

\maketitle

\tableofcontents

\newpage
\section*{Introduction}
Trees offer a useful tool to   understand   the hierarchical structure underlying renormalisation in quantum field theory. They provide  a toy model to
analyse subdivergences arising in Feynman diagrams \cite{K} for  integrals associated with trees reflect the structure of nested
divergences. Our objects of study in this paper are their discrete counterpart,   nested sums associated with trees, which  alongside their
nested structure yield  interesting generalisations of multizeta functions,  which we call branched zeta functions. The latter
generalise  the arborified  zeta values studied  in \cite{M}  using J.~Ecalle's ``arborification" procedure,   viewed as a surjective Hopf algebra morphism
from the Hopf algebra of decorated rooted forests onto a Hopf algebra of shuffles or quasi-shuffle, which amounts to what we call the ``flatening procedure".

Multiple zeta functions $\zeta(s_1,\cdots, s_k)=\sum\limits_{n_1>\cdots >n_k>0} s_1^{-n_1}\cdots s_k^{-n_k}$ can   be interpreted  as sums   associated either
with (Chen) cones or  with rooted (ladder) trees, involving  the pseudodifferential symbols $\sigma_{s_i}, i=1,\cdots$. In \cite{GPZ1}  and
\cite{GPZ2} we generalised multiple zeta functions to sums on general convex cones leading to conical zeta functions.  Here we study their
generalisation to non planar rooted trees leading to branched zeta-functions. In contrast to cones, which offer a relative flexibility  we dealt
with  in~\cite{CGPZ1} using subdivisions, rooted trees present a certain rigidity  and enjoy a universal property which we implement at different stages of the
construction.

Our starting point is  the  Riemann zeta function $\zeta(s)$ defined as the meromorphic extension  {$s\longmapsto\cutoffsum\limits_{n=1}^\infty n^{-s}$} of the  holomorphic map
$s\longmapsto\sum\limits_{n=1}^\infty n^{-s}$ on the half-plane $\Re(s)>1$, {where $\cutoffsum\limits_{n=1}^\infty n^{-s}:=\underset{N\to\infty}{\rm fp}\sum\limits_{n=1}^Nn^{-s}$} is the cut-off regularised sum as defined in
{\cite{P1}}, see also \cite{MP}. Following the same line of thought as \cite{MP}, starting from the (polyhomogeneous)
pseudodifferential symbol $\sigma_s(x)=x^{-s}\,\chi(x)$ on $\R_{\geq 0}$, where $\chi$ is a smooth excision   function at zero, we consider
holomorphic families $z\longmapsto\sigma_{s+z}(x)$ of (polyhomogeneous) symbols on $\R_{\geq 0}$.
The  Riemann zeta function $\zeta(s)$ at a pole $s$,  corresponds to the evaluation at $z=0$
of a regularised cut-off  sum {\begin{equation}\label{eq:sigmas}\zeta(s):= {\rm ev}_{z=0}\left( \underset{N\to\infty}{\rm fp}\sum_{n=1}^N\sigma_{s+z}(n)\right), \quad {\rm
resp.}\quad  \zeta^\star(s)= {\rm ev}_{z=0}\left( \underset{N\to\infty}{\rm fp}\sum_{n=1}^{N-1}\sigma_{s+z}(n)\right).\end{equation}}
{We  reinterpret this expression by means of summation  Rota-Baxter  operators ${\mathfrak S}_\lambda$ with $\lambda\in \{-1,0, 1\}$ (resp.
${\mathfrak I}$), that  to a symbol
$\sigma$ on $\R_{\geq 0}$ assign another symbol
${\mathfrak S}_\lambda(\sigma)$ (resp.
${\mathfrak I}(\sigma)$).
For $\lambda=\pm 1$, the operator ${\mathfrak S}_\lambda$} coincides on any  positive integer $n$  with the discrete summation map
$ n\longmapsto  S(\sigma)(n):=\sum\limits_{k=1}^{n-1}\sigma(k)$ or  $n\longmapsto S(\sigma)(n):=\sum\limits_{k=1}^n\sigma(k)$ according to whether
$\lambda=-1$ or $\lambda=1$. {For $\lambda=0$,  ${\mathfrak S}_0={\mathfrak I}(\sigma)$  is the integral map
$x\longmapsto {\mathfrak I}(\sigma)(x):=\int_0^x\sigma(y)\, dy$ defined for $x\geq 0$.} { For fixed $z\in \C$, we take the finite part at infinity of the map $N\longmapsto  {\mathfrak S}_{\lambda}\left(\sigma_{s+z}\right) (N)$  to build the regularised cut-off sum (compare with (\ref{eq:sigmas}))
\begin{equation} \label{eq:Sigmas}\zeta(s)= {\rm ev}_{z=0}\left(\underset{N\to\infty}{\rm fp}{\mathfrak S}_{-1}\left(\sigma_{s+z}\right) (N)\right), \quad {\rm
resp.}\quad  \zeta^\star(s)={\rm ev}_{z=0}\left(\underset{N\to\infty}{\rm fp}{\mathfrak S}_{1}\left(\sigma_{s+z}\right) (N)\right).\end{equation}   }

This serves as a starting point to build higher zeta functions associated with trees by means of an algebra $\Omega$ (Definition~\ref{defn:Omega}) of (multivariate meromorphic germs  of) symbols on $\R_{\geq 0}$ used to decorate the trees and thereby regularise the multiple integrals and  multiple sums involved in the construction. We adopt a multivariate renormalisation approach already used in  the {toy model of \cite{CGPZ2}}  and  work in the \loc setup discussed in \cite{CGPZ1}. In view of multivariate regularisation,  we consider  the algebra ${\mathcal F}_\Omega$ of	 rooted trees decorated with mutivariate meromorphic  germs of
pseudodifferential symbols  in $\Omega$ and  view  it as an $\Omega$-operated algebra.   The algebra ${\mathcal F}_\Omega$ of rooted trees  is equipped with a
\loc algebra structure involving a partial product on  independent germs,  and an independence relation $\top _{{\mathcal F}_\Omega}$ inherited from the independence relation $\top _\Omega$ on the decorating \loc set $\Omega$.

In the algebraic Part 1 of the present paper, we use a \loc version of   universal properties   {proved in \cite{CGPZ2},} of the  algebra ${\mathcal F}_\Omega$ of	 rooted trees~\cite{F} decorated
by a \loc set $\Omega$. { From \cite{CGPZ2}} we borrow  Corollary \ref{coro:existencemapbranching} which yields a lift of any
\loc map $\phi: \Omega \to \Omega$  on the decoration \loc algebra $\Omega$ to a \loc morphism $\widehat\phi:{\mathcal F}_\Omega\to \Omega $  of
$\Omega$-operated \loc algebras. The first part is mainly dedicated to relating the constructions { on trees of  \cite{CGPZ2}} to new constructions on
\loc algebras of words. It is written in the \loc set category; all the results nevertheless hold in the ordinary set category, which can be recovered
in viewing a set $X$ as a \loc set $(X,\top)$ with the trivial \loc structure $\top=X\times X$. In particular, we
\begin{itemize}
{\item establish a universal property for a \loc Rota-Baxter operator on (proper) words (Theorem \ref{thm:shuffle_free}).}
\item establish a correspondence  (Theorem \ref{thm:stuffle_charac}) between the $\lambda$-Rota-Baxter property of maps  ${\mathfrak S}_\lambda$ on the decoration algebra and
the multiplicative property of theirs lifts to  (decorated) words for a $\lambda$-shuffle product;
\item  introduce a ``flatening" map   in Definition \ref{defn:flatening_map}, which ``flatens" decorated trees to words  corresponding to an ``arborification" \`a la \'Ecalle  discussed in \cite{M};
\item   use the ``flatening" map to relate (Theorem \ref{thm:FW})  the lift to trees of Rota-Baxter operators   with their lift to words.
\end{itemize}

Part 2 is dedicated to the study of the decoration algebra $\Omega$ (Definition \ref{defn:Omega}) of multivariate meromorphic  germs of (polyhomogeneous) pseudodifferential symbols
with linear poles, which contains the algebra ${\mathcal M}$ of  multivariate meromorphic  germs of functions with linear poles introduced in
 \cite{GPZ3}. We equip  $\Omega$  with  an independence relation inherited from an ambient inner product,   similar to the one  defined  on ${\mathcal M}$  in \cite{CGPZ1} and show (Proposition \ref{prop:orderprod})  that it is a \loc algebra for the pointwise product on symbols. We  further prove (Proposition \ref{prop:fpOmega}) that the  renormalised evaluation at infinity
$\underset{+ \infty}{\rm fp}:\Omega\longrightarrow {\mathcal M}$  is a \loc morphism for this independence
relation. Finally, we build ({Theorem \ref{thm:RB_Omega}})   for $\lambda\in \{\pm 1, 0\}$   \loc  $\lambda$-Rota-Baxter operators  on
$\Omega$  which  generalise to the multivariate setup (keeping the same notations), the  summation maps
$ {{\mathfrak S}}_\lambda, \lambda\in \{-1, 1\}$ as well as the integration map ${\mathcal I}$
  on univariate meromorphic germs of symbols mentioned above.

In Part 3, we combine the results of Parts 1 and 2 to build and study branched zeta-functions as multivariate meromorphic functions with linear poles
and their renormalised counterparts. To carry out this programme, we implement  Corollary \ref{coro:existencemapbranching} to build  the
corresponding branched maps which yield \loc morphisms $\widehat{{\mathfrak S}_\lambda} : \mathcal{F}_\Omega \to \Omega$.
  Implementing  the finite part $\underset{+\infty}{\rm fp}$ on $\Omega$  then gives rise to \loc morphisms
  $\underset{+\infty}{\rm fp}\circ \widehat{{\mathfrak S}_\lambda}: \mathcal{F}_\Omega \to  {\mathcal M}$  on the algebra  of trees
  properly  decorated by $\Omega$ to multivariate meromorphic germs of functions. This gives rise to     discrete summation \loc morphisms $\zeta^{{\rm reg},\pm 1}:\mathcal{F}_\Omega \to  {\mathcal M}$ called regularised branched functions~(Proposition \ref{prop:zeta}).  The \loc morphism property ensures the multiplicativity of the regularised branched zeta functions on mutually independent (for the  \loc relation) pairs  of decorated trees (Theorem \ref{thm:Zlocmorph}). In order to study the pole
structure of the regularised branched functions   and investigate the rationality of the renormalised branched zeta values,   we  use the ``flatening"  to express branched zeta-functions
as rational linear combinations of multiple zeta functions.
As a consequence of the linear pole structure of multiple zeta functions,  the poles of any branched zeta function are also linear. Assuming rationality of the
 inner product underlying the multivariate renormalisation procedure, we show that  the  renormalised values at their poles are rational (Theorem
 \ref{thm:rational}).

To conclude, we were able to study the poles of   branched zeta functions thanks to   universal properties of localised Rota-Baxter algebras and  then renormalise the resulting multivariate meromorphic germs by means of a multivariate minimal subtraction scheme we had already implemented in \cite{CGPZ1} for exponential sums on convex cones leading to conical zeta functions. To implement the  multivariate subtraction scheme, we introduced a \loc algebra of multivariate meromorphic germs of polyhomogeneous symbols. The  relative rigidity of tree structures when compared to the
relative flexibility of cone structures, enabled us  to  ``lift" the ordinary discrete summation operator $\sigma\longmapsto \sum_{k=1}^n \sigma(k)$ on polyhomogeneous symbols  to a branched discrete summation operator  on this algebra or meromorphic symbols. This branched summation operator is shown to be a \loc morphism, which ensures multiplicativity of the resulting renormalised branched zeta values on disjoint trees.

So it is the very special tree structure  reflected in the pole structure of the  discrete branched sums of multivariate meromorphic symbols cut-off at infinite (via the finite part map) that enabled us a good control of the poles and hence to renormalise appropriately. The next stage we hope to carry out in forthcoming work is to {provide a precise description of  the tree structure of the  poles } and to  identify a larger class of ``branched multivariate meromorphic germs" that hosts  such cut-off discrete branched sums of multivariate meromorphic symbols to which we can extend similar multivariate minimal subtraction schemes.

\newpage
\part{Algebraic aspects}

An algebraic formulation of the locality principle was provided in~\cite{CGPZ1} in the context of the algebraic approach to perturbative quantum field theory initiated by Connes and Kreimer~\cite{CK}. It was { shown in \cite{CGPZ2}}  that the space spanned by decorated rooted forests equipped with  an appropriate  independence relation inherited from the one on the decorating set,  is the initial object in the category of \loc operated algebras.  We establish  for words similar universal properties,  which we then use to lift  to words
$\lambda$-Rota-Baxter  maps on the decoration algebra.  We further  show that a ``flatening" map -- corresponding to  an ``arborification" procedure \`a la \'Ecalle described in \cite{M}-- which ``flatens" decorated trees to words, defines a \loc map, which we   use to relate their  branched lifts to trees      with their lift to words.

\section{\Loc operated sets and algebras}
\label{sec:loc}

We recall the concepts of \loc structures from \cite{CGPZ1} and \loc operated {structures from \cite{CGPZ2}},  \loc  operated semigroups
and monoids, and \loc operated algebras, successively.

\subsection{\Loc sets, magmas and algebras}

We first recall the concept of a \loc set introduced in \cite{CGPZ1}. {A} {\bf \loc set} is a  couple $(X, \top)$ where $X$ is a set and $ \top\subseteq X\times X$ is a symmetric binary  relation on $X$. For $x_1, x_2\in X$,
denote $x_1\top x_2$ if $(x_1,x_2)\in \top$. We also use the alternative notations $X\times_\top X$ and $X^{_\top 2}$
for $\top$. In general,
for any subset $U\subset X$, let
 			\begin{equation*}
 			U^\top:=\{x\in X\,|\, (x,U)\subseteq \top \}  			 \end{equation*}
denote the {\bf  {polar} subset} of $U$.
For integers $k\geq 2$, {we set}
$$ X^{_\top k}:=X\times_\top \cdots _\top X:= \{(x_1,\cdots,x_k)\in X^k\,|\, x_i \top x_j \text{ for all } 1\leq i\neq j\leq k \}.$$

We call two subsets $A$ and $B$   of a \loc subset $(X,\top )$  {\bf independent}, if
$   A\times B\subset \top.$
  This induces an independence relation on the power set $\mathcal{P}(X)$, which we denote by the same symbol $\top $.
  Then $(\mathcal{P}(X),\top ) $ is a \loc set with $\mathcal{P}(X  )^{\top }=\mathcal{P}(X ^{\top })$.

Recall that two maps  $\Phi,\Psi:\left( X,\top _X\right)\to \left(Y, \top_Y\right)$ are {\bf independent}  and we write $\Phi\top \Psi$ if
$(\Phi\times \Psi)(\top _X) \subseteq \top _Y$, that is,
${x _1\top _X x _2}$ implies $\Phi(x_1)\top _Y\Psi\left(  x_2\right)$ for $x_1,x_2\in X$.
A map  $\Phi:(X,\top _X) \longrightarrow (Y,\top _Y )$   is called a {\bf \loc map} if $\Phi\top\Phi$. Given two \loc sets $(X,\top_X)$ and $(Y,\top_Y)$, let  $\calm or_\top(X,Y)$ denote the set of
			\loc maps from $X$ to $Y$.

We also recall the concepts of \loc monoids and \loc algebras.
The following definition  is  a special instance of a ``partial magma", which is to a magma what a partial algebra is to an algebra~\cite{Gr},   namely a set equipped with a
partial product defined only for certain pairs {with arguments in} the set. See e.g.~\cite{EnM}.
\begin{rk} The condition for a \loc magma is more restrictive than that of a partial magma in that the former requires that the pairs for
which  the partial product is defined stem  from a symmetric relation.
\end{rk}

\begin{defn}\label{defn:magma}
\begin{enumerate}
\item
A  {\bf partial magma}
is a \loc set $(G,\top)$ together with a product law defined on $\top$:
$$ m_G: G\times_\top G\longrightarrow  G
$$
For notational convenience, we usually abbreviate $m_G(x,y)$ by $x\cdot y$ or simply $xy$.
\item
A {\bf sub-partial magma} of a partial magma $(G,\top,m_G)$ is a partial magma $(G',\top',m_{G'})$ with $G'\subseteq G$, $\top'=(G'\times G')\cap \top$ and $m_{G'}=m_G|_{\top'}$, that is, for $x, y\in G'$ and $(x, y)\in \top$, $m_G(x,y)$ is in $G'$.
\item
A partial magma is {\bf commutative} if $m_G(x,y)=m_G(y,x)$ for $(x,y)\in \top$, noting that since $\top$ is symmetric, if one side of the equation is defined, then so is the other.
\item
A {\bf partial semigroup} is {a partial} magma in which the {\bf associativity}
\begin{equation}
(x\cdot y) \cdot z = x\cdot (y\cdot z)
\label{eq:passo}
\end{equation}
holds whenever the expressions on both sides make sense, more precisely if the pairs $(x,y), (x\cdot y,z), (y,z)$ and $(x,y\cdot z)$ are all in $\top$.
\end{enumerate}
 \end{defn}
\begin{ex} \label{ex:semigroups} For a  given subset $A\subset G$ in an arbitrary magma $(G, \star)$,    the relation
$$ \alpha\top_A \beta\Longleftrightarrow \alpha \star \beta\not\in A$$ defines a partial magma.
\end{ex}

We recall the notion of \loc semi-group  introduced in \cite{CGPZ1}.
\begin{defn} \label{defn:lsg}
\begin{enumerate}
\item
A {\bf \loc magma} is a  partial magma $(G,\top,m_G)$  whose product law  is compatible with the \loc relation on $G$ in the following sense:
\begin{equation}\tforall U\subseteq G, \quad  m_G((U^\top\times U^\top)\cap\top)\subset U^\top.
\label{eq:semigrouploc}
\end{equation}
\item
A {\bf \loc semigroup} is a \loc magma whose product law is associative in the following sense:
\begin{equation}
(x\cdot y) \cdot z = x\cdot (y\cdot z) \text{ for all }(x,y,z)\in G\times_\top G\times_\top G. 	
\label{eq:asso}
\end{equation}
Note that, because of the condition \eqref{eq:semigrouploc}, both sides of Eq.~(\mref{eq:asso}) are well-defined for any triple in the given subset.
\label{it:lsg}
\item
A  \loc semigroup is {\bf commutative} if $m_G(x,y)=m_G(y,x)$ for $(x,y)\in \top$.
\item
A  {\bf \loc   monoid} is a \loc   semigroup $(G,\top, m_G)$ together with a {\bf unit element} $1_G\in G$ given by the defining property
\[\{1_G\}^\top=G\quad \text{ and }\quad m_G(x, 1_G)= m_G(1_G,x)=x\quad \tforall  x\in G.\]
We denote the \loc  monoid by $(G,\top,m_G, 1_G)$.
\label{defn:partial monoid}
\end{enumerate}
\end{defn}

\begin{coex}\label{ex:non_assoc_semigroups}
The set $\Q$ equipped with the relation
$$ x\top y\Longleftrightarrow x+y\not\in \Z $$
 is a partial semigroup for the addition $+: \Q\times \Q\to \Q$, but it is {neither} a \loc semi-group {nor a \loc magma}.  Indeed, the \loc condition for semi-groups does not hold: indeed, for $U=\{1/3\}$ we have $(1/3, 1/3)\in ((U^\top\times U^\top)\cap \top)$
 but $ 1/3+1/3=2/3\not \in U^\top$.
 \end{coex}

Here is a related example which will be useful for later purposes.
\begin{coex}\label{ex:non_assoc_semigroups2}
Let  us consider a subset $A $ of $\C$ such that $A+\Z\subset A$. We equip  the power set $\calp(\C)$ of $\C$  with the following relation: for $U, V\in \calp(\C)$,   $U\top_A V\Longleftrightarrow U+V\subset \C\setminus A$. In particular  $U\top_A V\Longrightarrow U+V-\Z_{\ge 0}\subset \C\setminus A$. The locality set $\left(\calp(\C), \top_A \right)$ equipped with the map \begin{eqnarray*} \top_A\subset \calp(\C)\times \calp(\C)&\longrightarrow& \calp(\C)\\
(U, V)&\longmapsto & U+V-\Z_{\ge 0}
\end{eqnarray*}
is a partial semigroup, but not a \loc semi-group.
\end{coex}
\begin{defn}
\begin{enumerate}
\item
An {\bf \loc vector space} is a vector space $V$ equipped with a \loc relation $\top$ which is compatible with the linear structure on $V$ in the
sense that, for any  subset $X$ of $V$, $X^\top$ is a linear subspace of $V$.
\item
Let $V$ and $W$ be vector spaces and let $\top:=V\times_\top W \subseteq V\times W$. A map $f: V\times_\top W \to U$ to a vector space $U$ is called
a {\bf \loc bilinear} map if
$$f(v_1+v_2,w_1)=f(v_1,w_1)+f(v_2,w_1), \quad f(v_1,w_1+w_2)=f(v_1,w_1)+f(v_1,w_2),$$
$$f(kv_1,w_1)=kf(v_1,w_1), \quad
f(v_1,kw_1)=kf(v_1,w_1)$$
for all $v_1,v_2\in V$, $w_1,w_2\in W$ and $k\in \bfk$ such that all the pairs arising in the above expressions are in $V\times_\top W$.
\item A  (not necessarily unitary nor associative) {\bf  \loc  algebra} over $ \bfk$ is a \loc vector space $(A,\top)$ over $ \bfk$ together with a \loc bilinear map
	$$ m_A: A\times_\top A \to A$$ such that
	$(A,\top, m_A)$ is a \loc magma.
\item A  (not necessarily unitary) associative algebra over $ \bfk$ is a \loc vector space $(A,\top)$ over $ \bfk$ together with a \loc bilinear map
	$$ m_A: A\times_\top A \to A$$ such that
	$(A,\top, m_A)$ is a \loc semi-group.
\item A {\bf \loc (unitary and associative) algebra} is a  \loc algebra $(A,\top, m_A)$ together with a {\bf unit} $1_A:  \bfk\to A$ in the sense that
	$(A,\top, m_A, 1_A)$ is a \loc monoid. We shall omit explicitly  mentioning the unit $1_A$ and the product $m_A$ unless this generates an ambiguity.
\end{enumerate}
\end{defn}
Combining  the \loc vector space  and \loc  magma structure, we build \loc algebras and related structures.
\begin{prop}
 Let $(A,\top, m_A)$ (resp. $(A,\top, m_A)$, $(A,\top, m_A, 1_A)$) be a \loc magma (resp. semigroup, algebra). The independence relation $\top$,
 resp. the product $m_A$   on $A$ extends by linearity to  an  independence
 relation  $\widetilde\top$ on $ \bfk\,A$, resp. a bilinear form  $\widetilde m_A:\tilde\top\mapsto  \bfk\,A$.
 Then $( \bfk\,A,\widetilde\top,\tilde m_A)$ (resp. $( \bfk\,A,\widetilde\top,\tilde m_A)$, $( \bfk\,A,\widetilde\top,\tilde m_A,1_A)$) is a nonunitary nonassociative \loc algebra
 (resp. a nonunitary \loc algebra, a \loc algebra).
\end{prop}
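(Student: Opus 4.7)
The plan is to define $\widetilde\top$ and $\widetilde m_A$ concretely through the canonical basis of $\bfk A$, and then verify the three levels of structure (magma, semigroup, unital algebra) in parallel.

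Writing $\mathrm{supp}(u)\subseteq A$ for the finite set of basis elements appearing with nonzero coefficient in $u\in \bfk A$, I would set
\[
u\,\widetilde\top\, v\quad\Longleftrightarrow\quad \mathrm{supp}(u)\times \mathrm{supp}(v)\subseteq \top.
\]
Symmetry of $\widetilde\top$ is inherited from that of $\top$. For any subset $S\subseteq \bfk A$ one then checks directly that
\[
S^{\widetilde\top}=\{u\in \bfk A \mid \mathrm{supp}(u)\subseteq T_S\},\qquad T_S:=\bigcap_{s\in S}\mathrm{supp}(s)^{\top}\subseteq A,
\]
and since the set of elements of $\bfk A$ whose support lies in a prescribed subset of $A$ is automatically a linear subspace, this makes $(\bfk A,\widetilde\top)$ a \loc vector space.

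Next I would define $\widetilde m_A$ on $\widetilde\top$ by the only formula compatible with bilinearity: writing $u=\sum_{i}c_i a_i$ and $v=\sum_{j}d_j b_j$ in the basis $A$, put
\[
\widetilde m_A(u,v):=\sum_{i,j}c_i d_j\, m_A(a_i,b_j).
\]
This is unambiguous because each of $u,v$ has a unique decomposition in the basis $A$, and the hypothesis $u\,\widetilde\top\, v$ guarantees $(a_i,b_j)\in\top$ so that every $m_A(a_i,b_j)$ is defined. The \loc bilinearity clauses are immediate from this formula once one observes that the support condition $\mathrm{supp}(\cdot)\subseteq T_S$ persists under addition within a fixed polar.

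The substantive point is the \loc magma compatibility for $(\bfk A,\widetilde\top,\widetilde m_A)$. Given $U\subseteq \bfk A$ and $u,v\in U^{\widetilde\top}$ with $u\,\widetilde\top\, v$, I would set $V:=\bigcup_{w\in U}\mathrm{supp}(w)\subseteq A$; the description of polars above gives $\mathrm{supp}(u)\cup\mathrm{supp}(v)\subseteq V^{\top}$. For each pair $(a_i,b_j)\in \mathrm{supp}(u)\times \mathrm{supp}(v)$ we then have $(a_i,b_j)\in (V^{\top}\times V^{\top})\cap\top$, and the \loc magma axiom \eqref{eq:semigrouploc} applied inside $A$ yields $m_A(a_i,b_j)\in V^{\top}$, i.e.\ $m_A(a_i,b_j)\,\widetilde\top\, w$ for every $w\in U$. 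Since $U^{\widetilde\top}$ is a linear subspace, the linear combination $\widetilde m_A(u,v)$ lies in $U^{\widetilde\top}$. Associativity \eqref{eq:asso} on triples of pairwise $\widetilde\top$-independent elements of $\bfk A$ then lifts from $A$ by expanding in the basis, and when $A$ admits a unit $1_A$, the equality $\{1_A\}^{\top}=A$ implies $\{1_A\}^{\widetilde\top}=\bfk A$ and $\widetilde m_A(1_A,u)=\widetilde m_A(u,1_A)=u$ by inspection of the defining formula. I expect the support analysis in the magma compatibility step to be the main obstacle; once it is in place the three structural statements (nonunitary nonassociative, nonunitary associative, unital) follow in parallel.
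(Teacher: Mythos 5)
Your proof is correct: the paper states this proposition without proof, and your support-based verification (defining $\widetilde\top$ via $\mathrm{supp}(u)\times\mathrm{supp}(v)\subseteq\top$, identifying $S^{\widetilde\top}$ as the span of $T_S=\bigl(\bigcup_{s\in S}\mathrm{supp}(s)\bigr)^{\top}$, and pushing the magma compatibility \eqref{eq:semigrouploc} through the basis expansion) is exactly the routine linear-extension argument the authors leave to the reader. Nothing is missing; the key step you correctly isolate is that $U^{\widetilde\top}$ is a linear subspace containing each $m_A(a_i,b_j)$, so the bilinear combination stays inside it.
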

\begin{rk} We shall often drop the symbol $\widetilde {}$ over $\top$ denoting the bilinearly extended \loc relation by the same symbol $\top$.
\end{rk}

We next construct free objects in related categories. Let $\calw(\Omega)$ denote the set of words, called {\bf $\Omega$-words}, including the empty word $1$, from the alphabet set $\Omega$. So
\begin{equation}
\calw(\Omega):= \{\omega_1\cdots \omega_k\,|\, \omega_i\in \Omega, 1\leq i\leq k, k\geq 1\}\cup \{1\}.
\mlabel{eq:oset}
\end{equation}
Also let $\calw(\Omega)^*$ denote the set of non-empty $\Omega$-words. So $\calw(\Omega)=\calw(\Omega)^*\cup \{1\}$.

The following results are well-known.

\begin{prop}
Let $\Omega$ be a set. The set $\calw(\Omega)^*$ (resp. $\calw(\Omega)$) is the free semigroup (resp. monoid) on $\Omega$.
\mlabel{pp:freesg}
\end{prop}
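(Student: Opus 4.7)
The plan is to directly verify the universal properties defining the free monoid and free semigroup. First I would equip $\calw(\Omega)$ with the concatenation product: for $u=\omega_1\cdots\omega_k$ and $v=\omega'_1\cdots\omega'_\ell$, set $u\cdot v := \omega_1\cdots\omega_k\omega'_1\cdots\omega'_\ell$, and $1\cdot u = u\cdot 1 = u$. Associativity is immediate from how words are formed, and the empty word $1$ is manifestly a two-sided unit, so $(\calw(\Omega),\cdot,1)$ is a monoid. Restricting to non-empty words, the product remains associative and takes values in $\calw(\Omega)^*$, giving a semigroup structure. The natural inclusion $i:\Omega\hookrightarrow \calw(\Omega)$ sending $\omega$ to the one-letter word will serve as the unit of the universal property.

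Next I would verify the universal property of the monoid case. Given any monoid $(M,\star,1_M)$ and any map $f:\Omega\to M$, define $\bar f:\calw(\Omega)\to M$ by $\bar f(1):=1_M$ and
\[
\bar f(\omega_1\cdots\omega_k) := f(\omega_1)\star f(\omega_2)\star\cdots\star f(\omega_k),
\]
which is unambiguous by associativity of $\star$. Checking $\bar f(u\cdot v)=\bar f(u)\star \bar f(v)$ reduces to the observation that concatenation of words is sent to the product of their images term by term; the unit is preserved by construction. Thus $\bar f$ is a monoid homomorphism with $\bar f\circ i=f$.

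For uniqueness, suppose $g:\calw(\Omega)\to M$ is any monoid homomorphism satisfying $g\circ i=f$. Then $g(1)=1_M$ and, by induction on the length $k$ of a word, $g(\omega_1\cdots\omega_k)=g(\omega_1)\star g(\omega_2\cdots\omega_k)=f(\omega_1)\star\bar f(\omega_2\cdots\omega_k)=\bar f(\omega_1\cdots\omega_k)$. Hence $g=\bar f$. The semigroup case for $\calw(\Omega)^*$ is obtained by exactly the same argument, discarding the empty word and the unit axioms; the induction still closes because concatenation of two non-empty words is non-empty.

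There is no genuine obstacle in this proof; everything is a direct unpacking of the definitions, which is why the authors classify it as well-known. The point of stating it explicitly is to set up notation and establish a baseline for the subsequent, non-trivial \loc analogue (Theorem \ref{thm:shuffle_free}), where the partial nature of the product forces one to restrict attention to independent tuples and to keep careful track of which concatenations are admissible.
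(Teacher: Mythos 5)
Your proof is correct and is precisely the standard verification of the universal property of the free monoid/semigroup on a set; the paper itself offers no proof, stating the result as well-known. Nothing to add.
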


We next extend the construction to the \loc context.

\begin{defn}\label{defn:properlydecoratedword}
 Let $(\Omega,\top _\Omega)$ be a \loc set. An $\Omega$-word $w$ is called an {\bf $(\Omega,\top _\Omega)$-proper word}
 if any
 pair of letters in $w$ are independent for $\top_ {\Omega}$. Let   ${\lwords_{\Omega,\top_{\Omega}}}$ denote the set of $(\Omega,\top _\Omega)$-proper words.
We denote the linear span of $\lwords_{\Omega,\top_{\Omega}}$ by $\bfk\,\calw _{\Omega,\top_{\Omega}}$. Similar notions can be defined for the set $\calw(\Omega)^*$ of non-empty $\Omega$-words.
\end{defn}

The set $\lwords_{\Omega}$ is equipped with the independence relation $\top_ {\lwords _\Omega}$ defined for  any pair of words
$ w=\omega_1\cdots\omega_n,w'=\omega'_1\cdots\omega'_{n'}\in \lwords_{\Omega}$ by
\begin{equation*}
 w\top_ {\lwords_ {\Omega}}w' \Longleftrightarrow \{\omega_1,\cdots,\omega_n\} \top _\Omega \{ \omega'_1,\cdots,\omega'_{n'}\}.
\end{equation*}
This {independence} relation restricts to an {independence} relation $\top _{\calw_{\Omega , \top _\Omega}}$ on $\calw _{\Omega , \top _\Omega}$, and extends to an {independence} relation
$\top_ {\bfk\,\calw _ {\Omega}}$ on the linear span $\bfk\,\calw _\Omega $ of $\lwords _\Omega$, the restriction of $\top_ {\bfk\,\calw _ {\Omega}}$ to $\bfk\,\calw _{\Omega,\top_{\Omega}}$ will be denoted by
$\top _{\bfk\,\calw _{\Omega, \top _\Omega}}$. Again similar notions can be defined for the set $\calw(\Omega)^*$ of non-empty $\Omega$-words.

Then Proposition~\ref{pp:freesg} has the following locality variate which can be proved by the same argument.

\begin{prop}
Let $(\Omega,\top)$ be a locality set and let $\calw_{\Omega,\top_\Omega}$ be the locality set of $(\Omega,\top_\Omega)$-proper words. With the concatenation product on $\calw_{\Omega}\times \calw_{\Omega}$ restricted to $\calw_\Omega \times_\top \calw_\Omega$, $\calw_{\Omega,\top_\Omega}$ is a locality monoid. Further it is the free locality monoid on $(\Omega,\top)$, characterised by the universal property: for any locality monoid $(U,\top_U)$ and locality map $f: (\Omega,\top)\to (U,\top_U)$, there is unique morphism $\free{f}: (\calw_{\Omega,\top_\Omega},{\top_{\calw_\Omega}}) \to (U,\top_U)$ such that $\free{f} i = f$ where $i:(\Omega,\top) \to {(\calw_{\Omega,\top_\Omega},\top_{\calw_\Omega})}$ is the natural inclusion.

Similarly, with the restriction of the concatenation product, $\calw^*_{\Omega,\top_\Omega}$ is the free locality semigroup on $(\Omega,\top)$. Furthermore, the linear spans $\bfk\,\calw_{\Omega}$ and $\bfk\,\calw_{\Omega}^*$ are the free locality unitary and nonunitary $\bfk$-algebras on $(\Omega,\top)$.
\mlabel{pp:lfreesg}
\end{prop}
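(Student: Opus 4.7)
The plan is to adapt, mutatis mutandis, the classical proof that $\calw(\Omega)$ is the free monoid on $\Omega$, incorporating the locality data at each step. The argument splits into three ingredients: verifying that $(\calw_{\Omega,\top_\Omega},\top_{\calw_\Omega})$ is a locality monoid under (the restriction of) concatenation, establishing the universal property, and deducing the semigroup and algebra variants by formal modifications.

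For the first step, given proper words $w=\omega_1\cdots\omega_m$ and $w'=\omega'_1\cdots\omega'_n$ with $w\top_{\calw_\Omega} w'$, each word is letter-wise pairwise $\top_\Omega$-independent, and the relation $\top_{\calw_\Omega}$ forces every letter of $w$ to be independent from every letter of $w'$; hence $ww'$ is again a proper word. The locality monoid compatibility \eqref{eq:semigrouploc} follows from the same observation: if $U\subseteq \calw_{\Omega,\top_\Omega}$ and $w,w'\in U^\top$ with $w\top w'$, then every $u\in U$ is letterwise independent from both $w$ and $w'$, hence from $ww'$. Associativity and the existence of a unit (the empty word $1$, which is universally independent by vacuity) are inherited from the classical monoid $\calw(\Omega)$.

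For the universal property, given a locality monoid $(U,\top_U)$ and a locality map $f\colon(\Omega,\top_\Omega)\to(U,\top_U)$, I would set $\free{f}(1):=1_U$ and $\free{f}(\omega_1\cdots\omega_n):=f(\omega_1)\cdots f(\omega_n)$. Well-definedness of the right-hand side inside the locality monoid $U$ is the crux: by the locality of $f$ the elements $f(\omega_1),\dots,f(\omega_n)$ are pairwise $\top_U$-independent, and an induction on $k$ using \eqref{eq:semigrouploc} shows that the partial product $f(\omega_1)\cdots f(\omega_k)$ lies in $\{f(\omega_{k+1}),\dots,f(\omega_n)\}^\top$, so the next multiplication is admissible. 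A parallel induction shows that $w\top_{\calw_\Omega} w'$ implies both $\free{f}(w)\top_U \free{f}(w')$ and $\free{f}(ww')=\free{f}(w)\free{f}(w')$, so $\free{f}$ is a locality monoid morphism extending $f$. Uniqueness follows because any such morphism must coincide with $f$ on single letters and be determined on longer words by the monoid morphism property.

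The semigroup variant is obtained by restricting the above to $\calw^*_{\Omega,\top_\Omega}$ and discarding any reference to the unit. The algebra versions then follow by linearisation: concatenation extends to a locality bilinear map on $\bfk\,\calw_{\Omega,\top_\Omega}$ (respectively $\bfk\,\calw^*_{\Omega,\top_\Omega}$), and any locality map into a locality algebra extends uniquely first to a locality monoid (resp.\ semigroup) morphism on words by the previous step, then to an algebra morphism by linearity. The principal technical subtlety throughout is precisely the propagation of locality through iterated products in $U$, where the compatibility axiom \eqref{eq:semigrouploc} does the essential work; without it the inductively formed partial products would have no guarantee of remaining in the polar subset required to admit the next multiplication.
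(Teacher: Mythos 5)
Your proposal is correct and follows exactly the route the paper intends: the paper gives no written proof of Proposition~\mref{pp:lfreesg}, stating only that it ``can be proved by the same argument'' as the classical freeness of $\calw(\Omega)$, and your write-up supplies precisely that adaptation, with the key locality-specific point (propagating admissibility of the iterated partial products in $U$ via the compatibility axiom \eqref{eq:semigrouploc}) correctly identified and handled.
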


\begin{coro}\label{cor:phisharpword} Let  $(\Omega_i,\top_i), i=1,2$ be two \loc sets. A  \loc map $\phi: \Omega_1\longrightarrow \Omega_2$ uniquely lifts to a \loc monoid morphism
$$
\phi_\calw^\sharp:\calw_{\Omega_1}\longrightarrow \calw_{\Omega_2}$$
and we have
 \[\phi_\calw^\sharp (\omega_1\, w_1){=\phi(\omega_1)\,\phi_\calw^\sharp (w_1)}\quad \text{for all } \omega_1\in \Omega_1, w_1\in \calw_{\Omega_1}.\]
\end{coro}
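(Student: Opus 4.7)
The plan is to derive this as a direct consequence of the universal property of the free locality monoid established in Proposition~\ref{pp:lfreesg}. Let $i_2 : (\Omega_2,\top_2) \to (\calw_{\Omega_2,\top_2},\top_{\calw_{\Omega_2}})$ denote the canonical inclusion of letters as length-one words. Since $i_2$ is trivially a locality map and $\phi:(\Omega_1,\top_1)\to(\Omega_2,\top_2)$ is a locality map by hypothesis, the composite
\[
i_2\circ\phi : (\Omega_1,\top_1)\longrightarrow (\calw_{\Omega_2,\top_2},\top_{\calw_{\Omega_2}})
\]
is a locality map from $\Omega_1$ into the target locality monoid. The universal property of $\calw_{\Omega_1,\top_1}$ as the free locality monoid on $(\Omega_1,\top_1)$ then yields a unique locality monoid morphism
\[
\phi_\calw^\sharp : (\calw_{\Omega_1,\top_1},\top_{\calw_{\Omega_1}}) \longrightarrow (\calw_{\Omega_2,\top_2},\top_{\calw_{\Omega_2}})
\]
whose restriction to $\Omega_1$ (viewed inside $\calw_{\Omega_1,\top_1}$ via the canonical inclusion $i_1$) coincides with $i_2\circ\phi$. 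This establishes both existence and uniqueness.

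For the explicit formula, let $\omega_1\in\Omega_1$ and $w_1\in\calw_{\Omega_1,\top_1}$ be such that $\omega_1 w_1$ is a proper word, i.e.\ $\omega_1$ is independent from each letter of $w_1$. Then $(\omega_1,w_1)$ lies in the locality relation $\top_{\calw_{\Omega_1}}$, so the concatenation $\omega_1\cdot w_1$ is defined in the locality monoid $\calw_{\Omega_1,\top_1}$ and equals $\omega_1 w_1$. Applying the monoid morphism property of $\phi_\calw^\sharp$ together with $\phi_\calw^\sharp(\omega_1) = \phi(\omega_1)$ then yields
\[
\phi_\calw^\sharp(\omega_1 w_1) \;=\; \phi_\calw^\sharp(\omega_1)\cdot \phi_\calw^\sharp(w_1) \;=\; \phi(\omega_1)\,\phi_\calw^\sharp(w_1),
\]
as required.

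The only point requiring a modicum of care is verifying that $i_2\circ \phi$ really lands in the locality monoid of \emph{proper} words, but this is automatic: the image of a single letter under $i_2\circ\phi$ is a length-one word, which is trivially proper. The universal property of Proposition~\ref{pp:lfreesg} then does all the work of propagating the locality-compatibility to arbitrary words, so there is no genuine obstacle — the corollary is essentially a restatement of the universal property applied to the specific locality map $i_2\circ\phi$.
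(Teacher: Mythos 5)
Your proof is correct, but it routes through a different universal property than the paper does. The paper invokes Proposition~\ref{pp:freelos}, the universal property of $\calw^*_{\Omega_1,\top_1}$ as the \emph{free locality operated set} on $(\Omega_1,\top_1)$, taking $U=\calw_{\Omega_2}$ equipped with the $\Omega_1$-action "prepend $\phi(\omega)$" and $f=i_{\Omega_2}\circ\phi$; the displayed recursion $\phi_\calw^\sharp(\omega_1 w_1)=\phi(\omega_1)\,\phi_\calw^\sharp(w_1)$ is then exactly the operated-set morphism property, and the monoid-morphism claim is read off from the resulting formula $\phi_\calw^\sharp(\omega_1\cdots\omega_k)=\phi(\omega_1)\cdots\phi(\omega_k)$. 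You instead invoke Proposition~\ref{pp:lfreesg}, the universal property of $\calw_{\Omega_1,\top_1}$ as the \emph{free locality monoid}, applied to the locality map $i_2\circ\phi$ into the locality monoid $\calw_{\Omega_2,\top_2}$. Your route has two small advantages for this particular statement: it delivers the locality monoid morphism directly (the paper's route strictly requires checking in addition that the operated-set morphism respects concatenation, and that $\calw_{\Omega_2}$ genuinely satisfies the operated-set axioms of Definition~\ref{defn:locopset} for the $\Omega_1$-action, neither of which the paper spells out), and it handles the empty word naturally via the unit, whereas Proposition~\ref{pp:freelos} only concerns $\calw^*$. The paper's route, conversely, produces the prepending formula with no further argument. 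Both are legitimate one-line consequences of universal properties already established in the text.
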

\begin{proof}
The corollary is a consequence of Proposition~\ref{pp:freelos} applied to $\Omega=\Omega_1, U=\calw_{\Omega_2}$ and $f=\phi i_{\Omega_2}: \Omega_1\to \calw_{\Omega_2}$ where $i_{\Omega_2}: \Omega_2\to \calw_{\Omega_2}$ is the natural inclusion. It follows from the universal property of the $\calw_{\Omega_1}$ that for $\omega_1\cdots\omega_k\in \calw_{\Omega_1}$ with $\omega_1,\cdots,\omega_k\in \Omega_1$, we have
$$ \phi_\calw^\sharp (\omega_1\cdots \omega_k)=\phi(\omega_1)\cdots \phi(\omega_k).$$
This gives the equation in the corollary.
\end{proof}

\subsection{\Loc operated structures}

\begin{defn}
Let $(\Omega, \top)$ be a \loc set. An {\bf $(\Omega,\top)$-operated \loc set} or simply a {\bf \loc operated set} is a \loc set $(X,\top_X)$
together with a partial action on a subset
$\top_{\Omega,X}:=\Omega\times_\top X\subseteq \Omega \times X$

$$\beta:  \Omega\times_\top X \longrightarrow X, \ (\omega, x)\mapsto \beta ^\omega (x)$$
satisfying the following conditions
	   \begin{enumerate}
\item
$\beta\times \Id_X$ yields a map
$$\Omega \times_\top X  \times_\top X
\ola{\beta \times \Id_X} X\times_\top X,$$
where
$$\Omega \times_\top X  \times_\top X: = \{(\omega,u,u')\in \Omega\times X\times X\,|\,
 (u,u')\in \top_X , (\omega, u), (\omega , u')\in \Omega \times_\top X\}.$$

 In other words,
  \begin{equation}
  (\omega,u,u')\in \Omega \times_\top X  \times_\top X \Longrightarrow \beta^\omega(u)\top _Xu'.
\label{eq:graftind0}
\end{equation}
  \item $\Id_\Omega \times \beta$ yields a map
$$ \Omega\times_\top \Omega \times_\top X \ola{\Id_\Omega \times \beta} \Omega \times_\top X.$$
that is, if $(\omega, \omega')\in \top_\Omega, (\omega,u), (\omega',u)\in \Omega\times_\top X$, then
$(\omega',\beta^\omega(u))\in \Omega\times_\top X$.
	\end{enumerate}
 \label{defn:locopset}
 \end{defn}

The sets $\calw(\Omega)^*$ and $\calw(\Omega)$ also have the following universal properties.

\begin{prop}
Let $X$ be a set. The Cartesian product $\calw(\Omega) \times X$ is an $\Omega$-operated set with the operations given by
$$\beta_{\calw(\Omega)\times X}: \Omega \times \calw(\Omega) \times X \to \calw(\Omega)\times X, \quad (\omega, w , x) \longmapsto (\omega\ w,x).$$
Further, together with the map
\begin{equation}
i:X\to  \calw(\Omega)\times X, x\mapsto (1,x),
\mlabel{eq:freeosi}
\end{equation}
$\calw(\Omega) \times X$ is the free $\Omega$-operated set on $X$. More precisely, for any $\Omega$-operated set $(U,\beta_U)$ and set map $f:X\to U$, there is unique homomorphism
$\free{f}:  \calw(\Omega)\times X \to U$ of $\Omega$-operated sets such that $\free{f}\circ i = f$.
\mlabel{pp:freeos}
\end{prop}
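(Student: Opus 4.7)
The plan is to carry out a standard freeness argument in three steps: first check that the given action makes $\calw(\Omega)\times X$ into an $\Omega$-operated set; then construct the unique extension $\free{f}$ by induction on the length of the word; and finally verify the two required identities, existence and uniqueness.

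For the first step, the independence relations are trivial here (we are in the ordinary set setting recovered from the \loc setting by $\top = X\times X$), so the two conditions of Definition~\ref{defn:locopset} are automatic. It remains only to observe that concatenation on the left by $\omega\in\Omega$ is a well-defined map $\Omega\times\calw(\Omega)\times X\to\calw(\Omega)\times X$, which is clear by the definition of $\calw(\Omega)$.

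For the second step, given an $\Omega$-operated set $(U,\beta_U)$ and a map $f:X\to U$, I would define $\free{f}$ recursively on word length. Set $\free{f}(1,x):=f(x)$; and for a nonempty word $w=\omega_1\cdots\omega_k\in\calw(\Omega)$ with $\omega_i\in\Omega$, set
$$ \free{f}(\omega_1\cdots\omega_k,x):=\beta_U^{\omega_1}\bigl(\beta_U^{\omega_2}\bigl(\cdots\beta_U^{\omega_k}(f(x))\bigr)\bigr).$$
Equivalently, this is characterised by the two relations $\free{f}\circ i=f$ and $\free{f}(\omega w',x)=\beta_U^\omega(\free{f}(w',x))$ for all $\omega\in\Omega$, $w'\in\calw(\Omega)$, $x\in X$. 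The first of these is immediate from the definition on $(1,x)$; the second is by construction, which shows precisely that $\free{f}$ intertwines $\beta_{\calw(\Omega)\times X}$ and $\beta_U$, so that $\free{f}$ is an $\Omega$-operated set morphism.

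For uniqueness, suppose $g:\calw(\Omega)\times X\to U$ is any $\Omega$-operated set morphism with $g\circ i=f$. Then $g(1,x)=f(x)=\free{f}(1,x)$, and by the morphism property applied recursively,
$$ g(\omega_1\cdots\omega_k,x)=\beta_U^{\omega_1}\bigl(g(\omega_2\cdots\omega_k,x)\bigr),$$
so an easy induction on $k$ gives $g=\free{f}$. No real obstacle is expected here; the only mild subtlety is being careful that the recursion is from the left (prepending letters), matching the definition of the action $(\omega,w,x)\mapsto(\omega w,x)$, rather than from the right.
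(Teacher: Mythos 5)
Your argument is correct and is exactly the standard induction on word length that the paper itself relies on: the paper states Proposition~\ref{pp:freeos} without proof (as "well-known") and, in the proof of its locality analogue Proposition~\ref{pp:freelos}, refers back to "the same inductive proof on the length of the $\Omega$-words" that you have written out. Your definition $\free{f}(\omega_1\cdots\omega_k,x)=\beta_U^{\omega_1}(\cdots\beta_U^{\omega_k}(f(x)))$ is forced by the morphism condition, which gives both existence and uniqueness as you say.
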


Extending Proposition~\ref{pp:freeos} to the locality context needs restriction on the set $X$ because of the compatibility of the locality conditions on the set and on the operator set. But this restriction is broad enough for later applications.

\begin{prop}
Let $(\Omega,\top)$ be a \loc set.
Denote
$$\Omega \times_\top \lwords _{\Omega,\top}^*:=\{(\omega,w)\in \Omega\times \lwords\,|\, (\omega,\omega_1,\cdots,\omega_k)\in \Omega^{_\top (k+1)} \text{ where } w=\omega_1\cdots \omega_k\}.$$
The set $\calw_{\Omega,\top}^*$ is an $(\Omega,\top)$-operated set with the operations given by
$$\beta_{\calw _{\Omega,\top}}: \Omega \times_\top \lwords _{\Omega,\top}^* \longrightarrow \calw _\Omega^*, \quad (\omega, w) \longmapsto \omega\,w.$$
Further, together with the natural inclusion
\begin{equation}
i:\Omega\to  \calw _{\Omega,\top}^*, \omega \mapsto \omega,
\label{eq:freelosi}
\end{equation}
$\calw_{\Omega,\top}^*$ is the free $(\Omega,\top)$-operated set on $\Omega$. More precisely, for any $(\Omega,\top)$-operated set
$(U,\beta_U)$ and any locality map $f:\Omega\to U$, there is unique homomorphism
$\free{f}:  \calw _{\Omega,\top}^* \to U$ of $\Omega$-operated sets such that $\free{f}\circ i = f$.
\mlabel{pp:freelos}
\end{prop}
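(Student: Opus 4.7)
The plan is to prove the three sub-claims implicit in the statement in sequence: (a) the prescribed prepending operation on $\calw^*_{\Omega,\top}$ genuinely defines an $(\Omega,\top)$-operated locality set structure; (b) an extension $\bar f$ of $f$ exists; and (c) this extension is unique. Step (a) is a direct verification of the two compatibility conditions in Definition~\ref{defn:locopset}, both of which amount to the observation that properness of a word and the letterwise definition of $\top_{\calw_\Omega}$ behave well under prepending a letter.

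For step (b) I would define $\bar f$ recursively on the length of a proper word,
\[ \bar f(\omega) := f(\omega), \qquad \bar f(\omega_1 \omega_2 \cdots \omega_k) := \beta_U^{\omega_1}\bigl(\bar f(\omega_2 \cdots \omega_k)\bigr),\]
and verify along the way, by induction on $k$, the stronger invariant that for any proper word $\omega_1\cdots\omega_k$ and any indices $i \leq j \leq k-1$, the pair $(\omega_i, \bar f(\omega_{j+1}\cdots \omega_k))$ lies in $\Omega\times_\top U$. The base case is supplied by the locality of $f$, applied to the pair $\omega_i\top_\Omega \omega_k$ coming from properness. The inductive step, which passes from $\bar f(\omega_{j+1}\cdots\omega_k)$ to $\bar f(\omega_j\omega_{j+1}\cdots\omega_k) = \beta_U^{\omega_j}\bigl(\bar f(\omega_{j+1}\cdots \omega_k)\bigr)$, is precisely where condition (ii) of Definition~\ref{defn:locopset} enters: it propagates the independence $(\omega_i, \bar f(\omega_{j+1}\cdots\omega_k)) \in \Omega\times_\top U$ through the $\beta_U^{\omega_j}$-action, using $\omega_i\top_\Omega \omega_j$ from properness.

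With $\bar f$ well-defined, I would then check that it is itself a locality map. Given $w = \omega_1\cdots\omega_k$ and $w' = \omega'_1\cdots\omega'_{k'}$ with $w\top_{\calw_\Omega}w'$, proceed by induction on $k + k'$. The base case with $k = k' = 1$ is immediate from the locality of $f$. For the inductive step, write $\bar f(w') = \beta_U^{\omega'_1}\bigl(\bar f(\omega'_2\cdots\omega'_{k'})\bigr)$: the induction hypothesis gives $\bar f(w) \top_U \bar f(\omega'_2\cdots\omega'_{k'})$, while the invariant established in step (b) gives $(\omega'_1, \bar f(w)) \in \Omega\times_\top U$; condition (i) of Definition~\ref{defn:locopset} then yields $\bar f(w)\top_U \bar f(w')$. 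The homomorphism property $\bar f(\omega\, w) = \beta_U^\omega(\bar f(w))$ is literally the defining recursion. Step (c), uniqueness, is immediate since any morphism extending $f$ must satisfy this same recursion and proper words of length $k$ are determined from their tails of length $k-1$.

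The main obstacle is the bookkeeping in step (b): the recursion at length $k$ requires keeping track not only of the value $\bar f(\omega_2\cdots \omega_k)$ but of all the ``leftward'' independence pairs $(\omega_i, \bar f(\omega_2\cdots \omega_k))$ for $i < 2$ in the enveloping word, and these must be propagated together with the construction. Conditions (i) and (ii) of Definition~\ref{defn:locopset} are precisely tailored to allow this propagation through a single application of $\beta_U$, which is what makes the induction close up cleanly.
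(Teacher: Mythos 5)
Your proof is correct in substance but takes a genuinely different route from the paper's. The paper argues by reduction: it identifies $\calw(\Omega)^*$ with $\calw(\Omega)\times\Omega$, invokes the non-locality universal property of Proposition~\ref{pp:freeos} with $X=\Omega$ to obtain $\free{f}$ on all of $\calw(\Omega)^*$, and then restricts to proper words, remarking only that uniqueness follows from "the same inductive proof on the length of the proper $\Omega$-words". You instead construct $\free{f}$ directly by recursion on word length and make explicit the locality bookkeeping that the restriction argument leaves implicit: the strengthened invariant that $(\omega_i,\free{f}(\omega_{j+1}\cdots\omega_k))\in\Omega\times_\top U$, propagated via condition (ii) of Definition~\ref{defn:locopset}, and the use of condition (i) to show $\free{f}$ is itself a locality map. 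This is arguably the more honest proof: since $\beta_U$ is only partially defined, Proposition~\ref{pp:freeos} cannot literally be applied to $(U,\beta_U)$, so the inductive verification you carry out is the real content in either approach. Two small caveats. First, in the locality-of-$\free{f}$ step, condition (i) requires both $(\omega'_1,\free{f}(w))$ and $(\omega'_1,\free{f}(\omega'_2\cdots\omega'_{k'}))$ to lie in $\Omega\times_\top U$; you cite only the former, though the latter follows from your invariant applied to $w'$. Second, the base case of your invariant, $(\omega_i,f(\omega_k))\in\Omega\times_\top U$, does not follow from $f$ being a locality map $(\Omega,\top)\to(U,\top_U)$ in the paper's sense (which only yields $f(\omega_i)\top_U f(\omega_k)$); it needs the extra compatibility $\omega\top_\Omega\omega'\Rightarrow(\omega,f(\omega'))\in\Omega\times_\top U$. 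That hypothesis is required for the proposition to hold at all and is left implicit in the paper's statement, so you inherit rather than create the imprecision, but you should state it explicitly where you use it.
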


\begin{proof}
Taking $X=\Omega$ in Proposition~\ref{pp:freeos}, we first note the natural identification
$$ \calw(\Omega) \times \Omega \cong \calw(\Omega)^*, (w,\omega) \mapsto w\omega, (1,\omega)\mapsto \omega \text{ for all } w\in \calw(\Omega), \omega\in \Omega,$$
which allows the identification of $\beta_{\calw(\Omega)\times \Omega}$ in Proposition~\ref{pp:freeos} with
$$ \beta_{\calw(\Omega)^*}:\Omega \times \calw(\Omega)^* \cong  \Omega \times \calw(\Omega) \times \Omega \longrightarrow \calw(\Omega) \times \Omega \cong \calw(\Omega)^*,$$
of which the $(\Omega,\top)$-action $\beta_{\calw_{\Omega,\top}}$ in Proposition~\ref{pp:freelos} is simply the restriction to $\Omega \times_\top \lwords _{\Omega,\top}^*$. Further, the map $i$ in Eq.~(\ref{eq:freeosi}) is identified with the map $i$ in Eq.~(\ref{eq:freelosi}).
With these identifications in mind, let an $(\Omega,\top)$-operated set $(U,\beta_U)$ and a locality map $f:(\Omega,\top)\to (U,\top)$ be given.
Then by Proposition~\ref{pp:freeos}, there is unique morphism $\free{f}:
\calw(\Omega)^*\to U$ such that $\free{f} i =f$. Then restricting $\free{f}$ to $\calw_{\Omega,\top}$ gives the existence of the desired morphism $\free{f}$. Its uniqueness follows from the same inductive proof on the length of the proper $\Omega$-words for Proposition~\ref{pp:freeos}.
\end{proof}

\begin{defn}
 Given a \loc set  $(\Omega , \top )$, we   call a
 \begin{enumerate}
  \item {\bf \loc $(\Omega,\top)$- operated semigroup} a  quadruple  $\left(U,\top _U, \beta ,m_U\right)$ where $(U,\top _U,m_U)$ is a \loc semigroup and
  $\left(U,\top _U, \beta \right)$ is a $(\Omega,\top)$-operated \loc set such that
$$
  (\omega, u, u')\in \Omega\times_\top U\times_\top U \Longrightarrow (\omega, uu')\in \Omega\times_\top U;
  $$
 \item {\bf \loc $(\Omega,\top)$-operated monoid} a  quintuple  $\left(U,\top _U, \beta ,m_U,1_U\right)$ where $(U,\top _U,m_U,1_U)$ is an \loc monoid
  and $\left(U,\top _U, \beta, m_U \right)$ is a $(\Omega,\top)$-operated \loc semigroup, and $\Omega \times 1_U\subset \Omega \times _\top U$.
    \item {\bf $(\Omega,\top)$- operated \loc nonunitary algebra } (resp. {\bf $(\Omega,\top)$- operated \loc unitary algebra}) a  quadruple  $\left(U,\top _U, \beta ,m_U\right)$ (resp. quintuple
    $\left(U,\top _U, \beta ,m_U, 1_U\right)$) which is at the same time a \loc algebra (resp. unitary algebra) and a \loc $(\Omega,\top)$-operated semigroup (resp. monoid), satisfying the additional condition that for any
    $\omega\in \Omega$, the set $\omega^{\top_{\Omega,U}}:=\{ u\in U\,|\, \omega\top_{\Omega,U} u \}$ is a subspace of $U$ on which the action of $\omega$ is linear
(resp. and $\Omega \times 1_U\subset \Omega \times _\top U$). Explicitly,  the linearity condition reads
\begin{quote}
for any $u_1, u_2\in \omega^{\top_{\Omega,U}}$, $k_1, k_2\in \bfk$, we have
$k_1u_1+k_2u_2\in \omega^{\top_{\Omega,U}}$ and  $\beta^\omega(k_1u_1+k_2u_2)=k_1\beta^\omega(u_1)+k_2\beta^\omega(u_2).$
\end{quote}
\end{enumerate}
\end{defn} \label{defn:basedlocsg}

\begin{defn}
A {\bf morphism of \loc operated   \loc structures (sets, semigroups, monoids, nonunitary algebras, algebras)}   $(\Omega_i,\top _i)$-operated structures   $(U_i, \top _{U_i}, \beta_i)$, $i=1,2,$,  is a couple $(\phi, f)$ with $\phi: \Omega_1\to \Omega_2$ and $f: U_1\to U_2$ such that
\begin{itemize}
\item    $\phi$ is a \loc map and $f$ is a morphism of \loc structures;
\item  $f\circ \beta_1^\omega = \beta_2^{\phi(\omega)} \circ f$.
\end{itemize}
\label{defn:morphismoperatedloc}
\end{defn}	

For a given \loc operated set $(\Omega,\top)$, the collection of $(\Omega,\top)$-operated \loc semigroups, with the morphisms in Definition~\mref{defn:morphismoperatedloc} taking $(\Omega_i,\top_i)=(\Omega,\top)$, form a category.

The following lemma will be useful in the sequel.
\begin{lem}\label{lem:Omegaphi}
 Let $(\Omega,\top_\Omega,m_\Omega,1_\Omega)$ be a \loc monoid. A  \loc map $\phi:\Omega\longrightarrow\Omega$ independent of $\Id _\Omega$ induces an $(\Omega,\top^\Omega)$-operated structure on $(\Omega,\top_\Omega,\beta _\phi, m_\Omega,1_\Omega)$,  with
$$		 \beta_\phi:\top _\Omega  \longrightarrow\Omega,
\quad 		 (\omega,\omega') \longmapsto \beta_\phi^\omega( \omega'):=\phi(m_\Omega (\omega,\omega')).
$$
\end{lem}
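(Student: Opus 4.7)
The plan is to verify the two defining conditions of an $(\Omega,\top_\Omega)$-operated locality set listed in Definition~\ref{defn:locopset}, where here the operated set coincides with $\Omega$ itself and $\top_{\Omega,\Omega} = \top_\Omega$. Well-definedness of $\beta_\phi$ is free: for $(\omega,\omega')\in \top_\Omega$ the product $m_\Omega(\omega,\omega')$ lies in $\Omega$ by the locality magma axiom, so $\phi$ may be evaluated on it. The key inputs are the semigroup locality compatibility~\eqref{eq:semigrouploc} for $m_\Omega$ and the hypothesis that $\phi\top\Id_\Omega$, which unwinds to the implication $x\top_\Omega y \Rightarrow \phi(x)\top_\Omega y$.

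For condition~(i), I would take $(\omega,u,u')\in \Omega\times_\top\Omega\times_\top\Omega$, so pairwise $\omega\top u$, $\omega\top u'$, $u\top u'$, and show $\beta_\phi^\omega(u)\top_\Omega u'$. Applying~\eqref{eq:semigrouploc} to the singleton $U=\{u'\}$, both $\omega$ and $u$ lie in $\{u'\}^\top$ and satisfy $\omega\top u$; hence $m_\Omega(\omega,u)\in \{u'\}^\top$, i.e.\ $m_\Omega(\omega,u)\top u'$. Then $\phi\top\Id_\Omega$ promotes this to $\phi(m_\Omega(\omega,u))\top u'$, which is exactly $\beta_\phi^\omega(u)\top_\Omega u'$.

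For condition~(ii), given $(\omega,\omega',u)$ with $\omega\top \omega'$, $\omega\top u$, $\omega'\top u$, the same reasoning applied to $U=\{\omega'\}$ yields $m_\Omega(\omega,u)\top \omega'$, and then $\phi\top\Id_\Omega$ together with symmetry of $\top_\Omega$ gives $\omega'\top \phi(m_\Omega(\omega,u))$, i.e.\ $(\omega',\beta_\phi^\omega(u))\in \top_\Omega$. Compatibility with the unit $1_\Omega$ (needed so that $\Omega\times 1_\Omega\subset \Omega\times_\top\Omega$) is automatic from $\{1_\Omega\}^\top=\Omega$ in any locality monoid, and $\beta_\phi^\omega(1_\Omega)=\phi(\omega)$ causes no further problem.

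There is no real obstacle: the argument is a two-step reduction that first uses the locality of the monoid product to bring one factor into a given polar set, and then uses $\phi\top\Id_\Omega$ to push $\phi$ past the independence relation. The only minor care needed is to track which pairs are in $\top_\Omega$ at each stage and to apply~\eqref{eq:semigrouploc} with singleton subsets so that $U^\top$ reduces to a polar condition on a single element.
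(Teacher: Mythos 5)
Your proof is correct; the paper in fact states this lemma without proof, and your two-step reduction (singleton application of the locality--magma condition~\eqref{eq:semigrouploc} followed by $\phi\top\Id_\Omega$) is exactly the routine verification the authors intend. The only item you leave implicit is the operated--semigroup compatibility $(\omega,u,u')\in\Omega\times_\top\Omega\times_\top\Omega\Rightarrow \omega\top_\Omega m_\Omega(u,u')$, but this follows from the identical singleton argument with $U=\{\omega\}$ and involves no new idea.
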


Thus as a direct consequence of Proposition \ref{pp:freelos}, we obtain

\begin{coro} \label{coro:defn_widehatW}
 Let $(\Omega,\top _\Omega,\bullet)$ be a \loc semigroup. For any locality operator $P:(\Omega,\top_\Omega) \longrightarrow (\Omega,\top_\Omega)$,
there is unique locality map
$$ \widehat{P}^\calw: \calw^*_{\Omega,\top_\Omega} \longrightarrow \Omega$$
such that $\widehat P^{\calw}(\omega):=P(\omega)$ for any $\omega$ in $\Omega$, and $\widehat P^{\calw}(\omega w): = P(\omega\bullet\widehat P^{\calw}(w))$ for any $\omega w\in\calw^*_{\Omega,\top _\Omega}$.
This gives rise to a map
$$  \Phi^{\calw} : \call _\top (\Omega,\Omega)  \longrightarrow \mathcal{L}_{\Omega,\Omega}(\calw^* _{\Omega,\top _{\Omega}},\Omega),
\quad	P \longmapsto \widehat P^{\calw},
$$
 where $\mathcal{L}_{\Omega,\Omega}(\calw^* _{\Omega,\top _{\Omega}},\Omega)$ is the set of morphisms from $\calw ^*_{\Omega,\top _\Omega}$ to $\Omega$ of \loc sets.
 \end{coro}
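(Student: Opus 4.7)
The plan is to read this statement as a direct application of the universal property of $\calw^*_{\Omega,\top_\Omega}$ as the free $(\Omega,\top_\Omega)$-operated locality set on $(\Omega,\top_\Omega)$, which is exactly Proposition \ref{pp:freelos}. The only preparation needed is to equip the codomain $\Omega$ itself with an $(\Omega,\top_\Omega)$-operated locality set structure tailored to $P$. For this I would invoke Lemma \ref{lem:Omegaphi} applied to $\phi=P$ (treating the independence of $P$ with $\Id_\Omega$ as part of the meaning of ``locality operator'', in the sense of that lemma), obtaining the partial action
\[
\beta_P^{\omega}(\omega') := P(\omega\bullet\omega'), \qquad (\omega,\omega')\in\top_\Omega,
\]
which turns $(\Omega,\top_\Omega,\beta_P,\bullet)$ into an $(\Omega,\top_\Omega)$-operated locality semigroup.

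Next, I would view $P:(\Omega,\top_\Omega)\to(\Omega,\top_\Omega)$ as a locality map from $\Omega$ into the operated locality set $(\Omega,\top_\Omega,\beta_P)$ just constructed. Proposition \ref{pp:freelos} then delivers a unique morphism of $(\Omega,\top_\Omega)$-operated locality sets
\[
\bar{P}:\calw^*_{\Omega,\top_\Omega}\longrightarrow\Omega
\]
such that $\bar{P}\circ i = P$, where $i:\Omega\hookrightarrow\calw^*_{\Omega,\top_\Omega}$ is the natural inclusion. The condition $\bar{P}\circ i = P$ gives $\bar{P}(\omega)=P(\omega)$ for every $\omega\in\Omega$, while the operated-morphism property, applied to a proper word $\omega w\in\calw^*_{\Omega,\top_\Omega}$ written via the action $\beta_{\calw}^{\omega}(w)=\omega w$, reads
\[
\bar{P}(\omega w) = \beta_P^{\omega}(\bar{P}(w)) = P\bigl(\omega\bullet \bar{P}(w)\bigr).
\]
Setting $\widehat{P}^{\calw}:=\bar{P}$ recovers exactly the two defining relations in the statement, and the locality of $\widehat{P}^{\calw}$ is automatic because morphisms of operated locality sets are in particular locality maps.

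Uniqueness is also immediate: any two locality maps $\calw^*_{\Omega,\top_\Omega}\to\Omega$ satisfying the two recursions are morphisms of $(\Omega,\top_\Omega)$-operated locality sets extending $P$ through $i$, so Proposition \ref{pp:freelos} forces them to agree. The assignment $P\mapsto\widehat{P}^{\calw}$ then defines the map
\[
\Phi^{\calw}:\call_\top(\Omega,\Omega)\longrightarrow\mathcal{L}_{\Omega,\Omega}(\calw^*_{\Omega,\top_\Omega},\Omega).
\]

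The only genuinely technical point I expect to require attention is the verification that $\beta_P$ really satisfies the compatibility requirements of Definition \ref{defn:locopset}, so that Proposition \ref{pp:freelos} is legitimately applicable; this is the content of (the semigroup analogue of) Lemma \ref{lem:Omegaphi} and reduces to combining the locality semigroup axioms for $\bullet$ with the independence of $P$ and $\Id_\Omega$. Once that is in place, the corollary is a mechanical unwinding of the universal property and no inductive argument on the length of words has to be run by hand, as Proposition \ref{pp:freelos} has already done that work.
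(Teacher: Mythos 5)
Your proposal is correct and follows exactly the paper's own route: the paper derives this corollary as "a direct consequence of Proposition~\ref{pp:freelos}" after equipping $\Omega$ with the operated structure $\beta_P^\omega(\omega')=P(\omega\bullet\omega')$ via Lemma~\ref{lem:Omegaphi}. Your explicit unwinding of the recursions and the uniqueness argument, together with the remark that the semigroup variant of Lemma~\ref{lem:Omegaphi} is what is really needed, matches the intended proof.
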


\subsection{Properly decorated forests and their universal properties}
\label{ss:propdecfor}

	The \loc $\Omega$-operated monoid and \loc algebra of properly decorated rooted forests {are built in \cite{CGPZ2}}. We recall its
	definition and its universal property in the category of \loc operated monoids and algebras.
		 	
\begin{defn} \label{defn:prop_dec_forests}
Let $(\Omega,\top)$ be a \loc set. A {\bf $\Omega$-properly decorated (planar) rooted forest} is a decorated (planar) rooted forest  $\left( F,  d_F \right),$ such that the vertices are decorated by mutually independent elements.

Let  $ \calf _{\Omega,\top _\Omega} $ (resp. $ \calp _{\Omega,\top _\Omega} $) denote the set of  $\Omega$-properly decorated rooted forests (resp. planar rooted forests), and $\bfk\,\calf  _{\Omega,\top _\Omega} $ (resp. $\bfk\,\calp  _{\Omega,\top _\Omega} $) be its linear span. $ \calf _{\Omega,\top _\Omega} $ (resp. $ \calp _{\Omega,\top _\Omega} $)
inherits the independence relation $\top _{\calf_{\Omega}}$ of $\calf_\Omega$ (resp. $\top _{\calp_{\Omega}}$ of $\calp_\Omega$),  which we   denote  by $\top _{\calf_{\Omega, \top _\Omega}}$  (resp. $\top _{\calp_{\Omega, \top _\Omega}}$ ), and $ \bfk\,\calf_{\Omega,\top _\Omega} $
(resp. $\bfk\,\calp_{\Omega,\top_\Omega}$) inherits the independence relation $\top _{\bfk\,\calf_{\Omega}}$ of $\bfk\,\calf_\Omega$ (resp. $\top _{\bfk\,\calp_{\Omega}}$ of $\bfk\,\calp_\Omega$) which we also denote  by $\top _{\bfk\,\calf_{\Omega, \top _\Omega }}$ (resp. $\top _{\bfk\,\calp_{\Omega, \top _\Omega}}$).
\label{defn:properlydecoratedforest}
\end{defn}

It is easy to see that taking disjoint union of forests in $\calf_{\Omega}$ (resp. $\calp_{\Omega}$) defines a \loc monoid structure on
$\calf _{\Omega,\top _\Omega}$ (resp. $\calp _{\Omega,\top _\Omega}$),  and hence a \loc algebra structure in $\bfk\,\calf _{\Omega , \top _\Omega}$
(resp. $\bfk\,\calp _{\Omega , \top _\Omega}$).
This leads
  to the following straightforward yet fundamental result which {we quote from \cite{CGPZ2}.}

\begin{prop} \cite{CGPZ2}
 Let $(\Omega,\top _\Omega)$ be a \loc set.
 Then
\begin{enumerate}
\item
 $(\calf_{\Omega, \top _\Omega},\top _{\calf_{\Omega, \top _\Omega}}, B_+,\cdot,1)$
  is a \loc $(\Omega,\top _\Omega)$-operated commutative monoid;
\item
 $(\bfk\,\calf_{\Omega, \top _\Omega},\top _{\bfk\,\calf_{\Omega, \top _\Omega }}, B_+,\cdot,1)$
is a \loc $(\Omega,\top _\Omega)$-operated  commutative algebra;

 \item
 $(\calp_{\Omega, \top _\Omega},\top _{\calp_{\Omega, \top _\Omega}}, B_+,\cdot,1)$
 is a \loc $(\Omega,\top _\Omega)$- operated  monoid;

 \item
 $\bfk\,\calp_{\Omega, \top _\Omega},\top _{\calp_{\Omega, \top _\Omega}}, B_+,\cdot,1)$
  is a \loc $(\Omega,\top _\Omega)$-operated  algebra.
 \end{enumerate}
\label{prop:operatedalgebra}
\end{prop}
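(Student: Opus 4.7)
The plan is to verify the axioms of a \loc $(\Omega, \top_\Omega)$-operated (commutative) monoid for parts (i) and (iii), and then deduce the algebra statements (ii) and (iv) by linear extension via the earlier proposition linearising \loc magmas. The grafting operator $B_+$ correspondingly extends linearly in its forest argument on the subspace $\{F\in \bfk\,\calf_{\Omega,\top_\Omega}\mid \omega\top F\}$, yielding an $(\Omega,\top_\Omega)$-operated \loc algebra. Parts (i) and (iii) can be handled in parallel, since the only distinction is that disjoint union is commutative in the non-planar case (i) but not in the planar case (iii); no additional argument is needed to pass from one to the other.

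For the \loc monoid structure: given two properly decorated forests $F_1, F_2$ with $F_1 \top_{\calf_\Omega} F_2$, the disjoint union $F_1 \cdot F_2$ is again properly decorated, since any pair of vertices either lies within one of the $F_i$ (and is independent by hypothesis that each $F_i$ is properly decorated) or straddles the two (and is independent by $F_1 \top F_2$). Associativity of disjoint union holds whenever defined, and the empty forest $1$ serves as the unit with $\{1\}^\top = \calf_{\Omega,\top_\Omega}$. The compatibility condition \eqref{eq:semigrouploc} reduces to the observation that the decoration set of $F_1 \cdot F_2$ is the union of those of $F_1$ and $F_2$, so if both $F_1, F_2$ lie in $U^\top$ for any $U \subseteq \calf_{\Omega,\top_\Omega}$, then so does their product.

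For the $(\Omega, \top_\Omega)$-operated structure, I would check the two conditions of Definition~\ref{defn:locopset}. The grafting $B_+^\omega(F)$, which produces a tree with root decorated by $\omega$ and branches $F$, is defined precisely when $\omega$ is independent from every decoration of $F$, and the result is then properly decorated because its decoration set is $\{\omega\} \cup \mathrm{dec}(F)$. Both conditions of Definition~\ref{defn:locopset} follow directly from this: if $(\omega, F, F') \in \Omega \times_\top \calf \times_\top \calf$ then every decoration of $B_+^\omega(F)$ lies in $\{\omega\} \cup \mathrm{dec}(F)$ and is thus independent from $F'$; the second condition is analogous. The compatibility between the operator and the product --- namely that $(\omega, u, u') \in \Omega \times_\top \calf \times_\top \calf$ implies $(\omega, u\cdot u') \in \Omega \times_\top \calf$ --- again reduces to $\mathrm{dec}(u\cdot u') = \mathrm{dec}(u) \cup \mathrm{dec}(u')$. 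The main difficulty is purely book-keeping: one must systematically track decoration sets across all operations and verify that every \loc condition collapses to the independence relation $\top_\Omega$ on $\Omega$. Since the decoration set of any composite forest is the union of those of its constituents, every such verification reduces to an elementary union-of-sets argument, and the linear span statements (ii) and (iv) then follow immediately from the set-theoretic versions via the linearisation proposition.
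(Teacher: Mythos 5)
Your proposal is correct and follows exactly the route the paper indicates: the paper itself gives no proof, merely quoting the result from [CGPZ2] and remarking beforehand that disjoint union of forests ``easily'' yields the \loc monoid structure and hence the \loc algebra structure on the linear span. Your systematic reduction of every \loc condition to the fact that the decoration set of a composite forest is the union of the decoration sets of its constituents is precisely the routine verification being left implicit.
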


Given   two   \loc  sets   $(\Omega_i, \top _{\Omega_i})$  $i=1,2 $, let
\begin{itemize}
  	\item
  	$\mathcal{L}_{\top }( \Omega_1, \Omega_2)$ denote the set of   \loc maps  $\phi:\Omega_1\longrightarrow \Omega_2$;
  	\item  $\mathcal{L}_{\Omega_1,\Omega_2}(U_1, U_2)$ denote the set of
  	morphisms between $(\Omega_i, \top _{\Omega_i})$- operated \loc structures $(U_i, \top_{U_i}, \beta_i)$ of the same type.
  	\end{itemize}
  		
   All these sets  are equipped with the independence relation of maps: $\phi, \psi: (A,\top _A)\to (B,\top _B)$
  		\[\phi\top \psi\Longleftrightarrow \left(a_1\top _A a_2\Longrightarrow \phi(a_1)\top _B \psi(a_2)\right).\]

\begin{thm}\label{thm:liftedphi}\cite{CGPZ2}
Let $\left(\Omega_1,\top _{\Omega_1}\right),\left(\Omega_2,\top _{\Omega_2}\right)$ be two   \loc sets and let $\phi: \left(\Omega_1,\top _{\Omega_1}\right)\longrightarrow \left(\Omega_2,\top _{\Omega_2}\right)$ be a \loc map.
For any commutative \loc  algebra  $(U, \top _U , \beta _{U} ,m_U, 1_U)$ over $(\Omega_2,\top _{\Omega_2})$, $\phi$ uniquely lifts to
 a morphism  of operated  commutative \loc algebras  $\phi^\sharp: {\mathcal F}_{\Omega_1, \top _{\Omega_1}}\longrightarrow U$, which gives rise to a map
				 	\begin{eqnarray*}
				\sharp: \left(	\mathcal{L} _\top( \Omega_1, \Omega_2),\top\right) &\longrightarrow & \left(\mathcal{L}_{\Omega_1,\Omega_2}(\bfk\,\calf _{\Omega_1,\top _{\Omega_1}}, U),\top\right)\\
					\phi&\longmapsto & \phi^\sharp.
				\end{eqnarray*}

				We call $\phi^\sharp$ the {\bf lifted $\phi$-  map}, which by construction is characterised by the following properties
				  \begin{eqnarray}
					 & \phi^\sharp(\emptyset) = 1_{U}, \label{eq:identity} \\
					 &    \phi^\sharp ((F_1,d_1)\cdots (F_n,d_n)) = \phi^\sharp(F_1,d_1)\cdots\phi^\sharp(F_n,d_n),\label{concatenation}\\
					 &{ \phi^\sharp \left( B_+^\omega(F,d)\right) =  \beta_{U}^{\phi(\omega)}\left(\phi^\sharp(F,d)\right),}  \label{rec_branching_proc}
					 \end{eqnarray}
					 for any mutually independent properly $\Omega_1$-decorated rooted (planar) forests $(F_1,d_1),\cdots, (F_n, d_n) $    and any $\omega\in \Omega_1$ independent of $(F,d)$.
			\end{thm}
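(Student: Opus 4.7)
The plan is to construct $\phi^\sharp$ recursively on the set $\calf_{\Omega_1,\top_{\Omega_1}}$ of properly decorated rooted forests, then to extend by linearity to $\bfk\,\calf_{\Omega_1,\top_{\Omega_1}}$. A properly decorated forest is built from the empty forest by two operations: disjoint union of (mutually independent) forests and grafting $B_+^\omega$ with a root decoration $\omega$ independent of all existing decorations. This gives a well-founded recursion on the number of vertices, so we set $\phi^\sharp(\emptyset):=1_U$, $\phi^\sharp(B_+^\omega(F,d)):=\beta_U^{\phi(\omega)}\bigl(\phi^\sharp(F,d)\bigr)$, and $\phi^\sharp((F_1,d_1)\cdots(F_n,d_n)):=\phi^\sharp(F_1,d_1)\cdots\phi^\sharp(F_n,d_n)$, which are exactly formulas \eqref{eq:identity}, \eqref{concatenation}, \eqref{rec_branching_proc}. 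This makes existence and uniqueness of $\phi^\sharp$ as a set-theoretic map immediate, provided the recursion lands inside $U$.

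The substantive part is showing that each recursive step is legal and that the locality structure is preserved, that is, $\phi^\sharp\top\phi^\sharp$. I would prove the stronger statement by simultaneous induction on the total number of vertices: for any properly decorated forests $(F,d),(F',d')$ with $(F,d)\top_{\calf_{\Omega_1,\top_{\Omega_1}}}(F',d')$ (meaning that every decoration in $d$ is $\top_{\Omega_1}$-independent of every decoration in $d'$), one has $\phi^\sharp(F,d)\top_U\phi^\sharp(F',d')$. Since $\phi$ is a \loc map, the images of these decorations are mutually independent in $\Omega_2$, and then the operated \loc structure on $U$ (condition \eqref{eq:graftind0} together with the compatibility $\Omega\times 1_U\subset \Omega\times_\top U$ and the fact that products of mutually $\top_U$-independent elements remain $\top_U$-independent of a third) propagates the independence through each grafting and each concatenation in the recursion. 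The same induction shows that in each step $\phi^\sharp(F,d)$ is actually independent in the sense needed to form the next product or apply $\beta_U^{\phi(\omega)}$.

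Once $\phi^\sharp$ is known to be locality, checking that it is a morphism of $(\Omega_i,\top_i)$-operated locality commutative algebras reduces to a direct reading of the three defining relations against Definition \ref{defn:morphismoperatedloc}: the unit is preserved by \eqref{eq:identity}, multiplicativity is \eqref{concatenation} (which also forces linearity after extending by linearity to $\bfk\,\calf_{\Omega_1,\top_{\Omega_1}}$), and commutation with the operator action is \eqref{rec_branching_proc}. Uniqueness follows because any morphism of operated commutative \loc algebras from $\bfk\,\calf_{\Omega_1,\top_{\Omega_1}}$ to $U$ lifting $\phi$ must satisfy these three identities, and these identities determine its value on every forest by induction on the number of vertices.

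Finally, to see that $\sharp$ itself is a locality map from $\mathcal{L}_\top(\Omega_1,\Omega_2)$ to $\mathcal{L}_{\Omega_1,\Omega_2}(\bfk\,\calf_{\Omega_1,\top_{\Omega_1}},U)$, given $\phi\top\psi$ one runs the same induction as above but on pairs $(\phi^\sharp,\psi^\sharp)$ applied to independent forests; the inductive step uses $\phi\top\psi$ to pass independence of decorations in $\Omega_1$ to independence of their images under $\phi$ and $\psi$ in $\Omega_2$, and then the operated \loc axioms of $U$ finish the argument exactly as before. The main obstacle throughout is bookkeeping the locality compatibility at each recursive step; this is where the hypothesis that $U$ is an \emph{operated} \loc algebra (rather than merely a \loc algebra) is indispensable.
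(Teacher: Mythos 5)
Your proposal is sound. Note that the paper itself gives no proof of this theorem: it is quoted from \cite{CGPZ2} (listed as ``in preparation''), so there is no in-text argument to compare against. Your construction is nevertheless the standard and expected one for a universal property of free objects: define $\phi^\sharp$ by the three characterising identities via recursion on the number of vertices, then verify by induction that every product and every application of $\beta_U^{\phi(\omega)}$ occurring in the recursion is legal. The one point worth making fully explicit is that your simultaneous induction hypothesis must track \emph{two} independence relations at once: (a) for independent forests $(F,d)\top(F',d')$ one has $\phi^\sharp(F,d)\top_U\phi^\sharp(F',d')$, and (b) for $\omega\in\Omega_1$ independent of all decorations of $(F,d)$ one has $(\phi(\omega),\phi^\sharp(F,d))\in\Omega_2\times_\top U$. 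Statement (b) is what legitimises the grafting step, and it is propagated through the recursion precisely by conditions (i) and (ii) of Definition~\ref{defn:locopset} together with the semigroup compatibility $(\omega,u,u')\in\Omega\times_\top U\times_\top U\Rightarrow(\omega,uu')\in\Omega\times_\top U$ and $\Omega\times 1_U\subset\Omega\times_\top U$; statement (a) is propagated by condition~\eqref{eq:graftind0} and by~\eqref{eq:semigrouploc}. You allude to this but do not separate the two strands. With that bookkeeping spelled out, your existence, uniqueness, and locality-of-$\sharp$ arguments all go through; commutativity of $U$ is what makes the value on a forest independent of the ordering of its constituent trees, so the recursion is well defined on non-planar forests.
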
 

\begin{coro}\label{coro:existencemapbranching}
Let     $\left(\Omega,\top _{\Omega}\right) $ be a commutative \loc  monoid (resp. a unital  commutative \loc  algebra, a \loc  monoid, a unital \loc algebra).
                   A    map  $\phi: \left(\Omega,\top _{\Omega}\right)\longrightarrow \left(\Omega,\top _{\Omega}\right)$ such that \[\phi \top \Id _\Omega, \]
                   induces a unique morphism of \loc commutative monoids (resp. \loc unital  commutative  algebras, \loc  monoids, \loc unital algebras)
   \[   \widehat \phi: \bbf _{\Omega, \top _{\Omega}}\longrightarrow (\Omega, \top_\Omega), \]
   (resp.
  $$\widehat \phi: \bfk\,\calf _{\Omega, \top _{\Omega}}\longrightarrow (\Omega, \top_\Omega), $$
  $$\widehat \phi: \bbp _{\Omega, \top _{\Omega}}\longrightarrow (\Omega, \top_\Omega), $$
  $$\widehat \phi: \bfk \calp _{\Omega, \top _{\Omega}}\longrightarrow (\Omega, \top_\Omega),) $$
$\widehat \phi$ is called  the {\bf branched $\phi$-map}.
 By construction it is characterised by the following properties:      \begin{eqnarray}
                    &\hat  \phi(\emptyset) = 1_{\Omega} \label{eq:identitybranched} \\
                    &  \hat   \phi  ((F_1,d_1)\cdots (F_n,d_n)) = \hat \phi(F_1,d_1)\cdots\hat\phi (F_n,d_n)\label{rec_branching_procbranched}\\
                    &    \widehat \phi \left( B_+^\omega(F,d)\right) =  \phi\left(\omega\,\left(\widehat\phi (F,d)\right)\right), \label{Bbranched}
                    \end{eqnarray}
 for any  mutually independent properly decorated forests where $(F_1,d_1),\cdots, (F_n,d_n)\in \mathcal{F}_{\Omega_1,\top _{\Omega_1}}$, and any $\omega\in \Omega_1$  which is independent of $(F,d)$.
                \end{coro}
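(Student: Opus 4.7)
The plan is to derive this corollary directly from Theorem~\ref{thm:liftedphi} together with Lemma~\ref{lem:Omegaphi}. The key observation is that the hypothesis $\phi\top\Id_\Omega$ is exactly what is needed to turn the codomain $\Omega$ itself into an $(\Omega,\top_\Omega)$-operated \loc structure on which the universal property of properly decorated forests can then be applied.

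First, I would invoke Lemma~\ref{lem:Omegaphi} to equip $(\Omega,\top_\Omega,m_\Omega,1_\Omega)$ with the action
\[
\beta_\phi:\top_\Omega\longrightarrow\Omega,\qquad \beta_\phi^\omega(\omega'):=\phi(m_\Omega(\omega,\omega')),
\]
turning $(\Omega,\top_\Omega,\beta_\phi,m_\Omega,1_\Omega)$ into a \loc $(\Omega,\top_\Omega)$-operated commutative monoid (and into an $(\Omega,\top_\Omega)$-operated commutative \loc algebra after linear extension in the algebra case; the non-commutative statements are analogous, the commutativity playing no role in the operated part). This uses the hypothesis $\phi\top\Id_\Omega$ precisely to ensure well-definedness of $\beta_\phi$ on $\top_\Omega$ and its \loc compatibility.

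Next, I would apply Theorem~\ref{thm:liftedphi} to the \emph{identity} \loc map $\Id_\Omega:(\Omega,\top_\Omega)\to(\Omega,\top_\Omega)$, taking as target the $(\Omega,\top_\Omega)$-operated commutative \loc algebra $(U,\top_U,\beta_U,m_U,1_U):=(\Omega,\top_\Omega,\beta_\phi,m_\Omega,1_\Omega)$ just constructed. The theorem then yields a unique morphism of $(\Omega,\top_\Omega)$-operated commutative \loc algebras
\[
\widehat\phi := \Id_\Omega^\sharp : \bfk\,\calf_{\Omega,\top_\Omega}\longrightarrow \Omega,
\]
and one reads off the three characterising properties directly from \eqref{eq:identity}, \eqref{concatenation} and \eqref{rec_branching_proc}: the first two translate verbatim into \eqref{eq:identitybranched} and \eqref{rec_branching_procbranched}, while the operated compatibility becomes
\[
\widehat\phi\bigl(B_+^\omega(F,d)\bigr)=\beta_\phi^{\Id_\Omega(\omega)}\bigl(\widehat\phi(F,d)\bigr)=\phi\bigl(\omega\,\widehat\phi(F,d)\bigr),
\]
which is exactly \eqref{Bbranched}. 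The planar and/or non-unital variants are obtained by applying the same argument with the appropriate clause of Proposition~\ref{prop:operatedalgebra} and of Theorem~\ref{thm:liftedphi} in each of the four categories listed in the statement.

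Uniqueness is inherited for free from the universal property in Theorem~\ref{thm:liftedphi}: any morphism of the relevant \loc operated structures satisfying \eqref{eq:identitybranched}--\eqref{Bbranched} is, via the identification $\beta_\phi^\omega(\cdot)=\phi(\omega\cdot)$, a morphism of $(\Omega,\top_\Omega)$-operated \loc structures lifting $\Id_\Omega$, hence must coincide with $\widehat\phi$. The only real point requiring care -- and the main potential obstacle -- is to check that the action $\beta_\phi$ indeed satisfies both locality conditions of Definition~\ref{defn:locopset} on the nose (independence is preserved under the action, and $(\omega',\beta_\phi^\omega(u))\in\top$ whenever $(\omega,\omega',u)\in\Omega^{_\top 3}$); once this is verified from $\phi\top\Id_\Omega$ and the \loc monoid axioms, the rest of the proof is a direct specialisation of the already established universal property for $\bfk\,\calf_{\Omega,\top_\Omega}$.
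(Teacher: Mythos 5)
Your proof is correct and follows the route the paper intends: the paper itself states this corollary without an explicit proof (it is quoted from \cite{CGPZ2}), but the combination of Lemma~\ref{lem:Omegaphi} (to endow $(\Omega,\top_\Omega)$ with the operated structure $\beta_\phi$) and Theorem~\ref{thm:liftedphi} applied to $\Id_\Omega$ with target $(\Omega,\top_\Omega,\beta_\phi,m_\Omega,1_\Omega)$ is precisely the intended derivation. Your translation of \eqref{rec_branching_proc} into \eqref{Bbranched} via $\beta_\phi^{\Id_\Omega(\omega)}(\cdot)=\phi(\omega\,\cdot)$, and the deferral of the locality checks on $\beta_\phi$ to Lemma~\ref{lem:Omegaphi}, are exactly right.
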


\section{From rooted trees to words}

	\subsection{\Loc quasi-shuffle algebra} \label{subsec:loc_struct_words}

\begin{propdefn} \label{propdef:def_star_lambda}
 Let $(\Omega, \top _\Omega, \cdot)$ be a \loc semigroup. For $\lambda\in\bfk$ we define the {\bf $\lambda$-\loc quasi-shuffle product} (for short {\bf  \loc quai-shuffle product} when $\lambda$ is fixed) on $\bfk\,\calw _{\Omega,\top_{\Omega}}$
 $$\star_\lambda:\bfk\,\calw _{\Omega,\top _\Omega}\times_{\top} \bfk\,\calw _{\Omega,\top _\Omega} \to\bfk\calw_{\Omega,\top _\Omega}$$
   as the linear map whose action on the basis elements
 is inductively defined  on words by
 \begin{equation*}
  1\star_{\lambda}w = w\star_{\lambda}1 = w,
 \end{equation*}
 {and for $(\omega,\omega',w,w')\in \lwords_{\Omega,\top _\Omega}^{_{\top}4}$ with $\omega$, $\omega '\in \Omega$
 \begin{equation} \label{eq:def_star_lambda_ind}
  (\omega w)\star_{\lambda}(\omega' w') = \omega(w\star_{\lambda}(\omega' w')) + \omega' ((\omega w)\star_{\lambda}w') + \lambda(\omega\cdot \omega')(w\star_\lambda w')
 \end{equation}}
and extended by bilinearity to $\bfk\,\calw _{\Omega,\top _\Omega}$, this product is well-defined and associative. Thus $(\bfk\,\calw _{\Omega,\top _{\Omega}},\top_ {\bfk\,\calw _{\Omega,\top _{\Omega}}},\star_\lambda,1)$ is an
 \loc algebra.
\end{propdefn}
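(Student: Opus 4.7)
The plan is to prove the statement in three stages: (i) well-definedness of the recursive formula~\eqref{eq:def_star_lambda_ind}, (ii) the locality algebra axioms (bilinearity and compatibility with $\top$), and (iii) associativity. For (i), I induct on $n=|w_1|+|w_2|$. The base case $n\le 1$ is immediate from $1\star_\lambda w = w\star_\lambda 1 = w$. For the inductive step with $w_1=\omega w$, $w_2=\omega' w'$ mutually independent and each proper, each of the three summands on the right of~\eqref{eq:def_star_lambda_ind} must be shown to lie in $\bfk\,\calw_{\Omega,\top_\Omega}$. In the first two summands, $\omega$ is independent of every letter of $w$ (by properness of $w_1$) and of every letter of $\omega' w'$ (by $w_1\top_{\calw}w_2$), so the inductive hypothesis produces a sum of proper words with $\omega$ prepended; analogously for the second summand. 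The delicate summand is the third: the composite $\omega\cdot\omega'\in\Omega$ is defined because $(\omega,\omega')\in\top_\Omega$, and invoking the locality semigroup axiom~\eqref{eq:semigrouploc} with $U$ the union of the letter sets of $w$ and $w'$---which lies in $\{\omega\}^\top\cap\{\omega'\}^\top$---yields that $\omega\cdot\omega'$ is independent of every such letter, so $(\omega\cdot\omega')(w\star_\lambda w')$ is a sum of proper words by the inductive hypothesis.

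For (ii), bilinearity is automatic as $\star_\lambda$ is defined on basis elements and extended linearly. The compatibility requirement $m((U^\top\times U^\top)\cap\top)\subseteq U^\top$ of the locality algebra structure follows from a key observation built into the construction: every word appearing in $w_1\star_\lambda w_2$ consists solely of letters of $w_1$, letters of $w_2$, or composites of one letter from each. A further application of~\eqref{eq:semigrouploc} then ensures that all such composite letters remain polar to $U$ whenever both $w_1$ and $w_2$ are, whence so is every resulting word.

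For (iii), I induct on $|u_1|+|u_2|+|u_3|$ for triples $(u_1,u_2,u_3)$ of mutually independent proper words. The case where some $u_i=1$ is trivial. Otherwise, writing $u_i=\omega_i v_i$ and applying~\eqref{eq:def_star_lambda_ind} twice on each side yields nine summands per side; grouping them by their leading letter, one finds three whose head is a single $\omega_i$, three whose head is a composite $\omega_i\cdot\omega_j$ weighted by $\lambda$, and one whose head is a double composite weighted by $\lambda^2$, namely $(\omega_1\cdot\omega_2)\cdot\omega_3$ on the left-bracketed side and $\omega_1\cdot(\omega_2\cdot\omega_3)$ on the right-bracketed side. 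Matching corresponding terms requires (a) the associativity of $\cdot$ on $\Omega$ to equate the $\lambda^2$-coefficients, and (b) the inductive hypothesis on strictly shorter triples of proper words to equate the remaining arguments. The well-definedness arguments of (i) apply at each step to ensure that every composite produced is defined in $\Omega$ and satisfies the required independence relations.

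The principal obstacle is not the combinatorial identity underlying associativity---this is essentially the classical quasi-shuffle argument---but rather the bookkeeping required to verify, at every step of every induction, that the composite letters $\omega_i\cdot\omega_j$ and iterated composites thereof lie in the polar sets dictated by the locality structure, so that all subsequent products in the recursion are actually defined. This is precisely what the locality semigroup axiom~\eqref{eq:semigrouploc} was built for, and the proof ultimately reduces to invoking it systematically at each level of each recursion.
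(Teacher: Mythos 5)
Your proposal is correct and follows essentially the same route as the paper: the heart of both arguments is the induction on the total length of the two words, carrying along the strengthened hypothesis that $w\star_\lambda w'$ stays in the polar set of any $U$ independent of both factors, with the locality magma axiom~\eqref{eq:semigrouploc} supplying the independence of the composite letter $\omega\cdot\omega'$. The only difference is that you spell out the associativity induction (the seven-group matching of the nine summands), whereas the paper dispenses with it by observing that $\star_\lambda$ is the restriction of the classical quasi-shuffle product to independent pairs, so associativity is inherited once well-definedness is established.
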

\begin{proof}
Simply put, $\star_\lambda$ is the restriction of the usual quasi-shuffle product on $\bfk \calw_\Omega$ to independent pairs of words. We give some details to ensure that the restriction is well-defined.

 \begin{claim}
 Let $w$ and $w'$ be independent words. Then
$w\star_\lambda w'$ in Eq.~\eqref{eq:def_star_lambda_ind}
is well defined. For any subset $U\subseteq \bfk \calw_\Omega$,  if $w$ and $w'$ are in $U^\top$, then $w\star_\lambda w'$ is in $U^\top$.
 \label{cl:sh}
 \end{claim}

 We prove the claim by induction on the sum $n$ of the lengths of $w$ and $w'$. If $n=0$ we have $w=w'=1$ and $w\star_\lambda w'=1$ is well defined and is independent of any subset $U$.

For $n=1$, we have $w=\omega \in \Omega$ and $w'=1$ or $w'=\omega ' \in \Omega$ and $w=1$. In each case the product is well defined and independent of any subset that is independent of $w$ {and} $w'$.

Assume that we have shown that the claim has been verified for independent words whose sum of lengths is equal or less than $n\geq1$. Let $w$, $w'$ be two independent words of sum of lengths $n+1$. If $w=1$ or $w'=1$, then
the claim is true by definition. Otherwise  we can write
$w=\omega u$ and $w'=\omega' u'$. 
By the induction hypothesis, the term $ u \star_\lambda w'$ (resp. $ w \star_\lambda u'$, resp.  $u\star_\lambda u'$) is well-defined and is independent of any subset $U$ which are independent of
$u$ and $w'$ (resp. $w$ and $u'$, resp. $u$ and $u'$). Take $V$ to be any subset of $\bfk\calw_\Omega$ independent of $w$ and $w'$.
Then $V\cup\{\omega\}$ is independent of $u$ and $w'$. By the induction hypothesis, $V\cup\{\omega\}$ is independent of $(u\star_\lambda w')$. Then $\omega(u\star_\lambda w')$ is well defined and is independent of $V$. Likewise, the
other two terms on the right hand side of Eq.~\eqref{eq:def_star_lambda_ind}  are well-defined and are independent of $V$. This completes the induction.
\end{proof}

\begin{rk}
 Taking $\lambda=1$ in the above definition gives a locality version of the usual stuffle product. Taking $\lambda=0$ gives the shuffle product.
\end{rk}

\subsection{Free \loc commutative Rota-Baxter algebras}

 We quote further concepts and {results from \cite{CGPZ2}}

\begin{defn}
		A linear operator $P: A\to  A$ on a  commutative \loc algebra $(A,\top)$ over a field $\bfk$ is
		a {\bf \loc Rota-Baxter     operator} of
		{\bf weight} $\lambda\in \bfk$ if it is a \loc morphism, independent of $\Id_A$, and satisfies the following {\bf \loc Rota-Baxter relation}:
		\begin{equation}\label{eq:PRotabaxter}
		P({ a})\, P({ b})= P(P({ a})\,{ b})+ P({ a}\, P({ b})) +\lambda\, P({ a}\,{ b}) \quad \forall (a,b)\in \top.
		\end{equation}
		We call the triple $(A,\top, P)$ a {\bf \loc   Rota-Baxter algebra}.
		
		Let $(A,\top_A,P_A)$ and $(B,\top_B,P_B)$ be two \loc Rota-Baxter algebras of weight $\lambda$. A map $f:A\mapsto B$ is a {\bf Rota-Baxter
		morphism} {if} it is a local algebra morphism such that $f\circ P_A=P_B\circ f$.
	\end{defn}

We now take $\Omega$ to be a locality monoid with unit $1_\Omega$. Note that $1_\Omega$ is not the empty word $1$ of the free monoid $\calw_\Omega$ or the free locality monoid $\calw_{\Omega,\top_\Omega}=(\calw_\Omega, \top_{\calw_\Omega})$.
To avoid confusion, we sometimes use $\sqcup$ to denote the concatenation product in $\calw_\Omega$ in contrast to the product $\cdot$ in $\Omega$. Thus for example, for $w\in \calw_\Omega$, we have $1 \sqcup w=w$ but $1_\Omega \sqcup w \neq w$. Also for $\omega\in \Omega$, we have $1_\Omega \cdot \omega =\omega$, but $1\cdot \omega$ is not defined.

Let $A:=A_\Omega:=\bfk\, \Omega$ be the semigroup algebra. Then $1_A=1_\Omega$.
Denote $\calw_\Omega^*=\calw_\Omega \backslash \{1\}$.

Let us recall two definitions (adapted here to the \loc setting) of \cite[page 93]{G}.
\begin{defn}
Let $(\Omega,\top_\Omega, 1_\Omega)$ be a \loc commutative monoid and let $A:=\bfk\,\Omega$. Let $(A,\top,m_A,1_A)$ be the resulting \loc commutative algebra over $\bfk$. called the {\bf locality monoid algebra} on $(\Omega,\top)$. Denote
$$\sha(A)=\bfk\, \calw_\Omega^*\quad \text{and} \quad \sha_\top(A):=\bfk\,\calw_{\Omega,\top}^*.$$
Define the linear map
 \begin{equation}\label{eq:PA}
 P_A:\sha_\top(A)\to \sha_\top(A), \quad P_A(w) := 1_\Omega w=1_\Omega\sqcup w, w\in \calw_{\Omega,\top}^*.
 \end{equation}

 Let $\diamond_\lambda:\calw_{\Omega,\top}^* \times_{\top_{\calw_\Omega}} \calw_{\Omega,\top}^*\longrightarrow \sha_\top(A)$ be the map defined by
 \begin{equation*}
  (a\sqcup w)\diamond_\lambda(b\sqcup w'):=(a\cdot b)\sqcup (w\star_\lambda w')
 \end{equation*}
 for any proper words $a\sqcup w$, $b\sqcup w'$ in $\calw_{\Omega,\top}$.
By the definition of $\otimes_{\top_{\calw_\Omega}}$ in~\cite{CGPZ1}, $\diamond_\lambda$ extends by linearity to a locality linear map
$$\diamond_\lambda: \sha_\top(A)\otimes_{\top_{\calw_{\Omega}}} \sha_\top(A)\longrightarrow \sha_\top(A).$$
\end{defn}

\begin{rk}
 Since $1_A\in A^\top$ we have $1_A\in(\calw_A^*)^{\top_{\calw_A}}$, therefore $P_A(\calw_{A,\top}^*)\subseteq \calw_{A,\top}^*$ as claimed on the
 definition.
\end{rk}

\begin{thm} \label{thm:shuffle_free}
 Let $(\Omega,\top,m_\Omega,1_\Omega)$ be a locality commutative monoid and let $(A=\bfk\,\Omega,\top,m_A,1_A)$ be the resulting locality commutative monoid algebra over $\bfk$. The quadruple $(\sha_\top(A),\top_{\calw_\Omega},\diamond_\lambda,P_A)$, together with the natural embedding
 $i_A:A \hookrightarrow \sha_\top(A)$ is a free \loc commutative Rota-Baxter algebra of weight $\lambda$ over $A$. More precisely, for any
 \loc commutative Rota-Baxter algebra $(R,\top_R,m_R,P)$ of weight $\lambda$ over $\bfk$ and any \loc algebra homomorphism $f:A\to R$ there is a unique
 homomorphism $\bar f:(\sha_\top(A),\top_{\calw_\Omega},\diamond_\lambda,P_A)\to(R,\top_R,m_R,P)$ of \loc Rota-Baxter algebra of weight $\lambda$  such that
 $f=\bar f\circ i_A$.
\end{thm}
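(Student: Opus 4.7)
The plan is to construct the universal map $\bar f$ explicitly by induction on word length, verify it satisfies the three required structural conditions, and deduce uniqueness from a normal-form argument.

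First, every non-empty proper word $w=\omega_1\sqcup\cdots\sqcup\omega_n$ can be reconstructed from its letters using only $i_A$, $P_A$ and $\diamond_\lambda$, namely
\[w \;=\; i_A(\omega_1)\diamond_\lambda P_A\bigl(i_A(\omega_2)\diamond_\lambda P_A(\cdots \diamond_\lambda P_A(i_A(\omega_n)))\bigr),\]
as one checks by unfolding $\diamond_\lambda$ and $P_A$ through the identity $(\omega\cdot 1_\Omega)\sqcup(1\star_\lambda v)=\omega\sqcup v$. This forces uniqueness: any morphism compatible with the three structural conditions is compelled to take the value
\[\bar f(\omega_1\sqcup\cdots\sqcup\omega_n)=f(\omega_1)\, P\bigl(f(\omega_2)\, P(\cdots P(f(\omega_n)))\bigr)\]
on every basis element. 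We adopt this formula as the definition of $\bar f$ (extended by linearity) and verify the three requirements.

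Compatibility with $P_A$ is immediate from unitality of $f$: $\bar f(P_A(w))=\bar f(1_\Omega\sqcup w)=f(1_\Omega)\, P(\bar f(w))=P(\bar f(w))$. Locality of $\bar f$ is shown by induction on word length: if $w\top_{\calw_\Omega} w'$, every letter of $w$ is independent in $\Omega$ of every letter of $w'$; since $f$ is a \loc map and since $P$ and $m_R$ are \loc maps whose polar sets are closed under the respective operations, the images $\bar f(w)$ and $\bar f(w')$, obtained by nesting $P$ and $m_R$, lie in each other's polar set.

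The main obstacle is multiplicativity $\bar f(w\diamond_\lambda w')=\bar f(w)\cdot\bar f(w')$ for independent $w,w'$. Writing $w=\omega\sqcup u$ and $w'=\omega'\sqcup u'$, unfolding the definition of $\diamond_\lambda$ and using multiplicativity of $f$ reduces the identity to the key claim
\[P\bigl(\bar f(u\star_\lambda u')\bigr)\;=\;P(\bar f(u))\cdot P(\bar f(u')),\]
which we prove by induction on $|u|+|u'|$. For single letters $u=\omega_1$, $u'=\omega'_1$, the recursion \eqref{eq:def_star_lambda_ind} expands $P(\bar f(u\star_\lambda u'))$ as
\[P\bigl(f(\omega_1)P(f(\omega'_1))+P(f(\omega_1))f(\omega'_1)+\lambda f(\omega_1)f(\omega'_1)\bigr),\]
which equals $P(f(\omega_1))\cdot P(f(\omega'_1))$ by the \loc Rota-Baxter relation \eqref{eq:PRotabaxter}, applicable because $f(\omega_1)\top_R f(\omega'_1)$ by locality of $f$. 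The inductive step splits $u\star_\lambda u'$ along the leading letters via \eqref{eq:def_star_lambda_ind}, applies the induction hypothesis to the three resulting smaller quasi-shuffles, and closes with the Rota-Baxter identity exactly as in the base case; locality of $P$ and $m_R$ guarantees at each stage that the pairs to which \eqref{eq:PRotabaxter} is applied lie in $\top_R$. Combined with the expansion of $\bar f(w\diamond_\lambda w')$ and commutativity in $R$, the key claim yields the desired multiplicativity, completing the proof.
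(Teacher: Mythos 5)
Your construction of $\bar f$, the uniqueness argument via the normal form $w=i_A(\omega_1)\diamond_\lambda P_A(i_A(\omega_2)\diamond_\lambda\cdots)$, and the verification of $\bar f\circ P_A=P\circ\bar f$ all coincide with the paper's proof. Your treatment of multiplicativity is also essentially the paper's, just repackaged: you reduce to the key claim $P(\bar f(u\star_\lambda u'))=P(\bar f(u))\,P(\bar f(u'))$ and run the induction directly off the quasi-shuffle recursion \eqref{eq:def_star_lambda_ind}, whereas the paper writes $w_a\diamond_\lambda w_b=(a\cdot b)\diamond_\lambda P_A(w)\diamond_\lambda P_A(w')$ and invokes the Rota--Baxter identity of $P_A$ inside $\sha_\top(A)$ before transferring to $R$. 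Since $P_A(w)\diamond_\lambda P_A(w')=1_A\sqcup(w\star_\lambda w')$, the two inductions are the same computation; yours has the mild advantage of not needing associativity of $\diamond_\lambda$ along the way. (Do make explicit that when one of $u,u'$ is empty the reduction step is settled directly, since the key claim as stated only makes sense for nonempty words.)

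There is one genuine omission. The theorem asserts that the quadruple $(\sha_\top(A),\top_{\calw_\Omega},\diamond_\lambda,P_A)$ \emph{is} a \loc commutative Rota--Baxter algebra of weight $\lambda$, not merely that it has the universal property, and your proposal never verifies this: one must check that $P_A$ is independent of the identity (which follows from $1_A\in A^\top$), and that $P_A$ satisfies the \loc Rota--Baxter relation for $\diamond_\lambda$, which is a short computation unfolding $P_A(w_a)\diamond_\lambda P_A(w_b)=1_A\sqcup(w_a\star_\lambda w_b)$ against the recursion for $\star_\lambda$. This is step (I) of the paper's proof and should be added; the rest of your argument is sound.
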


\begin{rk}
 This result is the \loc version, and hence generalisation, of  \cite[Theorem 3.2.1.]{G}. The proof in the locality setup is close to the one of \cite{G}, only making sure at every step that the products are well-defined in view of the partial algebra structures.
\end{rk}

\begin{proof}
In this proof, we shall write $\diamond$ instead of $\diamond_\lambda$ to improve readability.

Throughout this proof, we will repeatedly use that for any pair of nonempty proper words $w$, $w'$ such that $w\top_{\calw_{\Omega}}w'$, each letter in $w$ is independent (in $\Omega$) of each of the letters in
$w'$. In particular, the first letter of $w$ is independent of the first letter of $w'$, and thus $a\cdot b$ is well-defined. Furthermore,
by the semigroup property, $a\cdot b$ is independent of all the words appearing in $w\star_\lambda w'$, showing that $(a\cdot b)\sqcup (w\star_\lambda w')$ is a proper word. We carry out the rest of the proof in several steps.
\smallskip

\noindent
{\bf (I) Well-definedness of $\diamond$:} First, we show that $(\sha_\top(A),\top_{\calw_\Omega},\diamond_\lambda,P_A)$ is a \loc commutative Rota-Baxter algebra of weight $\lambda$ over $\bfk$.
  \begin{itemize}
   \item Since $1_A\in A^\top$ we have $1_A\in(\calw_{\Omega,\top}^*)^{\top_{\calw_{\Omega,\top}}}$. Therefore $P_A\top_{\calw_\Omega}\Id_{\calw_{\Omega,\top}^*}$.
   \item $P_A$ is a Rota-Baxter operator of weight $\lambda$ by direct computation as in the proof of~\cite[Theorem 3.2.1]{G}: let
   $w_a,w_b\in\calw_\Omega^*$, such that $w_a\top_{\calw_\Omega}w_b$. We write $w_a=a\sqcup w_a'$ and $w_b=b\sqcup w_b'$ for some $a,b$ in $\Omega$ and
   $w_a',w_b'$ in $\calw_{\Omega,\top}$. Then we have
   \begin{align*}
    & w_a\diamond  P_A(w_b)  = \left[a\sqcup w_a'\right]\diamond\left[(1_A)\sqcup w_b\right]
    = a\sqcup(w_a'\star_\lambda w_b) \\
    &  P_A(w_a)\diamond w_b = \left[1_A\sqcup w_a\right]\diamond\left[b\sqcup w_b'\right] =   b\sqcup(w_a\star_\lambda w_b'),
   \end{align*}
   Then we have
   \begin{align*}
    P_A(w_a)\diamond P_A(w_b) & = (1_A\sqcup w_a)\diamond_\lambda {(1_A\sqcup w_b) }\quad\text{(by definition of }P_A) \\
			      & = 1_A\sqcup(w_a\star_\lambda w_b) \quad\text{(by definition of }\diamond_\lambda) \\
			      & = 1_A\sqcup\Big(a\sqcup(w_a'\star_\lambda w_b) + b\sqcup(w_a\star_\lambda w_b') \\
			    + & \lambda(a\cdot b)\sqcup(w_a'\star_\lambda w_b')\Big) \quad\text{(by definition of }\star_\lambda) \\
			      & = P_A(w_a\diamond P_A(w_b)) + P_A(P_A(w_a)\diamond w_b) + \lambda P_A(w_a\diamond w_b).
   \end{align*}
  \end{itemize}

\noindent
{\bf (II) Construction of $\free{f}$:} To prove the universal property, let us start by explicitly constructing the map $\bar f$. We let $(B,\top_B, {m_B},P)$  be a unital commutative
  \loc Rota-Baxter algebra of weight $\lambda$ and $f:A\to B$ a
  \loc algebra homomorphism. We inductively define $\bar f:\calw_{\Omega,\top}^*\to B$ from:
  \begin{equation*}
   \bar f(a) := f(a)
  \end{equation*}
  for any proper words $a$ of length $1$ (thus for any $a$ in $\Omega$). Then $\bar f=f\circ i_A$ by definition.

  Now, assume $\bar f$ has been defined on words of length between $1$ and $n\geq1$.  The map $\bar f$ restricts to a \loc map on nonempty proper
  words of length up to $n$ so that for any nonempty proper words $w$, $w'$ of length no larger than $n$, we have $w\top_{\calw_\Omega}w'$ implying $\bar f(w)\top_{\calw_\Omega}\bar f(w')$. For any word $a\sqcup w$ of length $n+1$ we
  define
  \begin{equation*}
   \bar f(a\,\sqcup w) := \bar f(a)\,P(\bar f(w)) = f(a)\,P(\bar f(w)).
  \end{equation*}
  This product in $B$ is well-defined combining the induction hypothesis  with  the {fact} that $f$ and $P$  {and   the composition of local maps are all local maps}. It follows  from the induction assumption combined with the semigroup property, that for any pair of nonempty proper words
  $w$, $w'$ of length no larger than $n+1$,  $w\top_{\calw_\Omega}w'$ implies $\bar f(w)\top_{\calw_\Omega}\bar f(w')$   since both
  $f$ and $P$ are local.

  Thus we have defined a locality map $\bar f:\calw_{\Omega,\top}^*\to B$ and hence $\free{f}: \sha_\top(A)\to B$ by linearity extension.
\smallskip

\noindent
{\bf (III) Compatibility of $\free{f}$ with Rota-Baxter operators:} For any nonempty proper word $w$ a direct computation gives
  \begin{equation*}
   \bar f(P_A(w)) = \bar f(1_A\,\sqcup w) = f(1_A)\,P(\bar f(w)) = 1_B\,P(\bar f(w)) = P(\bar f(w)).
  \end{equation*}
  Thus $\bar f\circ P_A = P\circ \bar f$.
\smallskip

\noindent
{\bf (IV) Multiplicativity of $\free{f}$:}
 We now prove that $\bar f$ is a \loc algebra homomorphism, namely that $\bar f(w\diamond w') =\bar f(w)\diamond \bar f(w')$ by induction on
  $ {n+m\geq 2} $, the sum of the lengths of the nonempty words $w$ and $w'$.
  \begin{itemize}
   \item If $n+m=2$, then $n=m=1$  and we write $w=a$, $w'=b$. Then
   \begin{equation*}
    \bar f(a\diamond b)=\bar f(a\cdot b) =f(a\cdot b)=f(a)\,f(b)=\bar f(a)\,\bar f(b),
   \end{equation*}
   where we have used that $f$ is a \loc algebra homomorphism.
   \item Assume now that for any pair of independent nonempty proper words $w,w'$ whose sum of lengths is no larger than $k$, we have
   $\bar f(w\diamond w') =\bar f(w) \bar f(w')$. Let $w_a$ and $w_b$ be any two independent nonempty proper words whose sum of lengths is
   equal to $k+1$.  Since $P_A$ also acts on empty words,  we can
   write $w_a=(a)\sqcup w$ and $w_b=(b)\sqcup w'$. We use the fact that
   \begin{equation*}
    w_a\diamond w_b = {(a\cdot b)\sqcup[w\star_\lambda w'] = (a\cdot b)\diamond\left[1_A\sqcup[w\star_\lambda w']\right] } = (a\cdot b)\diamond P_A(w)\diamond P_A(w')
   \end{equation*}
   where we have used the associativity of $\diamond$. Then
   \begin{align*}
    \bar f(w_a\diamond w_b) & = \bar f((a\cdot b)\diamond P_A(w)\diamond P_A(w')) \\
			    & = \bar f((a\cdot b)\diamond P_A(w\diamond P_A(w') + P_A(w)\diamond w'+\lambda w\diamond w')).
   \end{align*}
 We now use the fact $P_A$ is a \loc Rota-Baxter operator of weight $\lambda$. Noticing that the image under $P_A$ of a word of length $p$ is a word
   of length $p+1$, and that if $w$ is of length $p$ and $w'$ is of length $q$, then $w\diamond q$ is a word of length $p+q-1$. Thus we  can
  use the induction hypothesis to write 
   \begin{align*}
    \bar f(w_a\diamond w_b) & = f(a\cdot b)\,(\bar f\circ P_A)(w\diamond P_A(w') + P_A(w)\,\diamond w'+\lambda \,w\diamond w') \\
			    & = f(a\cdot b)(\,P\circ\bar f)(w\diamond (1_A\sqcup w') + ((1_A)\sqcup w)\diamond w'+\lambda\, w\diamond w').
   \end{align*}
Using once again   the induction hypothesis we deduce that
   \begin{align*}
    \bar f(w_a\diamond w_b) & = f(a)f(b)\,P\left(\bar f(w)\bar f((1_A)\sqcup w') + \bar f((1_A)\sqcup w)\bar f(w')+\lambda \bar f(w)\bar f(w')\right) \\
			    & = f(a)f(b)P\left(\bar f(w) f(1_A)P(\bar f(w')) + f(1_A)P(\bar f(w))\bar f(w')+\lambda \bar f(w)\bar f(w')\right) \\
			    & = f(a)f(b)P\left(\bar f(w)P(\bar f(w')) + P(\bar f(w))\bar f(w')+\lambda \bar f(w)\bar f(w')\right) \\
			    & = f(a)f(b)P(\bar f(w))P(\bar f(w')) \\
			    & = f(a)P(\bar f(w))f(b)P(\bar f(w')) \\
			    & = \bar f(w_a)\,\bar f(w_b).
   \end{align*}
   This ends our induction.
  \end{itemize}

\noindent
{\bf (V) Uniqueness of $\free{f}$:}
  Finally we prove the uniqueness of $\bar f$ by induction. Assume we have two different such maps $\bar f_1$ and $\bar f_2$. Then for any $a$ in $A$, $\bar f_1(a) = f(a) = \bar f_2(a)$. Thus $\bar f_1$ and $\bar f_2$
  coincide on proper words of length $1$. Assume that $\bar f_1$ and $\bar f_2$
  coincide on proper words of length $n\geq1$. Let $a\sqcup w$ be a nonempty proper word of length $n+1$. Since $a\sqcup w=a\diamond P_A(w)$ we get
  \begin{align*}
   \bar f_1(a\sqcup w) & = \bar f_1(a\diamond P_A(w)) \\
			 & = \bar f_1(a)\,\bar f_1(P_A(w)) \\
			 & = \bar f_2(a)\,P(\bar f_1(w)) \\
			 & = \bar f_2(a)\,P(\bar f_2(w))\quad\text{by hypothesis}\\
			 & = \bar f_2(a)\,\bar f_2(P_A(w)) \\
			 & = \bar f_2(a\sqcup w).
  \end{align*}
  Thus $\bar f_1$ and $\bar f_2$ coincide on $\calw_{\Omega,\top}^*$.
\end{proof}

\subsection{The universal property and quasi-shuffle algebras}

Now let us assume that $(\Omega,\top _\Omega,\bullet)$ is a commutative (non necessarily unital) \loc algebra. Then the map $\widehat P^{\calw}$ of Corollary~\ref{coro:defn_widehatW} can be extended linearly as a linear map
$$\widehat P^{\calw}: \bfk\,\calw ^* _{\Omega,\top _{\Omega}}\to \Omega.
$$

\begin{thm} \label{thm:stuffle_charac}
Let $(G,\top_G,\bullet)$ be a commutative \loc monoid. Let $A:=\bfk G$ and let $(A,\top _A,\bullet, 1_A)$ be the corresponding commutative unital \loc algebra. Let $P:A\longrightarrow A$ be a \loc linear map. The following statements are equivalent:
 \begin{enumerate}
  \item $P$ is a \loc Rota-Baxter operator on $A$ of weight $\lambda$.
  \item $\widehat P^{\mathcal W}:  \left(\bfk\mathcal{W}^*_{A,\top _A},\top,\star_\lambda\right) \longrightarrow  (A,\top _A,\bullet)$
  is a morphism of unital algebras.
Namely, for any mutually independent words $w$ and $w'$ we have
 \begin{equation} \label{eq:stuffle_charac}
  \widehat P^{\mathcal W}(w\star_\lambda w') = \widehat P^{\mathcal W}(w)\bullet\widehat P^{\mathcal W}(w').
 \end{equation}
 \end{enumerate}
\end{thm}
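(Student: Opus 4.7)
\medskip

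\noindent\textbf{Proof plan.} The plan is to prove the equivalence by taking care of each direction separately, with the easy direction extracted by evaluating at single letters and the hard direction obtained by induction on word length, using the defining recursion of $\widehat P^{\mathcal W}$ together with the Rota--Baxter identity.

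\medskip

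For $(2)\Rightarrow(1)$, I would specialise the multiplicativity relation \eqref{eq:stuffle_charac} to pairs of words of length one. Given $\omega\top_\Omega\omega'$, the inductive definition \eqref{eq:def_star_lambda_ind} yields
\[
\omega\star_\lambda\omega' \;=\; \omega\,\omega' \;+\; \omega'\,\omega \;+\; \lambda\,(\omega\bullet\omega').
\]
Applying $\widehat P^{\mathcal W}$ to both sides, using the recursive definition $\widehat P^{\mathcal W}(\omega\,\omega')=P(\omega\bullet P(\omega'))$ on the right, and using the assumed multiplicativity $\widehat P^{\mathcal W}(\omega)\bullet\widehat P^{\mathcal W}(\omega')=P(\omega)\bullet P(\omega')$ on the left, I would read off exactly the \loc Rota--Baxter identity \eqref{eq:PRotabaxter} with the two $P(P(a)b)$-type terms; commutativity of $\bullet$ handles the symmetric pair.

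\medskip

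For $(1)\Rightarrow(2)$, I would argue by induction on $n+m$, the sum of the lengths of two independent nonempty proper words $w,w'\in\calw^*_{A,\top_A}$. The base case $n=m=1$ is obtained by exactly reversing the computation above. For the inductive step, I would write $w=\omega u$ and $w'=\omega' u'$, apply the recursion \eqref{eq:def_star_lambda_ind} for $\star_\lambda$, and then apply $\widehat P^{\mathcal W}$ using its own recursive definition. Setting $X:=\widehat P^{\mathcal W}(u)$ and $Y:=\widehat P^{\mathcal W}(u')$, and using the induction hypothesis for the three products $u\star_\lambda(\omega'u')$, $(\omega u)\star_\lambda u'$ and $u\star_\lambda u'$ (all of which have strictly smaller length sum), I obtain
\[
\widehat P^{\mathcal W}(w\star_\lambda w') \;=\; P\!\bigl(\omega\bullet(X\bullet P(\omega'\bullet Y))\bigr) + P\!\bigl(\omega'\bullet(P(\omega\bullet X)\bullet Y)\bigr) + \lambda\,P\!\bigl((\omega\bullet\omega')\bullet(X\bullet Y)\bigr).
\]
On the other hand, applying the \loc Rota--Baxter relation \eqref{eq:PRotabaxter} to the independent pair $(\omega\bullet X,\,\omega'\bullet Y)$ gives
\[
P(\omega\bullet X)\bullet P(\omega'\bullet Y) \;=\; P\!\bigl((\omega\bullet X)\bullet P(\omega'\bullet Y)\bigr) + P\!\bigl(P(\omega\bullet X)\bullet(\omega'\bullet Y)\bigr) + \lambda\,P\!\bigl((\omega\bullet X)\bullet(\omega'\bullet Y)\bigr),
\]
and rewriting using commutativity and associativity of $\bullet$ matches the previous expression term by term. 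Since the left-hand side is $\widehat P^{\mathcal W}(w)\bullet\widehat P^{\mathcal W}(w')$, this closes the induction.

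\medskip

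\noindent\textbf{Main obstacle.} The substantive work is not the algebraic manipulation, which is essentially a bookkeeping exercise, but the verification that every product and every application of $P$ appearing in the induction is admissible for the partial \loc structures. One must check, at each inductive step, that $\omega\top_\Omega X$, $\omega'\top_\Omega Y$, $(\omega\bullet X)\top_A(\omega'\bullet Y)$, and that $P$ may be applied to the resulting elements; this uses that $P$ is \loc and independent of $\Id_A$, the semigroup \loc condition~\eqref{eq:semigrouploc}, and Claim~\ref{cl:sh} guaranteeing that $u\star_\lambda(\omega' u')$ and its analogues stay in the polar set of anything that was independent of $u$ and $\omega' u'$. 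Once the admissibility is granted, the algebraic identity that closes the induction is just the Rota--Baxter relation read symmetrically, and commutativity of $\bullet$ collapses the two ``mixed'' terms into the correct shape.
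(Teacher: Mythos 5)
Your proof is correct, and your $(2)\Rightarrow(1)$ direction (evaluating \eqref{eq:stuffle_charac} on two letters) is exactly the paper's. For $(1)\Rightarrow(2)$, however, you take a genuinely different route. The paper does not induct directly on the recursion for $\star_\lambda$: it invokes the universal property of the free \loc commutative Rota--Baxter algebra $(\sha_\top(A),\diamond_\lambda,P_A)$ established in Theorem~\ref{thm:shuffle_free} to obtain the unique Rota--Baxter morphism $\overline{\Id}:(\bfk\calw^*_{A,\top},\diamond_\lambda,P_A)\to(A,\bullet,P)$ extending $\Id_A$, checks via the defining recursion that $\overline{\Id}\circ P_A=\widehat P^{\calw}$, and then deduces \eqref{eq:stuffle_charac} formally from the identity $P_A(w\star_\lambda w')=P_A(w)\diamond_\lambda P_A(w')$ and the multiplicativity of $\overline{\Id}$. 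Your argument is instead a self-contained induction on the total length, in which the Rota--Baxter relation applied to the pair $(\omega\bullet X,\omega'\bullet Y)$ closes the inductive step; this is in effect a re-derivation, specialised to the morphism $\Id_A$, of the multiplicativity computation carried out in step (IV) of the proof of Theorem~\ref{thm:shuffle_free}. What the paper's route buys is that all the \loc admissibility checks you correctly flag as the main obstacle (that $\omega\top X$, that $(\omega\bullet X)\top_A(\omega'\bullet Y)$, and that $P$ may be applied throughout, using $P\top\Id_A$, the semigroup condition \eqref{eq:semigrouploc} and Claim~\ref{cl:sh}) are already discharged once and for all in the free-object theorem; what your route buys is independence from that theorem and a more transparent, purely recursive argument. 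Two minor points to tidy up: you should note the convention $\widehat P^{\calw}(1)=1_A$ so that the induction hypothesis degenerates correctly when $u$ or $u'$ is empty, and you should spell out (by the auxiliary induction on word length you allude to) that $\widehat P^{\calw}(u)$ lands in the polar set of every element independent of all letters of $u$, which is what legitimises forming $\omega\bullet X$ and applying the Rota--Baxter relation to $(\omega\bullet X,\omega'\bullet Y)$.
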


Theorem~\ref{thm:stuffle_charac} can be expressed as the commutative diagram
$$
\xymatrix{
& ({\bfk}\calw_A^*, \diamond_\lambda)\ar^{\overline{\Id}}[dd] & ({\bfk}\calw_A^*, \star_\lambda) \ar_{P_A}[l] \ar_{\widehat{P}^\calw}[ddl] \ar^{\overline{\Id}}[dd] \\
A \ar^{i}[ur] \ar_{\Id}[dr] &&\\
& (A,\bullet)  & (A,\star_\bullet) \ar_{P}[l]
}
$$

\begin{proof} We carry out the proof in the usual setup dropping the \loc, since the proof in the general case is similar, the various \loc
assumptions ensuring that at every step  the   products are well-defined.

 $(\Longleftarrow)$: Assume that Eq.~\eqref{eq:stuffle_charac} holds. Applying it to two words of length 1 gives the Rota-Baxter relation.

 $(\Longrightarrow)$:
 Now if $P_\lambda:A \to A$ is a \loc Rota-Baxter operator of weight $\lambda$, then $(A,\top,P_\lambda)$ is a \loc Rota-Baxter algebra of weight
 $\lambda$. By Theorem \ref{thm:shuffle_free}
 there is a unique homomorphism
 \begin{equation}\label{eq:idbar}\overline{\Id}: ({\bfk} \calw_{A,\top}^*,\diamond_\lambda, P_A) \longrightarrow (A,\bullet, P_\lambda), \end{equation}  { of Rota-Baxter algebras with $P_A$ as in Eq.~(\ref{eq:PA})},  such that $\overline{\Id}\, i =\Id$ for the inclusion
 $i: A\to {\bfk} \calw_{A,\top}^*$ and the algebra morphism $\Id: A\to A$. Thus the composition $\overline{\Id}\,P_A: \bfk \calw^*_{A,\top} \to A$ is a \loc algebra
 homomorphism.

 Let us check that $\overline{\Id}\,P_A = \widehat{P}^{\calw}$.
 First for $\omega\in A$, we have $$\overline{\Id}\,P_A(\omega)=P\overline{\Id}(i(\omega))=P(\omega).$$
 Next for $\omega\,w$ with $\omega\in A, w\in \calw_{A,\top}^*$, we have
 $\omega w=\omega\diamond_\lambda P_A(w)$. Since $\overline{\Id}$ defined in  {Eq.~}(\ref{eq:idbar}) is a homomorphism of \loc Rota-Baxter algebras, we have
 $$ (\overline{\Id}\,P_A)(\omega\,  w)=P(\overline{\Id}(\omega\diamond_\lambda P_A(w)))=P(\overline{\Id}(\omega)\bullet \overline{\Id}(P_A(w))))
 =P(\omega \bullet (\overline{\Id}P_A)(w)).$$
Thus $\overline{\Id}\,P_A$ satisfies the same defining properties as  $\widehat{P}_\lambda^{\calw^*}$, yielding
 $\overline{\Id}\,P_A(\omega)=\widehat{P}_\lambda^{\calw^*}$.

Since \[P_A( w\star_\lambda w') =1_A\, (w\star_\lambda w')= (1_A\, w)\diamond_\lambda (1_A\, w')=  P_A (w)\diamond_\lambda P_A(w'),\] Eq.~\eqref{eq:stuffle_charac} can be verified as follows:
$$ \widehat{P}_\lambda^\calw(w\star_\lambda w') = \overline{\Id} P_A (w\star_\lambda w')=\overline{\Id}\,(P_A(w)\diamond_\lambda P_A(w')) $$ $$=(\overline{\Id}\,P_A(w))\bullet (\overline{\Id}\,P_A(w')) = \widehat{P}_\lambda^\calw(w)\bullet \widehat{P}^\calw(w').$$

Adapting the proof to  the \loc setup,  shows that Eq.~(\ref{eq:stuffle_charac}) holds for any independent $w, w'\in \calw^*_{A,\top}$.
\end{proof}

\subsection{Factorisation through words}
Let $( \bfk\mathcal{W}_{\Omega, \top_\Omega},\top _{\mathcal F}, ,{C_+},\star_\lambda,1)$ be the locality algebra with quasi-shuffle product of weight $\lambda$, introduced in
Proposition-Definition~\ref{propdef:def_star_lambda}. For $\omega\in \Omega$, the map
$$ \beta^\omega_\calw: \bfk\mathcal{W}_{\Omega, \top_\Omega} \to \bfk\mathcal{W}_{\Omega, \top_\Omega}$$
{defined by $w\mapsto \omega w =\omega\sqcup w$ for all $w\in \calw_{\Omega,\top_\Omega}$, and linearly extended to $\bfk\mathcal{W}_{\Omega, \top_\Omega}$} 
defines an $(\Omega,\top_\Omega)$-action on $\bfk\calw_{\Omega,\top_\Omega}$.
Thus applying Theorem~\ref{thm:liftedphi}, we define

 	\begin{defn} \label{defn:flatening_map}
 	The {\bf $\lambda$-flatening operator}
 			\begin{equation}
 			f_\lambda=\Id ^\sharp:(\mathcal{F}_{\Omega, \top_\Omega},\top _{\mathcal F},  B_+,\cdot,1)\longrightarrow ( \bfk\mathcal{W}_{\Omega, \top_\Omega},\top _{\mathcal F}, ,{C_+},\star_\lambda,1)
 			\end{equation}
 			is the unique morphism of $(\Omega,\top_\Omega)$-operated commutative \loc algebras  defined as in Theorem \ref{thm:liftedphi}. In other words, it is characterised by the following properties
 			\begin{eqnarray*}
 				& f_\lambda(1)=1, \label{eq:flatening1}\\
 				& f_\lambda( B_+^\omega(F,d))=\omega\sqcup f_\lambda(F,d), \label{eq:flatening2} \\
 				& f_\lambda((F_1,d_1)\cdot (F_2,d_2)) = f_\lambda(F_1,d_1)\star_\lambda f_\lambda(F_2,d_2). \label{eq:flatening3}
 			\end{eqnarray*}
 		\end{defn}
 		We state a simple, yet important, result concerning the flatening maps.
 		\begin{lem} \label{lem:proper_proper}
 		 Let $(\Omega,\top,.)$ be a \loc semigroup. Then $f_\lambda$ is a \loc map and maps properly decorated forests to linear combinations of properly
 		 decorated words.
 		\end{lem}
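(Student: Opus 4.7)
The plan is to prove both assertions simultaneously by structural induction on the decorated forest $(F,d)\in\mathcal{F}_{\Omega,\top_\Omega}$, using the three defining properties of $f_\lambda$.

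For the base case $(F,d)=1$, the empty word is trivially proper and trivially independent of everything. For the grafting step, suppose $(F,d)=B_+^\omega(F',d')$ is properly decorated, so that $\omega$ is independent of every decoration of $(F',d')$. By the inductive hypothesis, $f_\lambda(F',d')$ is a linear combination of proper words whose letters are all decorations of $(F',d')$; since $\omega$ is independent of each such letter, $\omega\sqcup w$ is proper for every such summand $w$, so $f_\lambda(B_+^\omega(F',d'))=\omega\sqcup f_\lambda(F',d')$ is a linear combination of proper words.

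For the concatenation step, suppose $(F,d)=(F_1,d_1)\cdot(F_2,d_2)$ with $(F_1,d_1)\top_{\mathcal F}(F_2,d_2)$ and each $(F_i,d_i)$ properly decorated. By induction, each $f_\lambda(F_i,d_i)$ is a linear combination of proper words whose letters lie in $d_i(\mathrm{V}(F_i))$. Since $d_1(\mathrm{V}(F_1))\,\top_\Omega\,d_2(\mathrm{V}(F_2))$, every pair of words arising in $f_\lambda(F_1,d_1)\star_\lambda f_\lambda(F_2,d_2)$ is mutually independent, so the product is well defined. The fact that this product is a sum of proper words is precisely the content of Claim \ref{cl:sh} established inside Proposition-Definition~\ref{propdef:def_star_lambda}: the shuffle terms concatenate proper mutually independent words and thus stay proper, while the stuffle terms introduce letters of the form $\omega\cdot\omega'$ with $\omega\top_\Omega\omega'$, and the locality semigroup compatibility \eqref{eq:semigrouploc} guarantees that such products remain independent of all other letters still to be shuffled in. This is the only place where any real work is needed; everything else is bookkeeping through the recursive definition.

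Finally, for the locality statement, take $(F,d)\,\top_{\mathcal F}\,(F',d')$ in $\mathcal{F}_{\Omega,\top_\Omega}$. By construction of $f_\lambda$ through the three defining identities, every letter appearing in $f_\lambda(F,d)$ (respectively $f_\lambda(F',d')$) is either a decoration of $(F,d)$ (respectively of $(F',d')$) or a $\cdot$-product of such decorations arising from stuffle terms. Since all decorations of $(F,d)$ are independent of all decorations of $(F',d')$, and \eqref{eq:semigrouploc} propagates this independence through products, every letter of $f_\lambda(F,d)$ is independent of every letter of $f_\lambda(F',d')$. Hence $f_\lambda(F,d)\,\top_{\bfk\calw_\Omega}\,f_\lambda(F',d')$, proving that $f_\lambda$ is a \loc map. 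The main subtlety throughout is the same one handled in Claim \ref{cl:sh}: keeping track of the polar sets under the iterated application of $\star_\lambda$; once that combinatorial bookkeeping is in place, the induction runs smoothly.
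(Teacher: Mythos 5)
Your proof is correct and follows essentially the same route as the paper's: induction on the forest structure, with the grafting case handled by the recursive identity $f_\lambda(B_+^\omega(F',d'))=\omega\sqcup f_\lambda(F',d')$ and the concatenation case reduced to the locality-semigroup property of $\star_\lambda$ established in Proposition-Definition~\ref{propdef:def_star_lambda} (Claim~\ref{cl:sh}). You spell out the locality of $f_\lambda$ and the bookkeeping of letters (including the stuffle letters $\omega\cdot\omega'$) more explicitly than the paper does, but the underlying argument is identical.
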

 		\begin{proof}
 		 The proof is an easy induction on the number of vertices of the forests. The statement clearly holds for the empty forest. Assuming
 		 it holds for properly decorated forests with $n$ vertices, let $(F,d)$ be a properly decorated forest with $n+1$ vertices.
 		
 		 If $(F,d)=(F_1,d_1)\,(F_2,d_2)$ with $F_i$ nonempty, we have that $f_\lambda(F_1,d_1)\top_\calw f_\lambda(F_2,d_2)$ by the induction  hypothesis   and the result follows since $(\calw_{\Omega,\top},\top_\calw,\star_\lambda)$ is a \loc semigroup.
 		
 		 If $(F,d)=B_+^\omega(F_1,d_1)$ then the result follows from the induction hypothesis by Eq.~\eqref{eq:flatening2} and the
 		 definition of $\top_\calw$.
 		\end{proof}

\begin{rk}{In what follows,   $(G,\top_G,\bullet)$ is  a commutative \loc monoid and we set as before $A:=\bfk G$, which becomes a unital commutative \loc algebra $(A, \top_A, \bullet, 1_A)$. }We extend $\widehat P^{\calw}$ to
 		$\calw _{A,\top_A}$ by setting $\widehat P^{\calw}(1):=1_A$.
 		\end{rk}

 The subsequent theorem states that an operated algebra homomorphism from the free (commutative) operated algebra to a Rota-Baxter algebra
 factors through the free (commutative) Rota-Baxter algebra.
 \begin{thm}\label{thm:FW}
 Given $\lambda \in \bfk$ and  a commutative \loc algebra $(A, \top, \bullet)$, let  $P: A\to A$ be a linear \loc map, $f_\lambda:{\mathcal F}_{A, \top}\to {\mathcal W}_{A, \top}$ be the flatening \loc morphism of  $(A, \top)$-operated  commutative \loc algebras,
  $i_\calw:  (A, \top)\to{\mathcal W}_{A, \top}$ be the natural morphism of  \loc sets and let $\widehat P:{\mathcal F}_A\to A$ be the  \loc morphism of   $(A,\top)$-operated   \loc algebras   built from $P$.  The
  following statements are equivalent
\begin{enumerate}
\item
 The map $\widehat P^{ \calw }:  \left(\mathcal{W}_{A,\top  },\top_{\calw},\star_\lambda\right) \longrightarrow  (A,\top,\bullet) $ is a morphism of commutative $(A , \top  )$-operated \loc algebras;
 \item $P$ is a  \loc $\lambda$-Rota-Baxter operator,
\item {$\widehat P $ factorises through words, that is $\widehat P= \widehat P^{\mathcal W}\circ f_\lambda$.}
\end{enumerate}  In this case, the following diagram
of \loc maps between \loc sets, whose maps in the r.h.s. triangle    are  \loc morphisms of $(A,\top) $-operated  \loc algebras, commutes.
 		
\begin{equation}
\xymatrix{
A \ar@/^3pc/[rrrr]^P \ar@{^{(}->}[rr]^{\iota_\calf} \ar@{^{(}->}[rrdd]^{\iota_\calw} && \calf_{A,\top} \ar[rr]^{\widehat{P}} \ar[dd]^{f_\lambda} && A  \\
&&&&\\
&&\calw_{A,\top} \ar[uurr]^{\widehat{P}^\calw} &&
}
\label{fig:fact_words}
\end{equation}
 		\end{thm}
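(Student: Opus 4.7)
The plan is to deduce the triangle of equivalences (1)$\Leftrightarrow$(2)$\Leftrightarrow$(3) by combining Theorem~\ref{thm:stuffle_charac} with the universal property of properly decorated forests in Corollary~\ref{coro:existencemapbranching}, and then read off commutativity of the diagram from (3).

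First I would note that the equivalence (1)$\Leftrightarrow$(2) is a direct restatement of Theorem~\ref{thm:stuffle_charac}, applied to the unital commutative \loc algebra $(A,\top,\bullet,1_A)$ and the \loc linear map $P$; no further work is required here.

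Next, to prove (2)$\Rightarrow$(3), I would use (1) to know that $\widehat P^{\calw}:(\bfk\calw_{A,\top},\star_\lambda)\to (A,\bullet)$ is a morphism of $(A,\top)$-operated \loc algebras, and combine it with the flatening morphism $f_\lambda:\calf_{A,\top}\to \bfk\calw_{A,\top}$, which is a morphism of $(A,\top)$-operated commutative \loc algebras by Definition~\ref{defn:flatening_map}. The composite $\widehat P^{\calw}\circ f_\lambda:\calf_{A,\top}\to A$ is then a morphism of $(A,\top)$-operated \loc algebras. Checking the action on a single-vertex tree $B_+^{a}(\emptyset)$ one finds
\[\widehat P^{\calw}\circ f_\lambda(B_+^a(\emptyset))=\widehat P^{\calw}(a\sqcup 1)=\widehat P^{\calw}(a)=P(a)=\widehat P(B_+^a(\emptyset)),\]
so $\widehat P^{\calw}\circ f_\lambda$ satisfies the defining characterisations (\ref{eq:identitybranched})--(\ref{Bbranched}) of the branched $P$-map. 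The uniqueness clause in Corollary~\ref{coro:existencemapbranching} then forces $\widehat P=\widehat P^{\calw}\circ f_\lambda$.

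For the reverse direction (3)$\Rightarrow$(2), the idea is to feed the identity a forest on exactly two vertices. Take $a,b\in A$ with $a\top b$, and apply (3) to the forest $B_+^a(\emptyset)\cdot B_+^b(\emptyset)$. The left hand side equals $\widehat P(B_+^a(\emptyset))\,\widehat P(B_+^b(\emptyset))=P(a)\,P(b)$. On the right, using (\ref{eq:flatening3}) and the defining relation (\ref{eq:def_star_lambda_ind}) of the quasi-shuffle, one gets
\[\widehat P^{\calw}(a\star_\lambda b)=\widehat P^{\calw}(a\,b)+\widehat P^{\calw}(b\,a)+\lambda\,\widehat P^{\calw}(a\cdot b)=P(a\bullet P(b))+P(b\bullet P(a))+\lambda\,P(a\cdot b),\]
which is exactly the \loc Rota-Baxter identity (\ref{eq:PRotabaxter}); \loc-ness of $P$ and independence of $P$ from $\Id_A$ come from the corresponding properties of $\widehat P^{\calw}$ and $f_\lambda$. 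Finally, commutativity of diagram (\ref{fig:fact_words}) is precisely the identity in (3) together with the defining relations $P=\widehat P\circ\iota_\calf$ and $\widehat P^{\calw}\circ\iota_\calw=P$.

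The most delicate point to watch is the (2)$\Rightarrow$(3) step: one must verify that $\widehat P^{\calw}\circ f_\lambda$ respects the $(A,\top)$-operated structure, i.e.\ that it intertwines $B_+^{\omega}$ on $\calf_{A,\top}$ with multiplication-and-then-$P$ on $A$. This uses crucially that $\widehat P^{\calw}$ is multiplicative for $\star_\lambda$ (hence (1), which is where the Rota-Baxter hypothesis is spent) together with the recursive defining relation $\widehat P^{\calw}(\omega\,w)=P(\omega\bullet\widehat P^{\calw}(w))$ of Corollary~\ref{coro:defn_widehatW}. Everything else is either a direct appeal to an earlier universal property or a short calculation on words and forests of length at most two.
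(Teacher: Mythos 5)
Your proposal is correct and follows the paper's architecture: the same three-way decomposition, with (1)$\Leftrightarrow$(2) delegated to Theorem~\ref{thm:stuffle_charac} and (3)$\Rightarrow$(2) obtained by evaluating both sides on a two-vertex forest $B_+^a(\emptyset)\cdot B_+^b(\emptyset)$, exactly as in the paper. The one place you genuinely diverge is (2)$\Rightarrow$(3): the paper runs an explicit induction on the number of vertices, treating the cases $(F,d)=B_+^\omega(F_1,d_1)$ and $(F,d)=(F_1,d_1)(F_2,d_2)$ separately via the recursive definitions of $\widehat P$ and $\widehat P^\calw$ and Theorem~\ref{thm:stuffle_charac}; you instead observe that $\widehat P^\calw\circ f_\lambda$ is a composite of morphisms of $(A,\top)$-operated \loc algebras agreeing with $\widehat P$ on single-vertex trees, and invoke the uniqueness clause of Corollary~\ref{coro:existencemapbranching}. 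Your route is more structural and shorter on the page, at the cost of pushing the inductive work into the universal property (whose uniqueness is itself proved by induction) and of having to verify carefully that $\widehat P^\calw$ really is a morphism of \emph{operated} \loc algebras -- which, as you correctly flag, is where the Rota--Baxter hypothesis enters, via statement (1) for multiplicativity and the recursion $\widehat P^\calw(\omega\, w)=P(\omega\bullet\widehat P^\calw(w))$ of Corollary~\ref{coro:defn_widehatW} for the intertwining of $B_+^\omega$ with $C_+^\omega$. One small point to tighten: checking agreement on single-vertex trees alone does not by itself establish property (\ref{Bbranched}); it is the operated-morphism property of the composite (extended linearly over the linear combinations of words that $f_\lambda$ produces) that does, so that verification should be stated as part of the argument rather than only as a cautionary remark.
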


 		\begin{proof}
 Before proving the equivalence of the assertions, let us briefly comment on the commutativity of  the diagramme. All subdiagrammes outside the r.h.s one commute by construction of the various maps. The commutativity of the r.h.s
 follows from (iii).

 		Assertions (i) and (ii) are equivalent by Theorem \ref{thm:stuffle_charac}. We prove the equivalence of (ii) and (iii).

 $(iii)\Longrightarrow (ii)$: Let us assume that $\widehat P$ factorises through words. Then, for any pair $(\omega,\omega')\in \top _A$ we have on the one hand
 \begin{equation*}
  \widehat P(\bullet_\omega\bullet_{\omega'}) = P(\omega)P(\omega').
 \end{equation*}
 On the other hand we have
 \begin{align*}
  \widehat P(\bullet_\omega\bullet_{\omega'}) & = \widehat P^{\mathcal W}(\omega\star_\lambda\omega') \quad \text{by definition of }f_\lambda \\
					      & = P(\omega P(\omega')) + P(\omega'P(\omega))+\lambda P(\omega\omega')
 \end{align*}
 by definition of $\star_\lambda$ and $\widehat P^{\mathcal W}$, and by linearity of $P$.

$(ii)\Longrightarrow (iii)$: We prove this implication by induction on the number $n$ of vertices of forests. If $n=1$ we directly have $\widehat P(\bullet_\omega) = P(\omega) = \widehat P^{\mathcal W}(\omega)$. Assuming that the result holds
 for all properly decorated forests of at most $n$ vertices, let $(F,d)$ be a properly decorated forest of $n+1$ vertices. If
 $(F,d)= B_+^\omega((F_1, d_1))$ we have
 \begin{align*}
  \widehat P((F,d)) & = P(\omega\bullet\widehat P((F_1, d_1))) \\
			   & = P(\omega\bullet\widehat P^{\mathcal W}(f_\lambda((F_1, d_1)))) \quad\text{ (by the induction hypothesis)} \\
			   & = \widehat P^{\mathcal W}(C_+^\omega(f_\lambda((F_1, d_1)))) \quad\text{ (by definition of } \widehat P^{\mathcal W})\\
			   & = \widehat P^{\mathcal W}(f_\lambda((F,d)))
 \end{align*}
 by definition of $f_\lambda$. If $(F,d)=(F_1,d_1)\,(F_2,d_2)$ with $(F_1, d_1)$ and $(F_2, d_2)$ non-empty we have
 \begin{align*}
  \widehat P((F,d)) & = \widehat P((F_1, d_1))\bullet\widehat P((F_2, d_2)) \\
			   & = \widehat P^{\mathcal W}(f_\lambda((F_1, d_1)))\bullet\widehat P^{\mathcal W}(f_\lambda((F_2, d_2)))  \quad\text{ (by the induction hypothesis)} \\
			   & = \widehat P^{\mathcal W}(f_\lambda((F_1, d_1))\star_\lambda f_\lambda((F_2, d_2))) \quad\text{ (by Theorem \ref{thm:stuffle_charac})} \\
			   & = \widehat P^{\mathcal W}(f_\lambda((F,d)))
 \end{align*}
 by definition of $f_\lambda$. Notice that in both cases, every product is well-defined as we are dealing with \loc maps and by Lemma
 \ref{lem:proper_proper}.
 \end{proof}

 		Let as before  and with the above notations $i_{\mathcal W}$ be the canonical \loc embedding
 		${\mathcal W}_{A, \top}\hookrightarrow {\mathcal F}_{A, \top}$ of words as ladder trees. The following identity follows from the above
 		theorem and the fact that $f_\lambda\circ i_{\mathcal W}=\Id_{\mathcal W}$:
 		\begin{equation}\label{eq:hatPonW} \widehat P\circ i_{\mathcal W}=\widehat P^{\mathcal W}.
 		\end{equation}
 		
Let us recall a result from \cite{CGPZ1}.
 		 {\begin{prop}	{\cite[Proposition 3.22]{CGPZ1}} \label{prop:multpi}
Let $\left(A,\top ,m_A \right)$  be  a \loc algebra. Let $P:A\longrightarrow A$ be a \loc linear idempotent  operator in which case there is a linear decomposition $A=A_1\oplus A_2$ with $A_1={\rm Ker}(\Id-P)$ and
$A_2= {\rm Ker}(P)$ where  $P$ is the projection   onto $A_1$ along $A_2$. The following statements are equivalent:
\begin{enumerate}
\item $P$ is a \loc Rota-Baxter operator; \label{it:irbo1}
\item $A_1$ and $A_2$ are \loc subalgebras of $A$  and $A_1\top A_2$. \label{it:irbo2}
		\end{enumerate} Furthermore, $P$ is a \loc multiplicative map if and only if, in addition to Items~(\ref{it:irbo1}) and (\ref{it:irbo2}), $A_2$ is a \loc ideal of $A$.
\end{prop}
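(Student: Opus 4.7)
The strategy is to translate the Rota–Baxter identity into closure and independence conditions on the two summands of the decomposition $A = A_1 \oplus A_2$ induced by the idempotent $P$, exploiting that $P$ is the projection onto $A_1$ along $A_2$.

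For $(\mathrm{i})\Rightarrow(\mathrm{ii})$, I would apply the Rota–Baxter relation $P(a)P(b) = P(P(a)b) + P(aP(b)) + \lambda P(ab)$ successively to independent pairs drawn from $A_1 \times A_1$, from $A_2 \times A_2$, and from $A_1 \times A_2$. Using $P|_{A_1} = \Id$, $P|_{A_2} = 0$ and idempotency $P^2 = P$, the first case yields a scalar identity that forces $P(a_1 b_1) \in A_1$ and hence $a_1 b_1 \in A_1$; the second case yields $P(a_2 b_2) = 0$ and thus $a_2 b_2 \in A_2$. The independence $A_1 \top A_2$ I would derive from combining $P \top P$ (built into $P$ being a locality linear map) and $P \top \Id_A$ (built into the locality Rota–Baxter definition), invoking the linear closure of polar subsets, which yields $P \top (\Id - P)$: since every $a_1 \in A_1$ has the form $P(x)$ and every $a_2 \in A_2$ has the form $(\Id - P)(y)$, this transfers to the relation $A_1 \top A_2$.

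For $(\mathrm{ii})\Rightarrow(\mathrm{i})$, given $a \top b$ I would decompose $a = a_1 + a_2$, $b = b_1 + b_2$ with $a_i, b_i \in A_i$. The hypothesis $A_1 \top A_2$ together with the subalgebra property ensures all four products $a_i b_j$ are well-defined in the locality algebra. A direct expansion using $P(a) = a_1$, $P(b) = b_1$, together with closure of $A_1$ and $A_2$ under their partial products, reduces both sides of the Rota–Baxter identity to the same linear combination of the $a_i b_j$, establishing $(\mathrm{i})$. For the multiplicativity refinement, still under $(\mathrm{ii})$, one computes $P(ab) = P(a_1 b_1 + a_1 b_2 + a_2 b_1 + a_2 b_2)$, which equals $P(a)P(b) = a_1 b_1$ exactly when the three cross products $a_1 b_2$, $a_2 b_1$, $a_2 b_2$ lie in $\ker P = A_2$; combined with $A_2 A_2 \subseteq A_2$ (already available in $(\mathrm{ii})$), this is precisely the statement that $A_2$ is a locality ideal of $A$, while conversely the ideal condition implies the projection formula for $P(ab)$.

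The main obstacle will be the meticulous tracking of the locality relation at each intermediate step: verifying that every product $a_i b_j$ appearing in a decomposition is actually defined, i.e.\ that the relevant pair lies in $\top$, and that the invocations of $P \top P$, $P \top \Id_A$ and $P \top (\Id - P)$ are applied to the correct arguments. A secondary subtlety is the role of the weight $\lambda$: the scalar factor $(2+\lambda)$ emerging in the computation on $A_1 \times A_1$ must be handled with care to ensure that $a_1 b_1 \in A_1$ is a genuine closure statement rather than a trivial consequence of a vanishing coefficient.
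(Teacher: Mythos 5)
The paper does not actually prove this statement---it is imported verbatim from \cite[Proposition 3.22]{CGPZ1}---so your argument has to stand on its own. The telescoping computation at its core is the right one, but two steps fail as written. The serious one is your derivation of $A_1\top A_2$. The independence $P\top(\Id-P)$ of maps is a \emph{conditional} statement: if $x\top y$, then $P(x)\top(\Id-P)(y)$. To extract $a_1\top a_2$ for an arbitrary pair $(a_1,a_2)\in A_1\times A_2$ you write $a_1=P(a_1)$ and $a_2=(\Id-P)(a_2)$, but you may only feed the pair $(a_1,a_2)$ into the conditional if you already know $a_1\top a_2$; the argument is circular. Nor can it be repaired with the definitions as recalled in this paper: on $A=\bfk e_1\oplus\bfk e_2$ with the zero product and $\top=(\bfk e_1\times\bfk e_1)\cup(\bfk e_2\times\bfk e_2)\cup(\{0\}\times A)\cup(A\times\{0\})$, the projection $P$ onto $\bfk e_1$ is a locality linear idempotent satisfying $P\top P$, $P\top\Id$ and the Rota--Baxter identity (every instance of it reads $0=0$), yet $e_1\not\top e_2$. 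So $A_1\top A_2$ cannot be squeezed out of the Rota--Baxter axioms alone; it has to enter as part of the standing hypotheses on the decomposition, and in the one place the proposition is used here (Example~\ref{ex:mero}) it is indeed known independently.

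The second issue is the weight, which you flag but do not resolve. The equivalence is a weight $-1$ statement. On a pair $(a_1,b_1)\in A_1\times_\top A_1$ the identity gives $a_1b_1=(2+\lambda)P(a_1b_1)$; applying $P$ once more yields $(1+\lambda)P(a_1b_1)=0$, so unless $\lambda=-1$ you conclude $P(a_1b_1)=0$ and then $a_1b_1=0$, which is degeneracy rather than closure of $A_1$. On $A_2\times_\top A_2$ you get $\lambda P(a_2b_2)=0$, vacuous for $\lambda=0$. (The identity operator on any locality algebra with a nonzero product is an idempotent for which (ii) holds and (i) fails for every $\lambda\neq-1$.) With $\lambda=-1$ fixed, your computations do deliver $a_1b_1=P(a_1b_1)$ and $P(a_2b_2)=0$---note that the intermediate assertion ``$P(a_1b_1)\in A_1$'' is vacuous since $\mathrm{Im}(P)=A_1$; the closure statement is precisely $a_1b_1=P(a_1b_1)$---and your converse direction and the characterisation of multiplicativity via $A_2$ being a locality ideal go through, provided you also justify that for $a\top b$ the pairs $(a_1,b_1)$ and $(a_2,b_2)$ lie in $\top$, which does not follow from $A_1\top A_2$ alone and needs a line of argument using the subspace property of polar sets.
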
}
\begin{coro} Let ${(A,\top _A)}$ be a \loc algebra  and $P: A\to A$ be a {\loc linear} idempotent linear map. The following statements are equivalent.
\begin{enumerate}
\item $A_1:={\rm Ker}(\Id-P)$ and $A_2:= {\rm Ker}(P)$ are \loc subalgebras of $A$  and $A_1\top A_2$.
\item The branched operator  $ \widehat P :  {\mathcal F}_{A,\top _A} \longrightarrow A $   factorises through words.
\end{enumerate}
\end{coro}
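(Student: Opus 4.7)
The plan is to derive this corollary as the direct concatenation of two equivalences already established in the excerpt, with no new computation required. First, I would invoke Theorem~\ref{thm:FW} applied to the given linear locality map $P: A\to A$. That theorem asserts the equivalence between (a) the branched operator $\widehat P : \mathcal{F}_{A,\top_A}\to A$ factorising through words as $\widehat P = \widehat P^{\mathcal W}\circ f_\lambda$, and (b) $P$ being a locality $\lambda$-Rota-Baxter operator on $A$. Thus statement (ii) of the corollary is equivalent to the Rota-Baxter property for $P$.

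Second, I would invoke Proposition~\ref{prop:multpi} applied to the same map $P$, now using that $P$ is idempotent: this yields the direct sum decomposition $A=A_1\oplus A_2$ with $A_1 = \ker(\Id-P)$ and $A_2=\ker(P)$, and $P$ is the projection onto $A_1$ along $A_2$. According to that proposition, $P$ is a locality Rota-Baxter operator if and only if $A_1$ and $A_2$ are locality subalgebras of $A$ with $A_1\top A_2$, which is exactly statement (i). Chaining the two equivalences gives (i)$\iff$(ii).

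The step most likely to require verification is that the weight $\lambda$ appearing in the flatening map $f_\lambda$ in Theorem~\ref{thm:FW} is compatible with the weight implicit in Proposition~\ref{prop:multpi} for an idempotent $P$. A direct computation of $P(a)P(b)=P(P(a)b)+P(aP(b))+\lambda P(ab)$ on $a,b\in A_1$, using $P(a)=a$, $P(b)=b$ and $ab\in A_1$ (given $A_1$ is a subalgebra), forces $\lambda=-1$; so the corollary has to be read with the branched operator and flatening map attached to weight $-1$. With this observation in mind, the proof is essentially a one-line combination of the two preceding results, and no further argument is needed. I would conclude by explicitly writing "$\text{(i)}\Longleftrightarrow P\text{ is a locality Rota-Baxter operator}\Longleftrightarrow\text{(ii)}$" with the two citations above each equivalence arrow.
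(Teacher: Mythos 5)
Your proposal is correct and follows exactly the paper's own proof, which likewise chains Proposition~\ref{prop:multpi} (item (i) $\Leftrightarrow$ $P$ is a locality Rota--Baxter operator) with Theorem~\ref{thm:FW} ($P$ Rota--Baxter $\Leftrightarrow$ $\widehat P$ factorises through words). Your additional observation that idempotency forces the weight $\lambda=-1$ is a correct and worthwhile clarification of a point the paper leaves implicit.
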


 \begin{proof}
   By Proposition \ref{prop:multpi}, the first item is equivalent to $P$ being a Rota-Baxter operator  which  in turn is equivalent to the second item by Theorem \ref{thm:FW}.
    \end{proof}

\begin{ex} \label{ex:mero}
Recall~\cite{CGPZ1} that the  $ {\mathcal M}(\C^\infty)$ of meromorphic germs at zero with linear poles equipped with the
relation $\perp^Q$ induced by an inner product $Q$ is a \loc monoid. In \cite{CGPZ1}, we showed that the inner product $Q$ gives rise to a \loc  projection map $\pi_-^Q:{\mathcal M}(\C^\infty)\longrightarrow {\mathcal M}_-^Q(\C^\infty)$  along the space ${\mathcal M}_+$ of holomorphic germs at zero onto the space ${\mathcal M}_-^Q(\C^\infty)$ of polar germs at zero,  which defines a \loc Rota-Baxter operator.
It follows from the above corollary, that the branched projection map ${\widehat \pi}_-^{Q,{\mathcal F}}: {\mathcal F}_{{\mathcal M}(\C^\infty)}\longrightarrow {\mathcal M}_-^Q(\C^\infty) $ defined on forests decorated by meromorphic germs, factors through  a \loc morphism ${\widehat \pi}_-^{Q,{\mathcal W}}: {\mathcal W}_{{\mathcal M}(\C^\infty)}\longrightarrow {\mathcal M}_-^Q(\C^\infty) $ on words decorated by meromorphic germs.
\end{ex}

\vfill \eject \noindent

\part{Analytic aspects}

In \cite{GPZ3} we studied   $k$-variate meromorphic germs $\sigma(z_1, \cdots, z_k)$ of functions at zero    which form an algebra   generated by compositions $f\circ  \ell$ of  singlevariate meromorphic germs $f(z)$  at zero   composed with non-zero multivariate linear forms $ \ell(z_1, \cdots, z_k)$ on $\C^k$. In the present paper, we consider the algebra  of  symbol-valued $k$-variate meromorphic germs at zero,  generated by compositions $\sigma\circ   \ell$ of symbol-valued singlevariate meromorphic germs $\sigma(z)$  at zero  (the symbols are polyhomogeneous on  some   cone $\Lambda$, here $\R_{\geq 0}$)      composed with non-zero multivariate linear forms $ \ell(z_1, \cdots, z_k)$ on $\C^k$. The latter relate to the former by evaluating the symbol at some point; evaluating a singlevariate meromorphic germ of symbols $\sigma(z)$ at some point $x$ gives rise to a singlevariate meromorphic germ of functions $z\longmapsto \delta_x\circ\sigma(z)$ and a $k$-variate meromorphic germ of symbols $\sigma(z_1, \cdots, z_k)$  gives rise to a $k$-variate meromorphic germ of functions $(z_1, \cdots, z_k)\longmapsto \delta_x\circ\sigma(z_1, \cdots, z_k)$. Letting $x$ tend to $+\infty$, we can build the finite part at infinity, which takes $\sigma(z)$   to a singlevariate meromorphic germ of functions $z\longmapsto \underset{+\infty}{\rm fp}\circ\sigma(z)$, a  $k$-variate meromorphic germ of symbols $\sigma(z_1, \cdots, z_k)$ to a
$k$-variate meromorphic germ of functions $(z_1, \cdots, z_k)\longmapsto \underset{+\infty}{\rm fp}\circ\sigma(z_1, \cdots, z_k)$.   In the context of renormalisation, one can view the composition with non-zero multivariate linear forms as a blow-up to resolve singularities.

The \loc structure $\perp^Q$  on the class of multivariate   meromorphic germs of functions studied in \cite{GPZ3} induces one on the class of multivariate   meromorphic germs of  (polyhomogeneous) symbols defined on    $\Lambda$   via the evaluation map:
\[\sigma\top^Q\sigma':\Longleftrightarrow \left(\delta_x\circ\sigma\right)\, \top^Q\, \left(\delta_x\circ\sigma'\right)\quad \forall x\in \Lambda.\]
We show that the evaluation map $\underset{+\infty}{\rm fp} $   at infinity is a \loc character and that the  integration map and an interpolated summation yield  \loc Rota-Baxter operators on this \loc algebra.

 \section{  \Loc Rota-Baxter operators on   symbols with constrained order}

\subsection{A \loc structure on a class of symbols  with constrained order }

\begin{defn} \label{defn:symbol}
 A smooth function $\sigma:\R_{\geq 0}\to\C$ is called a {\bf symbol} (with constant coefficients) on $\R_{\geq 0}$    if there exists a real number $r$ such that the condition $(C_r)$ below is satisfied.
		\begin{equation}\tag{$C_r$}\label{eq:symbol}
		\forall k\in\Z_{\ge 0},\ \exists D_k \in \R_{> 0}:~\forall x\in\R_{\geq 0}, \ |\partial_x^k\sigma(x)|\leq D_k\langle x\rangle^{r-k}
		\end{equation}
		with $\langle x\rangle:=\sqrt{x^2+1}$.
The set of  symbols on $\R_{\geq 0}$ satisfying the condition $C_r$ is denoted by $\mathcal{S}^r(\R_{\geq 0})$, which is a real vector space.
\end{defn}

Notice that $(r\leq r') \Longrightarrow (C_r \Rightarrow C_{r'})$. So $(r\leq r') \Longrightarrow (\mathcal{S}^r(\R_{\geq 0})\subset \mathcal{S}^{r'}(\R_{\geq 0}))$.

An element of
\begin{equation}\label{eq:Schwartz}
 \cals^{-\infty}(\R_{\geq 0}):= \bigcap_{r\in\R}\cals^r(\R_{\geq 0})
\end{equation}
is called {\bf smoothing}.

\begin{rk} Note that $\cals^{-\infty}(\R_{\geq 0})$ corresponds to the algebra of Schwartz functions on $\R_{\geq 0}$. Thus a symbol is smoothing if and only if it is a Schwartz function.
\end{rk}

By an {\bf excision function} around zero, we mean a smooth function $\chi: \R_{\geq 0}\to \R$ such that  $\chi$ is identically zero in a neighborhood of zero and  identically equal to one outside some interval containing $[0, 1]$\footnote{ {Without loss of generality, we can take an excision function to be identically one outside the unit interval, which we shall do unless otherwise specified}. }. The excision function is there to avoid divergences at zero, the so-called infrared  divergences in physics.
		
\begin{ex} For any complex number $\alpha $, $\langle x\rangle^\alpha $ is a symbol in $\mathcal{S}^{\Re(\alpha)}(\R_{\geq 0})$. For any excision function $\chi$ around zero, and any complex number $\alpha $, $\chi (x)x^\alpha $ is a symbol in $\mathcal{S}^{\Re(\alpha)}(\R_{\geq 0})$. The function $x\longmapsto \log\langle x\rangle$ is an element of $\mathcal{S}^r(\R_{\geq 0})$ for any $r>0$, but it is not an element of $\mathcal{S}^0(\R_{\geq 0})$.
\end{ex}

\begin{prop}
Let $\sigma:\R_{\geq 0}\to\C$ be a symbol. There is at most one pair $(\alpha, \{a_j\})$ with $\alpha\in \C$ and $a_j\in \C, j\in \Z_{\ge 0}, a_0\neq 0,$ such that
\begin{enumerate}
\item $\sigma\in \mathcal{S}^{\Re(\alpha)}(\R_{\geq 0})$, and
\item there is an excision function $\chi$ around zero, such that for any $N\in \Z_{\geq 1}$,  the map
		\begin{equation}\label{eq:sigmaNfixed}x\longmapsto  \sigma_{(N)}^\chi(x):=\sigma(x)-\sum_{j=0}^{N-1}\chi(x)\, a_{ j}\, x^{\alpha-j}
		\end{equation}
	 lies in $\mathcal{S}^{\Re(\alpha)-N}(\R_{\geq 0})$.
	\end{enumerate}
\label{pp:poly}
\end{prop}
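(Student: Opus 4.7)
The plan is to prove uniqueness by contradiction: suppose two distinct pairs $(\alpha,\{a_j\})$ and $(\alpha',\{a_j'\})$ both satisfy the stated conditions for $\sigma$, possibly with different excision functions $\chi$, $\chi'$. Since both excision functions are identically one outside a bounded interval, I would choose $R>0$ with $\chi|_{[R,\infty)}\equiv \chi'|_{[R,\infty)}\equiv 1$. Subtracting the two formulas for $\sigma_{(N)}^{\chi}$ and $\sigma_{(N)}^{\chi'}$ on $[R,\infty)$, and using $\cals^r\subseteq \cals^{r'}$ for $r\le r'$, the whole analysis reduces to the single asymptotic identity
$$
\sum_{j=0}^{N-1}a_j\,x^{\alpha-j}\;-\;\sum_{j=0}^{N-1}a_j'\,x^{\alpha'-j} \;=\; O\!\bigl(x^{\max(\Re\alpha,\Re\alpha')-N}\bigr), \qquad x\to+\infty,
$$
valid for every $N\ge 1$. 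This is the single fact I intend to exploit repeatedly.

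To conclude $\alpha=\alpha'$ I would first take $N=1$ and split into two sub-cases. If $\Re\alpha\ne\Re\alpha'$, say $\Re\alpha>\Re\alpha'$, then $a_0 x^\alpha$ dominates and forces $|a_0|\,x^{\Re\alpha}=O(x^{\Re\alpha-1})$ for large $x$, contradicting $a_0\ne 0$. The delicate case is $\Re\alpha=\Re\alpha'$ with $\beta:=\Im\alpha-\Im\alpha'\ne 0$: factoring out $x^{\Re\alpha}$ reduces the estimate to
$$
\bigl|a_0\,e^{i\,\Im\alpha\,\ln x}-a_0'\,e^{i\,\Im\alpha'\,\ln x}\bigr|=O(x^{-1}).
$$
The squared modulus of the left-hand side equals $|a_0|^2+|a_0'|^2-2\Re\bigl(a_0\overline{a_0'}\,e^{i\beta\ln x}\bigr)$, and since $\beta\ln x$ is equidistributed modulo $2\pi$ as $x\to+\infty$, this quantity comes arbitrarily close to $(|a_0|+|a_0'|)^2>0$ infinitely often, contradicting decay to $0$.

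With $\alpha=\alpha'$ in hand I would induct on $j$ to show $a_j=a_j'$. Taking $N=1$ immediately gives $|(a_0-a_0')\,x^\alpha|=O(x^{\Re\alpha-1})$, hence $a_0=a_0'$. Assuming $a_i=a_i'$ for $i<k$ and using the master identity with $N=k+1$ cancels the already-identified coefficients, leaving $|(a_k-a_k')\,x^{\alpha-k}|=O(x^{\Re\alpha-k-1})$, which forces $a_k=a_k'$.

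The main obstacle is the sub-case $\Re\alpha=\Re\alpha'$, $\Im\alpha\ne\Im\alpha'$: here the two candidate leading terms have identical magnitude $x^{\Re\alpha}$, so cancellation in norm is not ruled out by a crude size comparison; the argument must exploit the non-decay of the oscillation $x^{i\beta}$ on a logarithmic scale. Once this sub-case is dispatched, all remaining steps are elementary comparisons of powers of $x$.
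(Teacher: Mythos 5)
Your proof is correct and follows essentially the same route as the paper's: compare leading asymptotics to force $\alpha=\alpha'$ and $a_0=a_0'$, then induct on $j$ using the $N$-th remainder estimate. The only real difference is that you spell out the sub-case $\Re\alpha=\Re\alpha'$, $\Im\alpha\neq\Im\alpha'$ via the non-decaying oscillation of $x^{i\beta}$, a point the paper's one-line ``this forces $\alpha=\beta$'' leaves implicit.
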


\begin {proof} If there are pairs $(\alpha,\{a_j\})$ and $(\beta,\{b_k\})$ with the given conditions, then $\sigma (x)\,\langle x\rangle^{-\alpha}$ converges to the nonzero constant $a_0$ and $\sigma (x)\,\langle x\rangle^{-\beta}$ converges to the nonzero constant $b_0$. This forces $\alpha=\beta$ and $a_0=b_0$.
Then $a_j=b_j, j\geq 1,$ follows inductively  on $j\geq 1$ from the fact  that $$\sigma_{(N)}^\chi(x)\langle x\rangle ^{N-\alpha}\underset{x\to \infty}{\longrightarrow} a_N$$
for any $N\in \Z_{\geq 0}$.

We further notice that
the coefficients $a_{ j}, j\in \Z_{\geq 0}$ are independent of the particular choice of the excision function $\chi$. Indeed, given another excision function $\chi^\prime$, the difference $\sigma_{(N)}^\chi- \sigma_{(N)}^{\chi^\prime}$
 is a Schwartz function.
\end {proof}

\begin{defn}
For a symbol $\sigma:\R_{\geq 0}\to\C$, if the pair in Proposition~\ref{pp:poly} exists, hence is unique, the symbol is  called a {\bf polyhomogeneous }  (also {\bf classical}) symbol of {\bf order} $\alpha$ with {\bf asymptotic expansion}
$\sum_{j=0}^{\infty } a_{ j}\, x^{\alpha-j}$.
We write
\begin{equation}\label{eq:sigmasim}
\sigma(x)\sim\sum_{j=0}^{\infty } a_{ j}\, x^{\alpha-j}.
\end{equation}
The set of  polyhomogeneous symbols on $\R_{\geq 0}$ of order $\alpha $ will be denoted by $S_{\rm ph}^\alpha (\R_{\geq0})$ and its linear span  by $\cals_{\rm ph}^\alpha (\R_{\geq0})$.
\end{defn}

By definition, $S^\alpha_{\rm ph}(\R_{\ge 0})$ and hence $\cals^\alpha_{\rm ph}(\R_{\ge 0})$ are contained in $\cals^{\Re(\alpha)}(\R_{\ge 0})$ and for any $\alpha\in \C$:
$$  k\in \Z_{\geq 0}\Longrightarrow \cals_{\rm ph}^{\alpha-k} (\R_{\geq0})\subset \cals_{\rm ph}^\alpha (\R_{\geq0}).
$$

\begin{ex} For any nonnegative integer $k$ the set
${\mathcal P}^k(\R_{\geq 0})$   of  real polynomial functions of degree $k$  restricted to $\R_{\geq 0}$ is a subset of  $S_{\rm ph}^k (\R_{\geq0})$.
\end{ex}

Since $S_{\rm ph}^\alpha(\R_{\geq 0})\cdot S _{\rm ph}^\beta(\R_{\geq 0})\subset S_{\rm ph}^{\alpha+\beta}(\R_{\geq 0})$, the union $\cup _{\alpha\in\C}S_{\rm ph}^{\alpha}(\R_{\geq 0})$ forms a
	monoid. Let
		\begin{equation*}
		\cals_{\rm ph} (\R_{\geq 0}):=\sum _{\alpha\in\C}\cals_{\rm ph}^{\alpha}(\R_{\geq 0})
		\end{equation*} 	
  be the {linear span of classical symbols} of all orders, then $\cals_{\rm ph} (\R_{\geq 0})$ is an algebra.

\begin {rk}
\begin{enumerate}
\item
The subspace  ${\mathcal S}_{\rm ph}^{ \Z} (\R_{\geq 0})$ generated by polyhomogeneous symbols of {\bf integer order} is a subalgebra of   ${\mathcal S}_{\rm ph} (\R_{\geq 0})$.
\item
The algebra
$\calp (\R_{\geq 0}):= \cup_{k=0}^\infty {\mathcal P}^k(\R_{\geq 0})$   of  real polynomial functions   restricted to $\R_{\geq 0}$ is a subalgebra of $\cals_{\rm ph} (\R_{\geq0})$.	
\item
 The space $ \cals^{-\infty}(\R_{\geq 0})$ forms a subalgebra of $\cals_{\rm ph} (\R_{\geq 0})$ since it lies in $\cals_{\rm ph}^\alpha (\R_{\geq 0})$ for any $\alpha\in \C$. We have $ \cals^{-\infty}(\R_{\geq 0})=\cap_{\alpha\in \C}\cals_{\rm ph}^\alpha (\R_{\geq 0})= \cap_{k\in \Z}\cals_{\rm ph}^k (\R_{\geq 0})$.
\end{enumerate}
\end{rk}

We now define  classes of polyhomogeneous symbols on $\R_{\geq 0}$ with {\bf constrained order}. Given a subset  $A\subset \C$, we consider the linear
 span
 $\cals_{\rm ph} ^{ A}(\R_{\geq 0}):=\sum_{\alpha\in A}\cals_{\rm ph}^{\alpha}(\R_{\geq 0})$ of polyhomogeneous symbols of order in $A$, and we denote by
$\cals_{\rm ph} ^{\notin A}(\R_{\geq 0}):=\sum _{\alpha\in\, {\C}\setminus A}\cals_{\rm ph}^{\alpha}(\R_{\geq 0})$  the linear span of polyhomogeneous symbols of order not in
 $A$.

For a subset  $A\subset \C$ with $A+\Z=A$, we have a direct sum decomposition \begin{equation}\label{eq:directsum}\cals_{\rm ph} ^{\notin A}(\R_{\geq 0}) \oplus \cals_{\rm ph} ^{ A}(\R_{\geq 0}) =\cals_{\rm ph}  (\R_{\geq 0}).\end{equation}
When specializing to $A=\Z$, let
\begin{equation}\label{eq:SigmaRgeq0}\Sigma (\R_{\geq 0}):= \cals_{\rm ph}^{\notin \Z}(\R_{\geq 0})+ {\mathcal P}(\R_{\geq 0});\quad \Sigma(\Z_{\geq 0}):=\{\sigma\vert_{\Z_{\geq 0}}\,|\, \sigma\in \Sigma (\R_{\geq 0})\}.\end{equation}

 \begin{defn}
 For any $A\subset \C$,  consider a relation $\top_A$ on $\cals_{\rm ph}(\R_{\ge 0})$ by
 \begin{equation}\label{eq:orderloc}
  \sigma\top_A\tau\Longleftrightarrow \sigma\cdot \tau\in \cals_{\rm ph} ^{\notin A}(\R_{\geq 0}).
 \end{equation}
More precisely, let $\sigma =\sum \sigma _i$, $\tau=\sum \tau _j$ with $\sigma _i\in S_{\rm ph} ^{\alpha _i}(\R_{\geq 0})$, $\tau _j\in S_{\rm ph} ^{\beta _j}(\R_{\geq 0})$. Then $\sigma\top_A \tau$ means $\alpha _i+\beta _j -\Z_{\ge 0} \cap A=\emptyset$.
\end{defn}

\subsection{The finite part at infinity  }
A symbol in $\cals_{\rm ph} (\R_{\geq 0})$ lies in $L^1(\R_{\geq 0})$ if it is a linear combination of polyhomogenerous symbols of orders with negative real parts,
 in which case we have $ \underset{x\to +\infty}{\lim} \sigma=0 $  as a consequence of (\ref{eq:symbol}).
So,  for  a   symbol  $\sigma $  in $\cals _{\rm ph} (\R_{\geq 0})$  with polyhomogeneous asymptotic expansion given by Eq.~(\ref {eq:sigmasim}), \ { with the notations of  Eq.~\eqref{eq:sigmaNfixed}} we have  $$N>\Re(\alpha) \Longrightarrow \underset{x\to +\infty}{\lim} \sigma_{(N)}^\chi=0. $$
The following definition   taken from  \cite{P1} was also used in \cite{MP}.
\begin{defn}  For  a   symbol  $\sigma $  in $S_{\rm ph}^\alpha (\R_{\geq 0})$ with polyhomogeneous asymptotic expansion {given by Eq.~(\ref {eq:sigmasim})} we set
\begin{equation}\label{eq:finitepart_const}
 \underset{+\infty}{\rm fp} \sigma:=\sum_{j=0}^\infty a_{j}\, \delta_{\alpha- j, 0},
\end{equation}
(with $\delta_{i,0}$ the Kronecker symbol) called the {\bf finite part at infinity} of $\sigma$. Then $\underset{+\infty}{\rm fp}$ can be viewed as a map from $S^\alpha (\R_{\geq 0})$ to $\C$, we extend it by linearity to ${\mathcal S}_{\rm ph} (\R_{\geq 0})$, and we call it the finite part (of infinity) map. We write it $\underset{x\to+\infty}{\rm fp} $ whenever we want to stress the dependence in $x$.
\end{defn}

\begin{rk} The sum  on the r.h.s. is clearly finite since it consists of at most one term, which we refer to as the {\bf constant term}.
\end{rk}

\begin{lem}\label{lem:fpzero} The kernel of the finite part map contains
$\cals_{\rm ph}^{\notin \Z_{\geq 0}} (\R_{\geq 0})$.
\end{lem}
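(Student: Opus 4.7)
The plan is to unwind the definitions. Suppose $\sigma \in \cals_{\rm ph}^{\notin \Z_{\geq 0}}(\R_{\geq 0})$. By the definition of this space as $\sum_{\alpha \in \C \setminus \Z_{\geq 0}} \cals_{\rm ph}^\alpha(\R_{\geq 0})$, we may write $\sigma$ as a finite sum $\sigma = \sum_{i} \sigma_i$ with each $\sigma_i \in \cals_{\rm ph}^{\alpha_i}(\R_{\geq 0})$ for some $\alpha_i \in \C \setminus \Z_{\geq 0}$. Since $\underset{+\infty}{\rm fp}$ was extended by linearity from $S^\alpha(\R_{\geq 0})$ to $\cals_{\rm ph}(\R_{\geq 0})$, it suffices to verify that $\underset{+\infty}{\rm fp}\, \sigma_i = 0$ for each $i$.

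So fix $\alpha \in \C \setminus \Z_{\geq 0}$ and $\tau \in \cals_{\rm ph}^\alpha(\R_{\geq 0})$ with asymptotic expansion $\tau(x) \sim \sum_{j=0}^\infty a_j\, x^{\alpha - j}$. By definition,
\[
\underset{+\infty}{\rm fp}\, \tau = \sum_{j=0}^\infty a_j\, \delta_{\alpha - j, 0}.
\]
The key observation is that $\alpha - j = 0$ for some $j \in \Z_{\geq 0}$ is equivalent to $\alpha \in \Z_{\geq 0}$. Since $\alpha \notin \Z_{\geq 0}$ by hypothesis, every Kronecker symbol $\delta_{\alpha - j, 0}$ vanishes, hence $\underset{+\infty}{\rm fp}\, \tau = 0$. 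Summing over $i$ and invoking linearity yields $\underset{+\infty}{\rm fp}\, \sigma = 0$, which is the claim.

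There is no genuine obstacle here: the statement is essentially a reading of the formula \eqref{eq:finitepart_const}. The only mild care needed is to remember that $\cals_{\rm ph}^\alpha(\R_{\geq 0})$ denotes the \emph{linear span}, so a general element has the leading order $\alpha$ and lower-order terms whose exponents $\alpha - j$ differ from $\alpha$ only by nonnegative integer shifts, which is exactly why the condition $\alpha \notin \Z_{\geq 0}$ already suffices to kill every Kronecker delta in the expansion.
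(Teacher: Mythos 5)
Your proof is correct and follows essentially the same route as the paper's: both reduce by linearity to a single polyhomogeneous symbol of order $\alpha\notin\Z_{\geq 0}$ and observe from Eq.~\eqref{eq:finitepart_const} that every Kronecker delta $\delta_{\alpha-j,0}$ vanishes since $\alpha-j\notin\Z_{\geq 0}$ for all $j\in\Z_{\geq 0}$. No issues.
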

\begin{proof}  The  fact that the finite part at infinity vanishes on
$\cals_{\rm ph}^{{\notin\Z}} (\R_{\geq 0})$ follows  {from Eq.~\eqref{eq:finitepart_const}}
combined with the following trivial observation $\alpha\notin \Z_{\geq 0}\Longrightarrow \alpha-j \notin \Z _{\geq 0} \, \ \forall j\in \Z_{\geq 0}.$
\end{proof}

\begin{ex}
\begin{enumerate}
\item
For   $\sigma \in \cals_{\rm ph}  (\R_{\geq 0})\cap L^1(\R_{\geq 0})=\sum _{\Re(\alpha)<-1}  \cals_{\rm ph}^\alpha (\R_{\geq 0})$,  {we have} $ \underset{+\infty}{\rm fp} \sigma=0$.
\item
For a polynomial $P$ in $ {\mathcal P}(\R_{\geq 0})$,   we have $\underset{  +\infty}{\rm fp}P= P(0)$.
\end{enumerate}
\end{ex}

We investigate the behaviour of the finite part at infinity under pull-back by translations.
For   $a\in \R_+$, let ${\mathfrak t}_a: \R_{\geq 0}\longrightarrow \R_{\geq 0}$ denote the translation $x\longmapsto x+a$.
The   pull-back by the translation map ${\mathfrak t}_a^*: \sigma \longmapsto  \sigma\circ {\mathfrak t}_a$ stabilises ${\mathcal P}(\R_{\geq 0})$  yet
it does not  preserve the finite part  at $+\infty$ on ${\mathcal P}(\R_{\geq 0})$ since for any polynomial $P$, the finite part
$\underset{+\infty}{\rm fp} {\mathfrak t}_a^*P(x)=P(a)  $ depends on $a$. We nevertheless   have the following statement.

\begin{prop}\label{prop:fpinftytransl}For any $a\in \R_+$ and   any $\alpha\in \C$, the  pull-back ${\mathfrak t}_a^*$ by the translation map
stabilises $S_{\rm ph}^\alpha(\R_{\geq 0})$. In general, the finite part $\underset{+\infty}{\rm fp}$ is not invariant under the pull back
$\mathfrak{t}^*_a$, yet its restriction to
$\cals _{\rm ph}^{\notin \Z_{\geq 0}}(\R_{\geq 0})$ is, since $\cals _{\rm ph}^{\notin \Z_{\geq 0}}(\R_{\geq 0})$ lies in the kernel of $ \underset{ +\infty}{\rm fp} \circ {\mathfrak t}_a^*$.
\end{prop}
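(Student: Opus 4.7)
The plan is to establish the three assertions in turn. First, for the stability of $S_{\rm ph}^\alpha(\R_{\geq 0})$ under ${\mathfrak t}_a^*$, I would verify the symbol estimates $(C_{\Re(\alpha)})$ for ${\mathfrak t}_a^*\sigma$ by transferring those for $\sigma$, using $\partial_x^k({\mathfrak t}_a^*\sigma)(x) = (\partial_x^k\sigma)(x+a)$ together with the comparison $\langle x+a\rangle \leq (1+a)\langle x\rangle$ valid on $\R_{\geq 0}$. To exhibit an asymptotic expansion for ${\mathfrak t}_a^*\sigma$ starting from the one of $\sigma$, I would insert the generalised binomial series
$$(x+a)^{\alpha-j} = x^{\alpha-j}\sum_{k\geq 0}\binom{\alpha-j}{k}a^k x^{-k},$$
valid for $x \geq a+1$ say, and regroup according to the diagonal reindexing $n = j+k$ to collect the coefficient of each power $x^{\alpha-n}$. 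Checking that truncating at order $N$ leaves a remainder in $\cals^{\Re(\alpha)-N}(\R_{\geq 0})$ identifies ${\mathfrak t}_a^*\sigma$ as an element of $S_{\rm ph}^\alpha(\R_{\geq 0})$ whose asymptotic expansion has the same leading order $\alpha$, only with shifted subleading coefficients.

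The non-invariance of $\underset{+\infty}{\rm fp}$ under ${\mathfrak t}_a^*$ is witnessed by any polynomial $P\in {\mathcal P}(\R_{\geq 0})$: by the example recorded earlier one has $\underset{+\infty}{\rm fp}\,P = P(0)$ whereas $\underset{+\infty}{\rm fp}({\mathfrak t}_a^* P) = P(a)$, which differ for generic $P$ and $a$.

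For the final assertion, I would first observe that if $\sigma \in \cals_{\rm ph}^\alpha(\R_{\geq 0})$ with $\alpha \notin \Z_{\geq 0}$, then every order $\alpha-j$, $j\in\Z_{\geq 0}$, appearing in the asymptotic expansion of ${\mathfrak t}_a^*\sigma$ also lies outside $\Z_{\geq 0}$ (else $\alpha$ would itself be a nonnegative integer). Hence the first step already gives ${\mathfrak t}_a^*\cals_{\rm ph}^\alpha(\R_{\geq 0}) \subseteq \cals_{\rm ph}^\alpha(\R_{\geq 0})$, and by linearity ${\mathfrak t}_a^*$ stabilises $\cals_{\rm ph}^{\notin \Z_{\geq 0}}(\R_{\geq 0})$. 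Applying Lemma~\ref{lem:fpzero} to ${\mathfrak t}_a^*\sigma$ then yields $\underset{+\infty}{\rm fp}({\mathfrak t}_a^*\sigma) = 0 = \underset{+\infty}{\rm fp}(\sigma)$ for every $\sigma$ in $\cals_{\rm ph}^{\notin \Z_{\geq 0}}(\R_{\geq 0})$, which is precisely the kernel statement. The main obstacle will be the bookkeeping for the asymptotic expansion in the first step, namely checking convergence of the reindexed double sum in the sense of asymptotic expansions and controlling each remainder; once the translated symbol is known to inherit polyhomogeneity of the same order, the remaining two assertions follow immediately from the example and from Lemma~\ref{lem:fpzero}.
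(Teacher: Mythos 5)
Your proposal is correct and follows essentially the same route as the paper: the paper simply cites \cite[Proposition 1]{P2} for the stability of $S_{\rm ph}^\alpha(\R_{\geq 0})$ under ${\mathfrak t}_a^*$ (proved there by exactly the Taylor/binomial expansion in $a/x$ that you sketch) and then deduces the kernel statement from Lemma~\ref{lem:fpzero}, as you do. Your additional observation that the expansion orders $\alpha-n$ stay outside $\Z_{\geq 0}$ when $\alpha\notin\Z_{\geq 0}$, and the polynomial counterexample for non-invariance, match the paper's reasoning.
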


\begin{proof} The stability of $S_{\rm ph}^\alpha(\R_{\geq 0})$  under the pull-back by the translation  map was shown in  \cite[Proposition 1]{P2} (see also \cite[Proposition 2.52]{P1})  using a Taylor  expansion  at zero in $\frac ax$.  The fact that $\cals _{\rm ph}^{\notin \Z_{\geq 0}}(\R_{\geq 0})$ lies in the kernel of $ \underset{ +\infty}{\rm fp} \circ {\mathfrak t}_a^*$ then follows from the previous lemma.
\end{proof}

The finite part at infinity map defines a character on ${\mathcal P}(\R_{\geq 0})$ since $\underset{+\infty}{\rm fp}(P\, Q)=(P Q)(0)=P(0)\, Q(0)= \underset{+\infty}{\rm fp}(P )\, \underset{+\infty}{\rm fp}(  Q)$. Yet on $ \Sigma (\R_{\geq 0})$, it only defines a partial character.

\begin{prop}\label{prop:fpprod}For  any two symbols $\sigma, \tau\in\Sigma (\R_{\geq 0})$, if $\sigma\top _{\Z}\tau$, then $\underset{+\infty}{\rm fp} \left(\sigma\, \tau\right) = \underset{+\infty}{\rm fp} ( \sigma)   \, \underset{+\infty}{\rm fp}( \tau)=0$.
\end{prop}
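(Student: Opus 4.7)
The plan is to split the claim into its two halves, both of which follow quickly from the direct sum decomposition \eqref{eq:directsum}, the definition \eqref{eq:orderloc} of $\top_\Z$, and Lemma \ref{lem:fpzero}.

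First, the equality $\underset{+\infty}{\rm fp}(\sigma \tau) = 0$ is essentially immediate: by the definition of $\top_\Z$, the hypothesis $\sigma \top_\Z \tau$ means $\sigma \cdot \tau \in \cals_{\rm ph}^{\notin \Z}(\R_{\geq 0})$, and since $\Z_{\geq 0} \subset \Z$ we have $\cals_{\rm ph}^{\notin \Z}(\R_{\geq 0}) \subset \cals_{\rm ph}^{\notin \Z_{\geq 0}}(\R_{\geq 0})$, so Lemma \ref{lem:fpzero} yields $\underset{+\infty}{\rm fp}(\sigma \tau) = 0$.

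Next, for $\underset{+\infty}{\rm fp}(\sigma) \cdot \underset{+\infty}{\rm fp}(\tau) = 0$, I would decompose each symbol according to \eqref{eq:SigmaRgeq0}: write $\sigma = \sigma_{\notin} + P_\sigma$ and $\tau = \tau_{\notin} + P_\tau$ with $\sigma_{\notin}, \tau_{\notin} \in \cals_{\rm ph}^{\notin \Z}(\R_{\geq 0})$ and $P_\sigma, P_\tau \in \mathcal{P}(\R_{\geq 0})$. By linearity of $\underset{+\infty}{\rm fp}$, Lemma \ref{lem:fpzero}, and the example computation $\underset{+\infty}{\rm fp}(P) = P(0)$ for polynomials, we get
\[\underset{+\infty}{\rm fp}(\sigma) = P_\sigma(0), \qquad \underset{+\infty}{\rm fp}(\tau) = P_\tau(0).\]
It then suffices to show that at least one of $P_\sigma, P_\tau$ is the zero polynomial. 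This is where the pointwise interpretation of $\sigma \top_\Z \tau$ enters: expand $P_\sigma$ and $P_\tau$ as sums of monomials in $S_{\rm ph}^{k}(\R_{\geq 0})$ and $S_{\rm ph}^{\ell}(\R_{\geq 0})$ respectively, with $k, \ell \in \Z_{\geq 0}$; if both polynomials were nonzero, there would exist monomial summands of orders $k, \ell \in \Z_{\geq 0}$, so that $k + \ell \in \Z_{\geq 0} \subset \Z$, contradicting the condition $(\alpha_i + \beta_j - \Z_{\geq 0}) \cap \Z = \emptyset$ from the definition of $\top_\Z$.

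The only step that requires a little care is translating the condition $\sigma \cdot \tau \in \cals_{\rm ph}^{\notin \Z}$ into the ``no common integer order'' statement needed to kill the polynomial parts; once one has the pointwise reformulation of $\top_\Z$ on the polyhomogeneous summands, the argument is a one-line contradiction. No nontrivial obstacle is expected.
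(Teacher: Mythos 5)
Your proof is correct and follows essentially the same route as the paper's: both arguments rest on Lemma \ref{lem:fpzero} together with the observation that $\sigma\top_\Z\tau$ forces the integer-order (polynomial) content of at least one factor to vanish, the only difference being that you organise this via the direct-sum splitting $\Sigma(\R_{\geq 0})=\cals_{\rm ph}^{\notin\Z}(\R_{\geq 0})\oplus\calp(\R_{\geq 0})$ whereas the paper reduces by bilinearity to simple symbols and runs a short case analysis on the orders $\alpha,\beta$.
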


\begin{proof} By linearity, we only need to prove this for $\sigma , \tau \in S^\alpha_{\rm ph}(\R _{\ge 0})$ or $\calp^k(\R _{\ge 0})$.
Let  $\alpha$ be the order of $\sigma$,  $\beta$ that of $\tau$ so that  their  product $\sigma\, \tau$ is of order $\alpha+\beta$. By definition, $\sigma\top_\Z \tau $ if and only if  $ \alpha+\beta\notin \Z $, a condition which is fulfilled whenever
  \begin{enumerate}
	\item[(i)] $(\alpha, \beta) \in \left( \Z \times (\C\setminus \Z) \right)\cup \left((\C\setminus \Z )\times \Z   \right)$, or
	\item[(ii)] $\alpha \notin \Z \,\wedge \,\beta \notin \Z   \, \wedge\,  \alpha+\beta\notin \Z  $ holds.
	\end{enumerate}
Both cases are  verified  on the grounds of Lemma \ref{lem:fpzero}.
\end{proof}

\subsection{Differentiation and integration maps on symbols}
 We  single out classes of symbols stable under differentiation and integration, quoting results from \cite{MP} and \cite{P1}.  Clearly, ${\mathcal P}(\R_{\geq 0})$ is stable under differentiation   $\sigma\longmapsto \partial_x\sigma$  and integration $\sigma\longmapsto \int_0^x\sigma$.
\begin{prop}\label{prop:diffint}\begin{enumerate}
\item Differentiation   $\mathfrak{D}: \sigma\longmapsto \partial_x\sigma$ maps $\cals^r(\R_{\geq 0})$ to $ \cals^{r-1}(\R_{\geq 0})$ for any real number $ r$.  It
furthermore maps $\cals_{\rm ph}^\alpha(\R_{\geq 0})$ to $\cals_{\rm ph}^{\alpha-1}(\R_{\geq 0})$ for any $ \alpha$ in $ \C$  and therefore stabilises $\cals _{\rm ph} (\R_{\geq 0})$.
\item  {Integration ${\mathfrak I}: \sigma\longmapsto \int_0^x\sigma$ maps}
\begin{itemize}
\item[(a)]  $\mathcal{S}^r(\R_{\geq 0})$ to $\mathcal{S}^{r+1}(\R_{\geq 0})+\C$ for any real number $r\neq -1$.
\item[(b)]  $\cals_{\rm ph}^{\alpha}(\R_{\geq 0})$ to  $\cals_{\rm ph}^{\alpha+1}(\R_{\geq 0})+\C$ for any ~$\alpha$ in $\C \setminus \Z _{\ge -1}$, so that the  integration map ${\mathfrak I}: \sigma(x)\longmapsto \int_0^x\sigma(t)\, dt$    stabilises $\Sigma (\R_{\geq 0})$.
\end{itemize}

\end{enumerate}

\end{prop}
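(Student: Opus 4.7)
My overall strategy is to verify the differentiation statements by direct estimates on derivatives and by formally differentiating the asymptotic expansion, and to treat the integration statements by splitting into the cases $r>-1$ and $r<-1$ (resp.\ by a truncation-and-term-wise-integration argument in the polyhomogeneous case).

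For \textbf{(i)}, the inclusion $\mathfrak{D}(\cals^r)\subset\cals^{r-1}$ is essentially tautological: the estimate $|\partial_x^k(\partial_x\sigma)(x)|=|\partial_x^{k+1}\sigma(x)|\le D_{k+1}\langle x\rangle^{r-k-1}$ is just the condition $(C_r)$ for $\sigma$ at index $k+1$, rewritten as $(C_{r-1})$ for $\partial_x\sigma$ at index $k$. For the polyhomogeneous refinement, I start from $\sigma-\sum_{j=0}^{N-1}\chi\, a_j x^{\alpha-j}\in\cals^{\Re(\alpha)-N}$ and differentiate to obtain $\partial_x\sigma-\sum_{j=0}^{N-1}(\alpha-j)\chi\, a_j x^{\alpha-j-1}\in \cals^{\Re(\alpha)-N-1}$ modulo the terms involving $\chi'$; since $\chi'$ is compactly supported, those terms are smoothing and can be absorbed. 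This identifies the asymptotic expansion of $\partial_x\sigma$ as $\sum_j(\alpha-j)a_j\, x^{(\alpha-1)-j}$, so $\partial_x\sigma\in\cals_{\rm ph}^{\alpha-1}(\R_{\ge 0})$.

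For \textbf{(ii)(a)}, if $r>-1$ then directly $\bigl|\int_0^x\sigma(t)\,dt\bigr|\le D_0\int_0^x\langle t\rangle^r\,dt=O(\langle x\rangle^{r+1})$, while higher derivatives satisfy $\partial_x^k\mathfrak{I}(\sigma)=\partial_x^{k-1}\sigma\in\cals^{r-k+1}$, giving $\mathfrak{I}(\sigma)\in\cals^{r+1}(\R_{\ge 0})$. If $r<-1$, then $\sigma\in L^1(\R_{\ge 0})$, so I write $\mathfrak{I}(\sigma)(x)=C_\sigma-\int_x^\infty\sigma(t)\,dt$ with $C_\sigma:=\int_0^\infty\sigma\in\C$; the tail integral is $O(\langle x\rangle^{r+1})$ with all derivatives controlled by derivatives of $\sigma$, which places $\mathfrak{I}(\sigma)-C_\sigma$ in $\cals^{r+1}(\R_{\ge 0})$. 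The case $r=-1$ is excluded precisely because logarithmic growth appears (as with $\sigma(x)=\chi(x)/x$), which is not of symbol order $\le 0$.

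For \textbf{(ii)(b)}, I expand $\sigma=\sum_{j=0}^{N-1}\chi\, a_j x^{\alpha-j}+r_N$ with $r_N\in\cals^{\Re(\alpha)-N}$ and integrate each piece. A direct computation (splitting at the point where $\chi\equiv 1$) gives
\begin{equation*}
\int_0^x\chi(t)\,t^{\alpha-j}\,dt \;=\; \chi(x)\,\frac{x^{\alpha-j+1}}{\alpha-j+1}+c_j+\varepsilon_j(x),
\end{equation*}
where $c_j\in\C$ and $\varepsilon_j$ is compactly supported (hence smoothing), provided $\alpha-j+1\neq 0$. The assumption $\alpha\notin\Z_{\ge -1}$ is exactly what guarantees $\alpha-j+1\neq 0$ for every $j\ge 0$, avoiding the logarithmic obstruction; this is the heart of the argument. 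Choosing $N$ large enough that $\Re(\alpha)-N<-1$ and applying (ii)(a) to $r_N$, I conclude that $\mathfrak{I}(\sigma)$ has polyhomogeneous asymptotic expansion $\sum_j\tfrac{a_j}{\alpha-j+1}x^{(\alpha+1)-j}$ plus a constant, i.e.\ $\mathfrak{I}(\sigma)\in\cals_{\rm ph}^{\alpha+1}(\R_{\ge 0})+\C$. The stability of $\Sigma(\R_{\ge 0})=\cals_{\rm ph}^{\notin\Z}(\R_{\ge 0})+\calp(\R_{\ge 0})$ then follows by linearity: polynomials integrate to polynomials, and any summand of non-integer order $\alpha\notin\Z$ automatically satisfies $\alpha\notin\Z_{\ge -1}$ and integrates into $\cals_{\rm ph}^{\alpha+1}+\C\subset\cals_{\rm ph}^{\notin\Z}+\calp$. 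The only technical nuisance is bookkeeping of the constants $c_j$ and the compactly supported errors $\varepsilon_j$, but these are absorbed into the $\C$-summand and the smoothing remainder respectively, so no genuine obstacle arises beyond the exclusion condition on $\alpha$.
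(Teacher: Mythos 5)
Your proposal is correct and follows essentially the same route as the paper's proof: the tautological shift of the estimate $(C_r)$ for differentiation with absorption of the compactly supported $\chi'$-terms, the split into $r>-1$ (direct bound) and $r<-1$ (write $\mathfrak{I}(\sigma)=C_\sigma-\int_x^\infty\sigma$) for part (ii)(a), and term-wise integration of the truncated expansion with the observation that $\alpha\notin\Z_{\ge -1}$ is exactly the condition $\alpha-j+1\neq 0$ for all $j$ in part (ii)(b). No gaps; nothing further is needed.
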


\begin{proof} \begin{enumerate}
\item  It is easy to check on the grounds of condition (\ref{eq:symbol}) that differentiation $\partial_x$ maps $\mathcal{S}^r(\R_{\geq 0})$  to
$\mathcal{S}^{r-1}(\R_{\geq 0})$ for any real number $r$.

Consequently, for any $\sigma \in \cals_{\rm ph}^\alpha(\R_{\geq 0})$, the remainder term $\sigma_{(N)}^\chi $,   which lies in $\mathcal{S}^{\Re(\alpha)-N}(\R_{\geq 0})$ is mapped
to $\partial_x \sigma_{(N)}^\chi \in \mathcal{S}^{\Re(\alpha)-N-1}(\R_{\geq 0})$. Combining this with
the fact that the homogeneous components $x^{\alpha-j}$ are mapped to  $\partial_xx^{\alpha-j}=(\alpha-j)\,x^{\alpha-j-1}$ of homogeneity degree
$\alpha-1$, and the excision function $\chi$ is mapped to a smooth function $\partial_x \chi$ with compact support, we conclude that
\[\partial_x\sigma= \sum_{j=0}^{N-1} (\alpha-j)\,a_j\,\chi(x)\, x^{\alpha-j-1}+ \sum_{j=0}^{N-1} \partial_x\chi(x)\,a_j\, x^{\alpha-j}+\partial_x \sigma_{(N)}^\chi\sim \sum_{j=0}^{\infty} (\alpha-j)\,a_j\, x^{\alpha-j-1}\] lies  in $\cals_{\rm ph}^{\alpha-1}(\R_{\geq 0})$.

\item (cfr. \cite[Exercise 3.1]{P1}{, see also~\cite[Proposition 2]{MP}})
\begin{itemize}
\item[(a)] For any real number $r< -1$, by condition (\ref{eq:symbol}), we know
$$|\sigma(x)|\leq D_0\langle x\rangle^{r},
$$
so $\int _0^\infty \sigma (y)dy$ converges and
$$\left |\int _0^x \sigma (y)dy-\int _0^\infty \sigma (y)dy\right |=\left|\int _x^\infty \sigma (y)dy\right|\le \int _x^\infty |\sigma (y)|dy\le D_0\int _x^\infty \langle y\rangle^{r}dy.
$$
It is easy to check that $\int _x^\infty \langle y\rangle^{r}dy \in \cals ^{r+1}(\R_{\geq 0})$. Therefore there is a constant $D'_o$ such that
$$\left|\int _0^x \sigma (y)dy-\int _0^\infty \sigma (y)dy\right|\le  D'_0 \langle x\rangle^{r+1}.
$$
Together with the fact that $\partial_x^k \circ \int_0^x= \partial_x^{ k-1}$ for $k\in \Z _{\ge 1}$, we know, $x\longmapsto \int_0^x\sigma(y)\, dy-\int_0^\infty\sigma(y)\, dy$ is in $\mathcal{S}^{r+1}(\R_{\geq 0})$, that is $\int_0^x\sigma(y)\, dy$ is in $\mathcal{S}^{r+1}(\R_{\geq 0})+\C$.

If $r>-1$, then
$$|\sigma(x)|\leq D_0\langle x\rangle^{r}.
$$
So
$$\left|\int _0^x \sigma (y)dy\right|\le \int _0^x |\sigma (y)|dy\le D_0\int _0^x \langle y\rangle^{r}dy.
$$
Now it is easy to check that $\int _0^x \langle y\rangle^{r}dy \in \cals ^{r+1}(\R_{\geq 0})$, thus
$\int_0^x\sigma(y)\, dy$ is in $\mathcal{S}^{r+1}(\R_{\geq 0})=\mathcal{S}^{r+1}(\R_{\geq 0})+\C$.

\item[(b)] Consequently, for any $\sigma \in \cals_{\rm ph}^\alpha(\R_{\geq 0})$ of order $\alpha\notin \Z_{\ge -1}$, and with the notations
of  {Eq.~}(\ref{eq:sigmaNfixed}), the remainder term $\sigma_{(N)}^\chi $ (here $\chi$ is  an excision function  identically one outside the open unit interval), which lies in $\mathcal{S}^{\Re(\alpha)-N}(\R_{\geq 0})$, is mapped to
$\int_0^x \sigma_{(N)}^\chi(y)\,dy \in \mathcal{S}^{\Re(\alpha)-N+1}(\R_{\geq 0})+\C$ when $N\not =\Re (\alpha)+1$, which is the case when $N>\Re (\alpha)+1$.

Now for
$$\sigma(x)=\sum_{j=0}^{N-1}\chi(x)\, a_{ j}\, x^{\alpha-j}+\sigma_{(N)}^\chi(x),
$$
If $\chi(x)$ is identically $1$ when $x\ge x_0$, then for $x\geq x_0$,
\begin{eqnarray*}
 \int_0^x\chi(y)\,y^{\alpha-j}\, dy & =&
 \int_0^{x_0} \chi( y)\,y^{\alpha-j}\, dy+ \int_{x_0}^x\chi(y)\,y^{\alpha-j}\, dy \\
&=&\frac{x^{\alpha-j+1}-x_0^{\alpha -j+1}}{\alpha-j+1}+\int_0^{x_0} \chi( y)\,y^{\alpha-j}\, dy.
\end{eqnarray*}
Thus, for any excision function $\tilde \chi$,   which  vanishes on $[0, r]$ and is identically one  on some interval $[r+\delta, +\infty)$ with $\delta>0$, since  by assumption $\alpha-j\neq -1$ for any $j\in \Z_{\geq 0}$, we have
\[\int_0^x\chi(y)\,y^{\alpha-j}\, dy= \tilde \chi(x)\,\left( \frac{x^{\alpha-j+1}-x_0^{\alpha -j+1}}{\alpha-j+1}+ \int_0^{x_0} \chi( y)\,y^{\alpha-j}\, dy\right)+ (1-\tilde \chi(x))\, \int_0^{x_0}\chi(y)\,y^{\alpha-j}\, dy.\]
 { Thus we have shown that the map $x\longmapsto \int_0^x\chi(y)\,y^{\alpha-j}\, dy$ lies in $\cals_{\rm ph}^{\alpha-j+1}(\R_{\geq 0})+\C$ for any $j\in \Z_{\geq 0}$.

Using Eq.~\eqref{eq:sigmaNfixed} we now write $x\longmapsto \sigma(x)=\sum_{j=0}^{N-1}\chi(x)\, a_{ j}\, x^{\alpha-j}+  \sigma_{(N)}^\chi(x)$.   The above argument in  Part b) tells us that  $\sum_{j=0}^{N-1} \int_0^x \frac{a_j}{\alpha-j+1} \, x_{\alpha-j-1}$
lies  in ${S}_{\rm ph}^{\alpha+1}(\R_{\geq 0})+\C$. Part a) tells us that for large enough $N$,  the symbol $x\longmapsto \int_0^x \sigma_{(N)}^\chi$ lies in   $\mathcal{S}^{\Re(\alpha)-N}(\R_{\geq 0})$. Summing the two we conclude that \[\int_0^x\sigma \sim {\rm constant} +\sum_{j=0}^{\infty} \frac{a_j}{\alpha-j+1} \, x^{\alpha-j-1}\]
  lies  in $\cals_{\rm ph}^{\alpha+1}(\R_{\geq 0})+\C$.    Since the integration  map clearly stabilises ${\mathcal P}(\R_{\geq 0})$, it  follows that it stabilises  $\Sigma (\R_{\geq 0})$.}

Since the integration stabilises ${\mathcal P}(\R_{\geq 0})$, it stabilises  $\Sigma (\R_{\geq 0})$.
\end{itemize}
\end{enumerate}
\end{proof}

\begin{rk}
\begin{itemize}
 \item We want to single out a class of symbols stable under integration: if one insists on avoiding the occurrence of logarithms while integrating more general
symbols, one needs to avoid integrating powers $x^{-1}$, hence the natural  class to consider is  $\Sigma (\R_{\geq 0})$.
\item Let us nevertheless point out   that an alternative point of view adopted in \cite{MP} would be to  extend the algebra of polyhomogeneous symbols to log-polyhomogeneous ones; we chose to avoid this extension which would involve more technicalities.
\end{itemize}
\end{rk}
We saw in Proposition \ref{prop:diffint} that the algebra $\cals_{\rm ph}(\R_{\geq 0})$ is stable under differentiation and  the class $\Sigma (\R_{\geq 0})$  is stable under  integration. So   we can  define the finite part at infinity of an integrated  polyhomogeneous symbol and set the following definition.
\begin{defn}\label{de:cutoffint} For any $\sigma\in \Sigma (\R_{\geq 0})$,\begin{equation}\label{eq:cutoffintclassical}  \cutoffint_0^\infty
\sigma := \cutoffint_0^\infty
\sigma(x)\, d\, x:= \underset{x\to+\infty}{\rm fp}\int_0^x \sigma(y)\, dy\end{equation}
is called the {\bf cut-off integral} of $\sigma$.
\end{defn}
\begin{ex}
We have $ \cutoffint_0^\infty Q =0$ for any polynomial $\sigma=Q  $,  since $\underset{+\infty}{\rm fp}P = P(0)$  vanishes if $P(x)= \int_0^x Q$.
\end{ex}

\begin{ex} \label{ex:cutoffint_vs_int} By the proof of Proposition \ref {prop:diffint}, we know for a classical symbol $\sigma$ of order $<-1$, $\cutoffint_0^\infty\sigma=\int_0^\infty\sigma$.
    \end{ex}

An explicit computation  derived from splitting the integral $\int_0^x=\int_0^{x_0}+\int_{x_0}^x$ for large $x$ and the fact that $\underset{x\to +\infty}{\rm fp}x^{\alpha-j+1}=0$ for $\alpha\neq j-1$ yields the following expression
for any  $\sigma\in \Sigma (\R_{\geq 0})$ of order $\alpha$ (we use the notations of {Eq.~}(\ref{eq:sigmaNfixed})
\begin{equation}\label{eq:Isigma}{\mathfrak I}(\sigma)(x)= \int_0^x \sigma (y)\, dy=\sum_{j=0}^{N-1}a_j\,\left(\int_0^1 \chi(y)\, y^{\alpha-j}
(y)\, dy+ \frac{x^{\alpha-j+1}}{\alpha-j+1} - \frac{1}{\alpha-j+1}\right)+ \int_0^x\sigma_{(N)}^\chi(y)\, dy,
\end{equation}
which after taking the finite part at $+\infty$ yields {(for $N$ sufficiently large)}:
\begin{equation}\label{eq:cutoffint} \cutoffint_0^\infty\sigma (y)\, dy=\sum_{j=0}^{N-1}a_j\,\left(\int_0^1 \chi(x)\, x^{\alpha-j}
(x)\, dx- \frac{1}{\alpha-j+1}\right)+ \int_0^\infty\sigma_{(N)}^\chi(x)\, dx.
\end{equation}
 {This quantity is  clearly independent of the choice of the excision function $\chi$ and  the choice of the integer  $N$ as long as it is chosen sufficiently large.}

\subsection{ Summation of symbols }
To a symbol  $\sigma$  in $\cals_{\rm ph}(\R_{\geq 0})$, we assign to any positive integer $N$ the sum
\begin{equation*}
S(\sigma)(N) := \sum_{n=0}^N\sigma(n),
\end{equation*}
 and for $\lambda\in \{\pm 1\}$ the  sum
	\[   S_\lambda(\sigma)(N)= S(\sigma)\left(N+\frac{\lambda-1}{2}\right), \quad \text{so that}\quad S_{-1}(\sigma)(N)=   S(\sigma)(N-1), \quad S_{1} =   S .\]
		
The Euler-MacLaurin formula (see~\cite[Formula~(13.1.1)]{H}) relates the sum  over $[0,N]\cap \Z$ and the corresponding integral over $[0,N]$. Let $\overline{B_k}(x)= B_k\left(x-[x] \right)$, where $[x]$ stands for the integer part of the real number $x$, and $B_k(x)$ is the $k$-th Bernoulli polynomial. Then
\begin{eqnarray}\label{eq:EML}
S(\sigma)(N)& = &\int_0^N\sigma(x)dx + \frac{1}{2}\left(\sigma(N)+\sigma(0)\right) \nonumber\\
& + & \sum_{k=2}^K\frac{B_k}{k!}\,{\left(\sigma^{(k-1)}(N)- \sigma^{(k-1)}(0)\right)}+ \frac{(-1)^{K+1}}{K!}\int_0^N \overline{B_{K}} (x)\,\sigma^{(K)}(x)\, dx.
\end{eqnarray}
Note that this expression is independent of the choice of the integer $K\geq 2$.\\

Following \cite{MP} we interpolate the discrete sum $S(\sigma)$ by a  smooth function $  {\mathfrak S} (\sigma): \R_{\geq 0}\to \R$ defined as
\begin{eqnarray}\label{eq:EMLS}
{\mathfrak S} (\sigma) & =& {\mathfrak I} (\sigma) + \mu(\sigma),\notag\\
\mu(\sigma)(x)&:=& \frac{1}{2}\left(\sigma(x)+\sigma(0)\right)  + \sum_{k=2}^K\frac{B_k}{k!}\, \,{\left(\sigma^{(k-1)}(x) -  \sigma^{(k-1)}(0) \right)}\label{defi_barP_barQ} \\
&& + \frac{(-1)^{K+1}}{K!}\int_0^x \overline{B_{K}} (t)\,\sigma^{(K)}(t)\, dt\nonumber.
\end{eqnarray}

It follows from  {Eq.~}(\ref{defi_barP_barQ}) that ${\mathfrak S} (\sigma)(N)=S(\sigma)(N)$ for any positive integer $N$.

{\begin{defn}\label{defn:SIlambda} For convenience we define for  $\lambda\in \{\pm 1\}$
\begin{equation}\label{eq:Slambda}
 {\mathfrak S}_\lambda: \sigma\longmapsto \left( x\longmapsto {\mathfrak S} (\sigma)(x+\frac{\lambda-1}{2})\right)
\end{equation}
and set ${\mathfrak S}_0={\mathfrak I}$.
\end{defn}}
 {We have the following generalisation of  \cite[Proposition 8 and Formula(36)]{MP}.}
\begin{prop} \label{prop:SRBloc}
 For  $\lambda\in \{0,\pm 1\}$, the  map ${\mathfrak S}_\lambda$ stabilises $ \Sigma(\Z_{\geq 0}) $.
\end{prop}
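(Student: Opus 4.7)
The plan proceeds by separating the three values of $\lambda$. The case $\lambda = 0$ is immediate: $\mathfrak{S}_0 = \mathfrak{I}$ stabilises $\Sigma(\R_{\geq 0})$ by Proposition~\ref{prop:diffint}(2)(b), so it stabilises $\Sigma(\Z_{\geq 0})$ after restriction. For $\lambda = \pm 1$, the point is to analyse $\mathfrak{S}(\sigma) = \mathfrak{I}(\sigma) + \mu(\sigma)$ from Eq.~\eqref{eq:EMLS}. By linearity, I would reduce to the two cases $\sigma \in \mathcal{P}(\R_{\geq 0})$ and $\sigma \in \cals_{\rm ph}^\alpha(\R_{\geq 0})$ with $\alpha \notin \Z$. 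For polynomials, Faulhaber's formula (i.e.\ \eqref{eq:EMLS} with a vanishing remainder once $K$ exceeds the degree) gives $S(\sigma)|_{\Z_{\geq 0}} \in \mathcal{P}(\R_{\geq 0})|_{\Z_{\geq 0}} \subset \Sigma(\Z_{\geq 0})$ at once.

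For $\sigma \in \cals_{\rm ph}^\alpha(\R_{\geq 0})$ with $\alpha \notin \Z$, the terms $\mathfrak{I}(\sigma)$ and $\tfrac{1}{2}(\sigma(x)+\sigma(0)) + \sum_{k=2}^K \tfrac{B_k}{k!}\bigl(\sigma^{(k-1)}(x) - \sigma^{(k-1)}(0)\bigr)$ lie in $\Sigma(\R_{\geq 0})$ by Proposition~\ref{prop:diffint}, since differentiation preserves $\cals_{\rm ph}^\alpha$ and constants are polynomials. The crux is the remainder
$R_K(\sigma)(x) = \tfrac{(-1)^{K+1}}{K!}\int_0^x \overline{B_K}(t)\,\sigma^{(K)}(t)\, dt$, which is \emph{not} polyhomogeneous on $\R_{\geq 0}$ because of the periodic factor $\overline{B_K}$. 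However its restriction to $\Z_{\geq 0}$ is in $\Sigma(\Z_{\geq 0})$, which is what is needed.

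To see this, I would choose $K$ so large that $\Re(\alpha) - K < -1$; then $\sigma^{(K)} \in L^1(\R_{\geq 0})$, and $R_K(\sigma)(x) = C - \tfrac{(-1)^{K+1}}{K!}\int_x^\infty \overline{B_K}(t)\,\sigma^{(K)}(t)\, dt$ for a constant $C$. Iterating integration by parts via the Bernoulli identity $\tfrac{d}{dt}\overline{B_{k+1}}(t) = (k+1)\,\overline{B_k}(t)$ (valid off $\Z$), and using the fact that $\overline{B_k}(N) = B_k$ at integers, gives, for $N \in \Z_{\geq 1}$ and any $J \geq 1$,
\[
\int_N^\infty \overline{B_K}(t)\,\sigma^{(K)}(t)\, dt = \sum_{j=1}^{J} \alpha_j\,\sigma^{(K+j-1)}(N) + \int_N^\infty \beta_J\,\overline{B_{K+J}}(t)\,\sigma^{(K+J)}(t)\, dt,
\]
with explicit constants $\alpha_j, \beta_J$ involving Bernoulli numbers. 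Each $\sigma^{(K+j-1)}$ lies in $\cals_{\rm ph}^{\alpha - K - j + 1}(\R_{\geq 0})$, so the finite sum is the restriction of a symbol in $\Sigma(\R_{\geq 0})$; taking $J$ arbitrarily large makes the final integral $O(N^{\Re(\alpha)-K-J+1})$, hence $O(N^{-M})$ for every $M$.

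It remains to realise this rapidly decaying integer-indexed residual as the restriction of a Schwartz function on $\R_{\geq 0}$: choosing a fixed $\varphi \in C_c^\infty(\R)$ with $\varphi(0) = 1$ and $\mathrm{supp}\,\varphi \subset (-1/2, 1/2)$, the sum $\tilde r(x) := \sum_{N \geq 0} r(N)\,\varphi(x - N)$ is smooth with derivatives inheriting the super-polynomial decay of $r$, and so $\tilde r \in \cals^{-\infty}(\R_{\geq 0}) \subset \Sigma(\R_{\geq 0})$. Combining (a), (b) and this analysis of (c) proves the case $\lambda = 1$. For $\lambda = -1$, the identity $\mathfrak{S}_{-1}(\sigma)(N) = \mathfrak{S}(\sigma)(N) - \sigma(N)$ valid for $N \geq 1$ (with the single discrepancy at $N = 0$ absorbed into $\tilde r$) reduces it to the $\lambda = 1$ case. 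The main obstacle is the iterated integration-by-parts bookkeeping in (c) together with the (standard but careful) Schwartz-extension step for the residual.
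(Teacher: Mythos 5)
Your overall route is the same as the paper's: decompose $S(\sigma)$ via the Euler--Maclaurin formula \eqref{eq:EML}, treat the integral term, the finite sum of derivatives and the Bernoulli remainder separately, and reduce by linearity to $\sigma$ polynomial or $\sigma\in\cals_{\rm ph}^\alpha(\R_{\geq 0})$ with $\alpha\notin\Z$. Where you genuinely differ is on the remainder $\tau_K(x)=\int_0^x\overline{B_K}(t)\,\sigma^{(K)}(t)\,dt$: the paper asserts that $\tau_K'=\overline{B_K}\,\sigma^{(K)}$ is a classical symbol of order $\alpha-K$ and concludes that $\tau_K$ lies in $\Sigma(\R_{\geq 0})$ up to a constant, whereas you correctly observe that the periodic factor $\overline{B_K}$ destroys polyhomogeneity on $\R_{\geq 0}$ and that only the restriction to $\Z_{\geq 0}$ matters, which you recover by iterated integration by parts together with $\overline{B_k}(N)=B_k$ at integers (the Manchon--Paycha argument). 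This is the more careful treatment. Your handling of $\lambda=-1$ via $S(\sigma)(N-1)=S(\sigma)(N)-\sigma(N)$ also replaces the paper's appeal to translation stability of $\cals_{\rm ph}^\alpha(\R_{\geq 0})$; both work.

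One step needs tightening. After $J$ integrations by parts you obtain, \emph{for each fixed $J$}, a partial sum $\sum_{j=1}^J\alpha_j\sigma^{(K+j-1)}(N)$ plus a tail that is $O(N^{\Re(\alpha)-K-J+1})$; both pieces depend on $J$, so you do not yet possess a single residual sequence with super-polynomial decay, and for fixed $J$ your bump-interpolation $\tilde r_J$ only lands in $\cals^{\Re(\alpha)-K-J+1}(\R_{\geq 0})$, which is not contained in $\Sigma(\R_{\geq 0})$ since the interpolated function has no polyhomogeneous expansion. The standard fix is to asymptotically sum the full (divergent) series $\sum_{j\geq 1}\alpha_j\sigma^{(K+j-1)}$, whose orders $\alpha-K-j+1$ decrease by integers and avoid $\Z$, into a single symbol $\rho_0\in\cals_{\rm ph}^{\alpha-K}(\R_{\geq 0})\subset\Sigma(\R_{\geq 0})$ by the usual Borel-type construction; then $N\longmapsto\int_N^\infty\overline{B_K}\,\sigma^{(K)}-\rho_0(N)$ is $O(N^{-M})$ for every $M$, and your Schwartz interpolation applied to this single sequence produces an element of $\cals^{-\infty}(\R_{\geq 0})\subset\Sigma(\R_{\geq 0})$. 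With that insertion your argument is complete.
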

\begin{proof}   This conclusion follows from the fact that $\Sigma (\R _{\ge 0})$ is stable under pull-back and ${\mathfrak S}$ stabilises $\Sigma (\R_{\geq 0})$. We now prove that ${\mathfrak S}$ stabilises $\Sigma (\R_{\geq 0})$:
\begin{itemize}
\item { we know from  Proposition \ref{prop:diffint} that} the integration map ${\mathfrak S}_0={\mathfrak I}$  enjoys this property;
\item {the  term $\sum_{k=2}^K\frac{B_k}{k!}\,\sigma^{(k-1) }$ in the Euler-Maclaurin expansion interpreted as a linear combination of  differentiation maps $\partial_x^j$  applied to $\sigma$ also lies in  $\Sigma (\R_{\geq 0})$ if $\sigma$ does}; indeed,  it follows from Proposition \ref{prop:diffint} 2.b. that {$\partial_x^j$ maps} a classical symbol (resp. polynomial) of order $\alpha\in \C\setminus \Z$ (resp. degree $m$) to a classical symbol (resp. polynomial) of order $\alpha-j\in \C\setminus \Z$ (resp. degree $m-j$);
\item Let $\tau_K(x):=\int_0^x \overline{B_{K}} (t)\,\sigma^{(K)}(t)\, dt$. If  $a$ denotes the order of $\sigma$, for any $J\in \Z_{\geq 0}$  we have $\vert \sigma^{(J)}(t)\vert \leq C_J\, \langle t\rangle^{\Re(a)-J}$ for some constant $C_J$. For $K> \Re(a)+1$, the map   $ t\longmapsto \langle t\rangle^{\Re(a)-K-i}$ is $L^1$ for any $i\in \Z_{\geq 0}$.  Since the map $ \overline{B_{K}}$ is $1$-periodic and smooth on any segment $[i, i+1[$, it follows that the map $\tau_K: x\longmapsto \int_0^x \overline{B_{K}} (t)\,\sigma^{(K)}(t)\, dt$ is smooth with
 derivative  $\tau_K^\prime=\overline{B_{K}}  \,\sigma^{(K)}$ which is a classical symbol of order $a-K$ with real part $\Re(a)-K<-1$.  Thus $\tau_K$ differs by a constant $c$ from  a classical symbol of order $a-K+1$. This proves that $\sigma\in \Sigma (\R_{\geq 0})\Longrightarrow\tau_K\in \Sigma (\R_{\geq 0})$ for any  $K> \Re(a)+1$.
\end{itemize}
\end{proof}

\section{\Loc Rota-Baxter operators on symbol-valued multivariate meromorphic germs}

\subsection{The \loc algebra of symbol-valued  meromorphic germs}
We   generalise the space   of meromorphic multivariate  germs of functions  with linear poles at zero considered in \cite{GPZ3}
to the space   of   multivariate   meromorphic  germs    of polyhomogeneous symbols (on $\R_{\geq 0}$)  with linear
poles at zero.

We adopt notations close to those of \cite{GPZ3}. For the filtered Euclidean space \[\left(V, Q\right)=  \underset{\rightarrow}{\lim}\left( \R^k, Q_k\right),\]  let ${\mathcal L}_k:=\left(\R^k\right)^*$ and ${\mathcal L}=\underset{\rightarrow} {\lim}{\mathcal L}_k$ be the direct limit of spaces of linear forms. The space of holomorphic germs at zero with linear poles and real coefficient is denoted by $\calm _+ (\C ^\infty)$, and the space of meromorphic germs at zero with linear poles and real coefficient is denoted by $\calm (\C ^\infty)$.

\begin{defn}\label{defn:multiholsymb}
Let
$U$ be a domain in $\C ^n$.
\begin{itemize}
\item We call a family $(\sigma(z)_{z\in U}) $ of classical symbols
{\bf  holomorphic (of affine order $\alpha (z)$)}  if
\begin{enumerate}\item $\forall z\in U,~\sigma(z)\in \cals _{\rm ph}^{\alpha(z)}(\R_{\geq 0})$;
	\item $ \alpha(z)= L(z)+ c$ with $c\in \R $ and $L\in \call _k$;
\item  for any  $  N\in \Z _{\ge 1}$ and any    excision function $\chi$, the remainder (see Eq.~(\ref{eq:sigmasim}))
		$$~z\mapsto\sigma_{(N)}^{\chi}(z) := \sigma(z)-\sum_{j=0}^{N-1} \chi (x)\, a_{ j}(z)\,x^{\alpha (z)-j},$$
 satisfies the following uniform estimation: for any $k\in \Z_{\geq 0}$, and for any $x\in \R_{\geq 0}$, the derivatives $\partial_x^k\sigma_{(N)}^{\chi}$ are holomorphic functions on $U$, and for any compact subset $K$ of $U$, and any $ n\in \Z_{\geq 0}$   there is a positive constant
		$ C_{k,n,N}(K) $ such that
		\begin{equation}\label{eq:tauNestimate}
		 \left|\partial_z^n\left(\partial_x^k\sigma_{(N)}^{\chi}(z)\right)(x)\right| \leq C_{k,n,N}(K)\langle x\rangle^{\Re (\alpha (z))-N-k+\epsilon}\quad  \forall z\in K\subset U\quad \forall \epsilon>0.
		\end{equation}
 \end{enumerate}

 Such a family is called {\bf a simple holomorphic family of symbols (of affine order $\alpha$}).

	\item Let $\sigma _j (z), j=1, \cdots, J$ be simple holomorphic families of symbols. Then $\sigma(z)=\sum_{j=1}^J \sigma_j(z)$   is called {\bf a holomorphic family of symbols}.	\end{itemize}
\end {defn}

The subsequent straightforward property is nevertheless of importance for the following.

\begin{lem}\label{lem:prodsigma} The product  $\sigma(z):= \sigma_1(z)\, \sigma_2(z)$ of two simple holomorphic families  of symbols $\sigma_i(z) $ of   affine orders $\alpha_i(z)$ is a simple holomorphic family of symbols of affine order $\alpha_1(z)+\alpha_2(z)$.
\end{lem}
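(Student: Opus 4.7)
The plan is to verify each of the three conditions of Definition~\ref{defn:multiholsymb} for the product $\sigma(z):=\sigma_1(z)\,\sigma_2(z)$. Condition~(1) is immediate: for fixed $z\in U$, the product of classical symbols of orders $\alpha_1(z)$ and $\alpha_2(z)$ is classical of order $\alpha_1(z)+\alpha_2(z)$, because the asymptotic expansion of the product is obtained by formal multiplication and grouping by homogeneity degree. Condition~(2) is equally straightforward: $\alpha_1(z)+\alpha_2(z)=(L_1+L_2)(z)+(c_1+c_2)$ is again of the form $L(z)+c$ with $L\in\call$ and $c\in\R$. The coefficients of the product expansion are $a_n(z)=\sum_{j+k=n}a_j^{(1)}(z)\,a_k^{(2)}(z)$, which are holomorphic on $U$ as products and sums of holomorphic functions.

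For condition~(3) I would fix $N\in\Z_{\geq 1}$ and an excision $\chi$, and write each factor as $\sigma_i(z)=T_i(z)+\sigma_{i,(N_i)}^\chi(z)$ with $T_i(z):=\sum_{j=0}^{N_i-1}\chi(x)\,a_j^{(i)}(z)\,x^{\alpha_i(z)-j}$. Expanding the product gives four pieces: $T_1T_2$, $T_1\,\sigma_{2,(N_2)}^\chi$, $\sigma_{1,(N_1)}^\chi\,T_2$, and $\sigma_{1,(N_1)}^\chi\,\sigma_{2,(N_2)}^\chi$. The purely polynomial-type piece $T_1T_2$ produces, after regrouping by homogeneity, the expected finite sum $\sum_{n=0}^{N-1}\chi(x)\,a_n(z)\,x^{\alpha_1(z)+\alpha_2(z)-n}$ plus a remainder coming from (i) the higher-order cross terms with $j+k\geq N$ and (ii) the discrepancy $\chi^2-\chi$, which is compactly supported and hence smoothing (in fact in $\cals^{-\infty}$, with all $z$-derivatives of the coefficients $a_j^{(i)}$ still holomorphic). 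Each of the three remainder-involving pieces can then be estimated by the hypothesis~\eqref{eq:tauNestimate} on $\sigma_{i,(N_i)}^\chi$ together with the trivial pointwise bound $|\chi(x)\,x^{\alpha_i(z)-j}|\leq C\,\langle x\rangle^{\Re(\alpha_i(z))-j}$ on the explicit terms, the product of the bounds yielding the required decay $\langle x\rangle^{\Re(\alpha_1(z)+\alpha_2(z))-N-k+\epsilon}$ provided $N_1, N_2$ are chosen large enough relative to $N$ (any choice with $N_1+N_2\geq N$ suffices once the first piece is handled).

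To pass from the pointwise estimate at $k=n=0$ to the full uniform bound on $\partial_z^n\partial_x^k\sigma_{(N)}^\chi$, I would apply Leibniz's rule to each of the four pieces; the derivatives of the explicit terms $\chi(x)\,a_j^{(i)}(z)\,x^{\alpha_i(z)-j}$ in $x$ and $z$ produce logarithmic factors $(\log x)^\ell$ absorbed into the $\langle x\rangle^{\epsilon}$ slack, while the derivatives of the remainders $\partial_z^{n'}\partial_x^{k'}\sigma_{i,(N_i)}^\chi$ are bounded by hypothesis on each compact $K\subset U$. Summing a finite number of such bounds over the terms produced by Leibniz gives the required compact-uniform estimate.

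The main obstacle is purely bookkeeping: one has to simultaneously (a) keep track of the smoothing correction $\chi^2-\chi$ so that the coefficients $a_n(z)=\sum_{j+k=n}a_j^{(1)}(z)\,a_k^{(2)}(z)$ are identified as the true expansion coefficients of the product, and (b) choose $N_1$, $N_2$ sufficiently large (depending on $N$, $k$, and the tolerance $\epsilon$) so that the cross-product remainders decay faster than $\langle x\rangle^{\Re(\alpha_1(z)+\alpha_2(z))-N-k}$ uniformly on compacta. Neither step introduces genuinely new analytic difficulty beyond what is already encoded in Definition~\ref{defn:multiholsymb}.
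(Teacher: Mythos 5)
The paper offers no proof of this lemma at all: it is introduced with the sentence ``The subsequent straightforward property is nevertheless of importance for the following'' and left as an exercise. Your write-up is therefore not competing with an argument in the text; it is supplying the missing verification, and it does so correctly. The decomposition $\sigma_i=T_i+\sigma_{i,(N_i)}^\chi$, the expansion into four pieces, the observation that $\chi^2-\chi$ is compactly supported and hence contributes only a smoothing error, the identification of the product coefficients $a_n=\sum_{j+k=n}a_j^{(1)}a_k^{(2)}$ (legitimate because both expansions proceed in integer steps, so homogeneity degrees collide exactly when $j+k$ agrees), and the absorption of the logarithms produced by $\partial_z x^{\alpha(z)-j}$ into the $\langle x\rangle^{\epsilon}$ slack are exactly the points that need to be checked against Definition~\ref{defn:multiholsymb}, and you check them.

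One quantitative remark: your parenthetical claim that ``any choice with $N_1+N_2\geq N$ suffices'' is not right for the mixed terms. The piece $T_1\,\sigma_{2,(N_2)}^\chi$ is only $O\bigl(\langle x\rangle^{\Re(\alpha_1)+\Re(\alpha_2)-N_2+\epsilon}\bigr)$, since $T_1$ grows like $\langle x\rangle^{\Re(\alpha_1)}$; to reach the target decay $\Re(\alpha_1+\alpha_2)-N$ you need $N_2\geq N$ (and symmetrically $N_1\geq N$), not merely $N_1+N_2\geq N$. Taking $N_1=N_2=N$ works, and your surrounding text (``$N_1,N_2$ sufficiently large depending on $N$, $k$, $\epsilon$'') already says the right thing, so this is a slip in the parenthesis rather than a gap in the argument.
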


\begin{defn} A simple {\bf symbol-valued holomorphic germ} or {\bf holomorphic germ of symbols}  at zero  (with affine order $\alpha(z)$) is an equivalence class of simple holomorphic families  around zero   of symbols of affine order $\alpha(z)$ under the equivalence relation:
$$(\sigma(z)_{z\in U})\sim (\tau(z)_{z\in V})\Longleftrightarrow \exists W,\  0\in W\subset U\cap V, \ \sigma(z)=\tau(z),\ \forall z\in W.
$$
For any positive integer $k$, any $ \alpha(z)= L(z)+ c$ with $ c\in \R,  L\in \call _k$, let  $\calm _+\cals^\alpha(\C^k)$ denote the linear space generated by simple holomorphic germs of symbols of  order $\alpha$, and $\calm _+\cals (\C^k)$ denote the linear space generated by simple holomorphic germs of symbols.
\end{defn}
\begin{rk}
Clearly, two equivalent families of symbols have  the same (affine) order so it makes sense to define the (affine) order of a holomorphic germ of such symbols.
\end{rk}

\begin{defn} Let
$U$ be a domain of $\C ^k$ containing the origin. A simple {\bf meromorphic family} on $U$ of polyhomogeneous symbols with linear poles (with real coefficients) and affine order $\alpha(z)$  is a holomorphic family $(\sigma(z)_{z\in U\setminus X})$  with affine order $\alpha (z)$ of symbols on $U\setminus X$, for which
\begin {itemize}
\item $X=\cup_{i=1}^k \{L_i=0\}$ with $L_1, \cdots L_n\in \call _k$,
\item there exists a simple holomorphic family $(\tau(z)_{z\in U})$ with affine order  $\alpha(z)$ and nonnegative integers $s_1, \cdots, s_n$, such that
$$L_1^{s_1}\cdots L_n^{s_n}\sigma (z)=\tau(z)
$$
on $U\setminus X$.
\end{itemize}
A simple symbol-valued {\bf meromorphic germ} or meromorphic germ of symbols at zero on $\C^k$ with linear poles and   affine order $\alpha(z)$    is an equivalence class of meromorphic families  around zero  with linear poles of  symbols of affine order $\alpha(z)$    under the equivalence relation:
$$(\sigma(z)_{z\in U\setminus X})\sim (\tau(z)_{z\in V\setminus Y})\Leftrightarrow \exists W,\  0\in W\subset U\cap V, \ \sigma(z)=\tau(z),\ \forall z\in W\setminus (X\cup Y),$$
where
$U$  and $V$ are domains of $\C^k$ containing the origin.
Let ${\mathcal M}{\mathcal S}^\alpha(\C^k)$ denote the linear space generated by simple symbol-valued meromorphic germ at zero on $\C^k$ with linear poles and affine order $\alpha(z)$, and $\calm \cals (\C^k)$ denote the linear space generated by simple symbol-valued meromorphic germ at zero.
\end{defn}

Composing with the projection $(\C^{k+1})^* \to (\C^k)^*$   dual to the inclusion $\iota_k:\C^k\to \C^{k+1}$, and the isomorphism induced by the inner product $Q^*_k: (\C^{k})^* \cong \C^{k}$,  yields the embeddings
${\mathcal M}_+ {\mathcal S}(\C^k)\hookrightarrow {\mathcal M}_+ {\mathcal S}(\C^{k+1})$ (resp. ${\mathcal M} {\mathcal S}(\C^k)\hookrightarrow {\mathcal M} {\mathcal S}(\C^{k+1})$), thus giving rise to the direct  limits:

\begin{eqnarray} {\mathcal M}_+^\alpha {\mathcal S}(\C^\infty) &:= &\underset{\rightarrow} {\lim}{\mathcal M}_+^\alpha {\mathcal S}(\C^k) =\bigcup_{k=1}^\infty {\mathcal M}_+ ^\alpha{\mathcal S}(\C ^k),\\
{\mathcal M}_+  {\mathcal S}(\C^\infty) &:= &\underset{\rightarrow} {\lim}{\mathcal M}_+ {\mathcal S}(\C^k) =\bigcup_{k=1}^\infty {\mathcal M}_+ {\mathcal S}(\C^k),\\
	\left({\rm resp. }\ {\mathcal M}  {\mathcal S}(\C^\infty)\right.&:= & \underset{\rightarrow}{\lim}{\mathcal M}  {\mathcal S}(\C^k)=\bigcup_{k=1}^\infty {\mathcal M} {\mathcal S}(\C^k) ,
\end{eqnarray}
where  $ \alpha(z)= L(z)+ c$ with $c\in \R $ and $L\in \call _k$.

\begin{ex} \label{ex:holo_family_BZV}
	For any $\ell\in {\mathcal L}_k$,
	$$z\longmapsto \left(x\mapsto\langle x\rangle ^{\ell(z)}\right) $$
	defines a simple holomorphic germ of symbols of  order $\ell$.
\end{ex}

\begin {prop} Under  pointwise function multiplication,  $ {\mathcal M}{\mathcal S}(\C^\infty)$ is a complex algebra  and  we have the following inclusions of  subalgebras   \[ {\mathcal M}_+\cals (\C ^\infty)\subset {\mathcal M}{\mathcal S}(\C ^\infty); \quad  {\mathcal M}   (\C ^\infty)\cdot {\mathcal P}(\R_{\geq 0})\subset{\mathcal M} {\mathcal S}  (\C ^\infty).\]
\end{prop}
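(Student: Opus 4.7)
The plan is to reduce everything to Lemma~\ref{lem:prodsigma}, which handles the product of two simple holomorphic families of symbols, and then to propagate this through the successive layers of the definition (simple holomorphic $\to$ holomorphic $\to$ simple meromorphic $\to$ meromorphic germs).

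First I would observe that by construction $\mathcal{M}\mathcal{S}(\C^\infty)$, $\mathcal{M}_+\mathcal{S}(\C^\infty)$ and $\mathcal{M}(\C^\infty)\cdot\mathcal{P}(\R_{\geq 0})$ are complex vector spaces (the spaces are defined as linear spans, and the direct limits of linear maps are linear). So only stability under pointwise multiplication needs proof, which by bilinearity I may verify on pairs of simple germs. Using the natural inclusions $\C^k\hookrightarrow\C^{k+\ell}$, any two simple germs may be represented on a common ambient space $\C^N$.

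For $\mathcal{M}_+\mathcal{S}(\C^\infty)$, the closure under multiplication is immediate from Lemma~\ref{lem:prodsigma}: the product of two simple holomorphic germs of affine orders $\alpha_1(z)$, $\alpha_2(z)$ is simple holomorphic of affine order $\alpha_1(z)+\alpha_2(z)$, and the inclusion into $\mathcal{M}\mathcal{S}(\C^\infty)$ is the tautological one (take the denominator to be $1$). For $\mathcal{M}\mathcal{S}(\C^\infty)$, given two simple meromorphic germs $\sigma_i$ on $U\setminus X_i$ with $\prod_{j}L_{i,j}^{s_{i,j}}\,\sigma_i=\tau_i$ for simple holomorphic families $\tau_i$ of affine orders $\alpha_i(z)$, I would multiply both relations to get
\[
\Bigl(\prod_{j}L_{1,j}^{s_{1,j}}\Bigr)\Bigl(\prod_{j}L_{2,j}^{s_{2,j}}\Bigr)\,\sigma_1\sigma_2 \;=\; \tau_1\tau_2
\]
on $U\setminus(X_1\cup X_2)$; by Lemma~\ref{lem:prodsigma}, the right hand side is a simple holomorphic family of affine order $\alpha_1(z)+\alpha_2(z)$, so $\sigma_1\sigma_2$ is a simple meromorphic family with linear poles in the prescribed set, and the equivalence class defines a simple meromorphic germ of symbols. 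Extending bilinearly gives the algebra structure.

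For the last inclusion, I would fix $f\in\mathcal{M}(\C^\infty)$ and $P\in\mathcal{P}(\R_{\geq 0})$ and view $(z,x)\longmapsto f(z)\,P(x)$ as a candidate for an element of $\mathcal{M}\mathcal{S}(\C^\infty)$. Writing $\prod_j L_j^{s_j}\,f=g$ with $g$ holomorphic, I only need to check that $(z,x)\longmapsto g(z)\,P(x)$ is a simple holomorphic family of symbols: since $P$ is a classical symbol of integer order $\deg P$ with an asymptotic expansion whose remainders vanish for $N>\deg P$, the whole family has constant affine order $\alpha(z)=\deg P$, and the uniform estimate~(\ref{eq:tauNestimate}) holds trivially (the remainder $\sigma_{(N)}^\chi(z)$ is supported in a bounded interval and depends on $z$ only through the scalar factor $g(z)$, which is holomorphic). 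That this subspace is stable under multiplication then follows at once from the fact that both $\mathcal{M}(\C^\infty)$ and $\mathcal{P}(\R_{\geq 0})$ are algebras, using $(f_1P_1)(f_2P_2)=(f_1f_2)(P_1P_2)$.

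The only technical point, and what I expect to be the main minor obstacle, is the bookkeeping of the estimate in Definition~\ref{defn:multiholsymb}(3) when taking products, but this is precisely what Lemma~\ref{lem:prodsigma} already encapsulates. Everything else is formal manipulation of denominators and bilinearity.
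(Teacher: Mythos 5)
Your argument is correct and is exactly the route the paper intends: the paper states this proposition without proof, having placed Lemma~\ref{lem:prodsigma} immediately before it precisely so that the algebra structure follows by clearing denominators and bilinearity, as you do. Your verification that $f(z)P(x)$ is a (constant affine order) simple holomorphic family fills in the only detail the paper leaves implicit, and it is sound.
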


\subsection {Dependence space of a meromorphic germ of symbols }
The notion of dependence space defined in \cite[Definition 2.13]{CGPZ1} for meromorphic germs extends   to meromorphic germs of symbols since the arguments used there to justify the definition   apply in the same way.
\begin {defn}
Let $\sigma (z)$ be a meromorphic family of symbols with affine order on an open neighborhood $U$ of $0$ in $\C ^n$. If there are linear forms $L_1, \cdots , L_k$ on $\C ^n$ and a meromorphic family of symbols $\tau (w) $ on an open neighborhood $W$ of $0$ in $\C ^k$, such that $\sigma(z) =\tau(\phi(z))$ on $U\cap \phi ^{-1}(W)$, where $\phi=(L_1, \cdots, L_k): \C ^n \to \C ^k$, then we say that {\bf $\sigma $ depends} on the (linear) subspace of $(\C ^n)^*$ spanned by $L_1, \cdots, L_k$. We say that a meromorphic germ of symbols at $0\in \C ^n$ depends on a subspace $W \subset (\C ^n)^*$ if one of its representatives in the equivalence class given by the germ  does.

 The {\bf dependence subspace} $\supp(\sigma)$ of a meromorphic family of symbols $\sigma$ on an open neighborhood $U$ of $0$ in $\C^n$,  is the smallest subspace of $(\C^n)^*$   {on which it depends}. For a meromorphic germ, the dependence subspace is the dependence subspace of any of its representing element.
\end{defn}

\begin{lem}\label{lem:stabdiffint} For any $\sigma\in \Omega$ we have \[\supp(\partial_x^\alpha\sigma)\subset \supp( \sigma); \quad
\supp\left({\mathcal I}(\sigma)\right)\subset \supp( \sigma), \]
where ${\mathcal I}$ is the integration map defined in Eq.~(\ref{eq:Isigma}).
\end{lem}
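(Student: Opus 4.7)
The plan is to argue that both $\partial_x^\alpha$ and ${\mathcal I}$ act on the physical variable $x\in \R_{\geq 0}$ only, so they commute with any pullback along a linear map $\phi:\C^n\to \C^k$ in the spectral parameter $z$. Fix a representative meromorphic family $\sigma(z)$ of the given germ and suppose it depends on the subspace $W\subset (\C^n)^*$, so that there exist a meromorphic family of symbols $\tau(w)$ on an open neighbourhood of $0\in \C^k$ and a linear map $\phi=(L_1,\ldots,L_k)$ with $L_1,\ldots,L_k$ spanning $W$ such that $\sigma(z)=\tau(\phi(z))$ on the common domain of definition.

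First I would write down the pointwise identities
\[
(\partial_x^\alpha \sigma)(z)= (\partial_x^\alpha \tau)(\phi(z)), \qquad ({\mathcal I}(\sigma))(z)= ({\mathcal I}(\tau))(\phi(z)),
\]
which are immediate, since both operators act only on $x$ and the substitution $z\mapsto \phi(z)$ affects only the coefficients of the asymptotic expansion. These equations already display $\partial_x^\alpha \sigma$ and ${\mathcal I}(\sigma)$ as pullbacks along $\phi$ of objects defined on $\C^k$, so specialising to $W=\supp(\sigma)$ and invoking the minimality built into the definition of the dependence subspace will yield the two claimed inclusions.

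The only step requiring care is to check that $\partial_x^\alpha \tau$ and ${\mathcal I}(\tau)$ really are meromorphic families of polyhomogeneous symbols on $\C^k$ with affine order, so that the identities above legitimately witness dependence of $\partial_x^\alpha \sigma$ and ${\mathcal I}(\sigma)$ on $W$. Proposition~\ref{prop:diffint} settles the symbol-level statement: differentiation preserves $\cals_{\rm ph}(\R_{\geq 0})$ and shifts the order by $-1$, while integration preserves $\Sigma(\R_{\geq 0})$ and shifts the order by $+1$ modulo a constant. The holomorphic/meromorphic dependence on $w$ is inherited since the uniform estimates~\eqref{eq:tauNestimate} transfer to $\partial_x^\alpha \tau$ trivially and to ${\mathcal I}(\tau)$ by the same argument as in the proof of Proposition~\ref{prop:diffint}; crucially, no new linear forms in $w$ are introduced, so the affine order of $\tau$ is merely shifted by an integer and the poles are inherited from those of $\tau$. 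I expect this bookkeeping to be the main, albeit mild, obstacle; once it is in place the lemma follows at once from the commutation identities above.
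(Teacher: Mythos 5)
Your proposal is correct and follows essentially the same route as the paper: the paper's (very terse) proof likewise observes that differentiation acts only in $x$ and hence introduces no new $z$-dependence, and handles ${\mathfrak I}$ by inspecting the explicit formula~(\ref{eq:Isigma}), which shows the integral is built from the coefficients $a_j(z)$, the order $\alpha(z)$ and the remainder, all depending only on $\supp(\sigma)$. Your commutation-with-pullback identities $(\partial_x^\alpha\sigma)(z)=(\partial_x^\alpha\tau)(\phi(z))$ and $({\mathcal I}\sigma)(z)=({\mathcal I}\tau)(\phi(z))$, together with the appeal to Proposition~\ref{prop:diffint} and to minimality of the dependence subspace, are just a more explicit rendering of the same argument.
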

\begin{proof} The first inclusion follows from the fact that differentiation commutes with multiplication by meromorphic germs of functions. The second inclusion follows from inspection of the explicit formula (\ref{eq:Isigma}) for ${\mathfrak I}$.
\end{proof}

\begin {prop}
\label{prop: LocCoeff}
If a meromorphic germ of symbols $\sigma (z)$ depends on  a space $V$, and
$$\sigma (z) \sim \sum a_{n}(z)x^{\alpha (z)-n}$$
then so do $\alpha (z)$, $a_n(z), n\in \Z _{\ge 0}$ depend on $V$.
\end{prop}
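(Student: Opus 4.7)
The plan is to reduce the statement to the uniqueness of the asymptotic expansion of a polyhomogeneous symbol, which was established in Proposition~\ref{pp:poly}. Since $\sigma(z)$ depends on $V$, by definition there exists a basis $L_1,\dots,L_k$ of $V$, a meromorphic family of symbols $\tau(w)$ on a neighbourhood $W$ of $0\in\C^k$, and the map $\phi=(L_1,\dots,L_k):\C^n\to \C^k$ such that $\sigma(z)=\tau(\phi(z))$ on a common domain of definition outside the polar set. The family $\tau$ itself admits an asymptotic expansion of the form
\[
\tau(w)\sim \sum_{n\geq 0} b_n(w)\,x^{\beta(w)-n},
\]
with $\beta$ an affine function on $\C^k$ and $b_n$ meromorphic germs at $0$ with linear poles. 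Pulling back along $\phi$ gives the asymptotic expansion
\[
\tau(\phi(z))\sim\sum_{n\geq 0} b_n(\phi(z))\,x^{\beta(\phi(z))-n},
\]
which is well defined on $U\cap\phi^{-1}(W)$ outside poles, and coincides there with $\sigma(z)$.

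Next, I would fix a point $z_0$ in the open set where both $\sigma$ and the pulled-back family are holomorphic. At that point, $\sigma(z_0)$ is a polyhomogeneous symbol on $\R_{\geq 0}$ with two \emph{a priori} distinct asymptotic expansions: the original one with order $\alpha(z_0)$ and coefficients $a_n(z_0)$, and the pulled-back one with order $\beta(\phi(z_0))$ and coefficients $b_n(\phi(z_0))$. Proposition~\ref{pp:poly} tells us that the order and the coefficients of a polyhomogeneous asymptotic expansion are uniquely determined by the symbol, whence
\[
\alpha(z_0)=\beta(\phi(z_0)),\qquad a_n(z_0)=b_n(\phi(z_0))\quad\text{for all }n\geq 0.
\]
Since this holds on an open dense subset of the common domain, we obtain the equalities of meromorphic germs $\alpha=\beta\circ\phi$ and $a_n=b_n\circ\phi$. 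By definition of the dependence space, this shows that $\alpha$ and each $a_n$ depend on $V$.

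The main potential obstacle is simply the bookkeeping that ensures the uniqueness statement of Proposition~\ref{pp:poly}, which is a pointwise statement about a single symbol in $\cals_{\rm ph}(\R_{\geq 0})$, can be applied at a \emph{generic} $z_0$ outside the poles: this requires observing that both $\sigma(z_0)$ and $\tau(\phi(z_0))$ yield the same polyhomogeneous symbol on $\R_{\geq 0}$, hence the uniqueness of $(\alpha,\{a_n\})$ applies directly. No analytic estimate is needed beyond the definition of holomorphic families, since the coefficients $a_n(z)$ and the order $\alpha(z)$ are already encoded in the data of the family $\sigma(z)$ and are automatically meromorphic (resp. affine) in $z$.
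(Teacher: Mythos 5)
Your proposal is correct and follows essentially the same route as the paper's own proof: write $\sigma(z)=\tau(\phi(z))$ and invoke the uniqueness of the polyhomogeneous asymptotic expansion (Proposition~\ref{pp:poly}) to get $\alpha=\beta\circ\phi$ and $a_n=b_n\circ\phi$, hence the dependence on $V$. Your pointwise argument at a generic $z_0$ merely spells out the step the paper states in one line.
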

\begin{proof} Assuming that the polyhomogeneous symbol $\sigma (z) \sim \sum a_{n}(z)\, x^{\alpha (z)-n}$ reads $\sigma(z)=\tau(\phi(z)) $ for some symbol $\tau(z)\sim\sum b_n (z)\, x^{\beta(z)-n}$, it follows  from the uniqueness of the coefficients in the asympotic expansion of a polyhomogeneous symbol that
$a_n=b_n\circ \phi$ and $\alpha= \beta\circ \phi$ for any non negative integer $n$. The statement then follows.
\end{proof}

We are now ready to extend the  independence relation on meromorphic germs of functions  introduced in \cite[Definition 2.14]{CGPZ1} to meromorphic germs of symbols.
\begin{defn}
Two  meromorphic germs of symbols $\sigma_1 $ and $\sigma_2$ in $\Omega$ are said to be {\bf independent} whenever \[\sigma_1 \perp ^Q \sigma_2 :\Longleftrightarrow  \supp(\sigma_1 )\perp ^Q \supp(\sigma_2).\]
\mlabel{de:meroindep}
\end{defn}

\begin{ex}\begin{enumerate}
\item Given two  functions  $f, g\in {\mathcal M}(\C^\infty)$ and two polynomials $P, Q$ on $\R_{\geq 0}$, we have
\[ (f\cdot P)\perp^Q (g\cdot Q)\Longleftrightarrow f\perp^Q g.\]
\item Given $\ell_i, L_i \in {\mathcal L}$ with $i=1,2$, we have \[\frac{\langle x\rangle^{\ell_1}}{L_1}\perp^Q \frac{\langle x\rangle^{\ell_2}}{L_2}\Longleftrightarrow \{\ell_1, L_1\} \perp^Q \{\ell_2, L_2\}.\]
\end{enumerate}
\end{ex}

\begin{defn} We call the complex linear space generated by the set ${\cup_{\alpha^\prime (0)\neq 0}{\mathcal M}{\mathcal S}^\alpha(\C^k)}$ and the linear space  ${\mathcal M} (\C^\infty)\cdot {\mathcal P}(\R_{\geq 0})$ the space of admissible meromorphic germs of symbols, and denote it by $\Omega$.
\label{defn:Omega}
\end{defn}

\begin{rk} \begin{enumerate} \label{rk:cancelling orders}\item \label{rk:cancelling orders i}$\Omega$  is not an algebra for $c+c'\notin \Z_{\geq 0}\Longrightarrow \chi(x) \, x^{qz+c}\,   x^{-qz+c'}\notin \Omega$ for any excision function $\chi$ around zero. Yet it can be equipped with a locality algebra structure as we shall see from the subsequent proposition.
\item  The fact that we exclude meromorphic germs of symbols with constant order which are not polynomial, is motivated by the  fact that allowing for negative integer powers  can give rise to logarithmic symbols after integration. Here like in quantum field theory, we want to avoid such logarithmic symbols.
\end{enumerate}
\end{rk}
\begin{prop}\label{prop:orderprod} The triple
		$\left(\Omega, \perp^Q, m_{\Omega}\right)$ is a  commutative and unital \loc algebra, with unit given by the constant function $1$ and
$m_{\Omega} $ is the restriction of	the pointwise function multiplication to  the graph $\perp^Q\subset \Omega\times \Omega$.
	\end{prop}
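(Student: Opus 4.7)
The plan is to verify in order (a) that $(\Omega,\perp^Q)$ is a locality vector space, (b) that pointwise multiplication sends the graph $\Omega\times_{\perp^Q}\Omega$ into $\Omega$, and (c) that the resulting partial product is commutative, associative, unital with unit the constant function $1$, and satisfies the locality compatibility \eqref{eq:semigrouploc}. The common tool throughout is the behavior of the dependence subspace under the algebraic operations: $\supp(\sigma+\tau)\subseteq\supp(\sigma)+\supp(\tau)$, $\supp(\lambda\sigma)\subseteq\supp(\sigma)$, and, where the product is defined in $\Omega$, $\supp(\sigma\cdot\tau)\subseteq\supp(\sigma)+\supp(\tau)$. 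These extend the analogous statements from the function-valued case treated in \cite{CGPZ1}, since the supplementary $x$-variable plays no role in the dependence subspace.

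Step (a) is then routine: for $X\subseteq\Omega$, if $\sigma_1,\sigma_2\in X^{\perp^Q}$ and $\lambda\in\C$, subadditivity of $\supp$ gives $\supp(\lambda\sigma_1+\sigma_2)\perp^Q\supp(x)$ for every $x\in X$, so $X^{\perp^Q}$ is a linear subspace.

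Step (b) is the main obstacle, and the only place where $\perp^Q$-independence does real work. Remark \ref{rk:cancelling orders}(\ref{rk:cancelling orders i}) warns that the pointwise product leaves $\Omega$ precisely when the linear parts of the affine orders cancel; the task is to show that $\perp^Q$-independence rules this out. Decomposing $\sigma,\tau$ into simple generators, products within $\calm(\C^\infty)\cdot\calp(\R_{\geq 0})$ stay in that subalgebra, and products of a first-type generator with a second-type one only shift the affine order by a non-negative integer, leaving the nonzero linear part intact. The delicate case is the cross product of two simple symbol-valued meromorphic germs of affine orders $\alpha_i(z)=L_i(z)+c_i$ with $L_i\neq 0$. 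By Proposition~\ref{prop: LocCoeff}, $L_i\in\supp(\sigma_i)$, so $\sigma_1\perp^Q\sigma_2$ forces $Q(L_1,L_2)=0$. Were $L_1+L_2$ to vanish, one would have $L_2=-L_1$ and $Q(L_1,L_1)=-Q(L_1,L_2)=0$, contradicting $L_1\neq 0$ by positive-definiteness of $Q$. Hence $L_1+L_2\neq 0$, and by Lemma~\ref{lem:prodsigma} (or its obvious meromorphic analogue) the product lies in $\calm\cals^{\alpha_1+\alpha_2}(\C^\infty)\subseteq\Omega$.

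Step (c) is then formal. Commutativity is inherited from pointwise multiplication. For associativity on a triple $(\sigma_1,\sigma_2,\sigma_3)$ in the ternary graph, the intermediate products $\sigma_1\sigma_2$ and $\sigma_2\sigma_3$ lie in $\Omega$ by (b), and the subadditivity $\supp(\sigma_1\sigma_2)\subseteq\supp(\sigma_1)+\supp(\sigma_2)$ combined with the linearity of $\supp(\sigma_3)^{\perp^Q}$ from (a) yields $\sigma_1\sigma_2\perp^Q\sigma_3$, and symmetrically for the other bracketing; pointwise associativity of the factors then gives~\eqref{eq:asso}. The constant function $1\in\calp(\R_{\geq 0})\subseteq\Omega$ satisfies $\supp(1)=\{0\}$, hence $\{1\}^{\perp^Q}=\Omega$ and $1$ is a two-sided pointwise unit. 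Finally \eqref{eq:semigrouploc} is immediate from $\supp(\sigma_1\sigma_2)\subseteq\supp(\sigma_1)+\supp(\sigma_2)$ combined with the linearity of $U^{\perp^Q}$ from (a).
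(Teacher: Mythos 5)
Your proof is correct and follows essentially the same route as the paper's: establish $\supp(\sigma_1\sigma_2)\subseteq\supp(\sigma_1)+\supp(\sigma_2)$ and deduce the locality compatibility from it, with commutativity and the unit being immediate. You in fact supply more detail than the paper on the one nontrivial point, which the paper dispatches with ``by checking the order'': your positive-definiteness argument ($Q(L_1,L_2)=0$ and $L_1+L_2=0$ would force $Q(L_1,L_1)=0$, hence $L_1=0$) is exactly the reason the linear parts of the affine orders cannot cancel for $\perp^Q$-independent germs, so the product stays in $\Omega$.
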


\begin{proof} Assume that $\sigma_1, \sigma_2 \in \Omega$, $\sigma_1\perp^Q\sigma_2$. By checking the order, we know $\sigma _1\sigma _2 \in \Omega$.

If $\supp(\sigma_1)=\langle L_1, \cdots,
L_{k_1}\rangle$ and $\supp(\sigma_2)=\langle L_{k_1+1}, \cdots,
L_{k_1+k_2}\rangle$, so that
	 $$\sigma _1 (z)=\tau _1(L_1(z), \cdots,
L_{k_1}(z)),$$
$$\sigma _2 (z)=\tau _2(L_{k_1+1}(z),
\cdots, L_{k_1+k_2}(z)) $$
for some meromorphic germs of symbols $\tau_i(w_i)$, then the product reads
$$\sigma(z)= \tau _1(L_1(z), \cdots,
L_{k_1}(z)) \, \tau _2(L_{k_1+1}(z),
\cdots, L_{k_1+k_2}(z))= \tau _1(\phi_1(z)) \, \tau _2(\phi_2(z)),$$
where we have set $\phi_1(z):=(L_1(z), \cdots,
L_{k_1}(z))$ and $\phi_2(z):=(L_{k_1+1}(z), \cdots,
L_{k_1+k_2}(z))$.  This shows that $\supp(\sigma_1\, \sigma_2)\subset \supp(\sigma_1)\oplus \supp(\sigma_2)$.

So
\[\left(\supp(\sigma_1)\perp^Q \supp(\sigma_3)\wedge \supp(\sigma_2)\perp^Q \supp(\sigma_3)\right)\Longrightarrow \left( \supp(\sigma_1\, \sigma_2)\perp^Q \supp(\sigma_3)\right),\]
from which it follows that $\left(\Omega, \perp^Q, m_{\Omega}\right)$ is a \loc algebra.
Moreover, $\Omega$ is clearly commutative and has unit given by the constant function.
	\end{proof}

\subsection{Finite part at infinity on $\Omega$} \label{subsec:ev_at_infty} We can now define a finite part at infinity map on $\calm \cals (\C ^\infty)$ by using the  finite part at infinity map on $\cals _{\rm ph}(\R _{\ge 0})$. For a generator of the form
$$\sigma (z)\sim \sum_{j=0}^{\infty } a_{ j}(z)\,x^{\alpha (z)-j},
$$
on $U$, $\alpha (z)=L(z)+\alpha _0$, if $L\in \call$, we set it to be
$$\sigma (z)\mapsto \underset{+\infty}{\rm fp}(\sigma (z))
$$
on $U\setminus \{\alpha(z)+j=0, j \in \Z _{\ge 0}\}$, if $L\not =0$;

and
$$\sigma (z) \mapsto \underset{+\infty}{\rm fp}(\sigma (z))
$$
on $U$, if $L=0$.

\begin{lem}\label{lem:fpOmega} This defines a linear map
 $$\underset{+\infty}{\rm fp}: \calm \cals (\C ^\infty )\to \calm (\C ^\infty),$$
 on $\Omega$, it sends simple meromorphic germs to zero except on polynomials $ P(z)=\sum_{j=0}^k a_j(z)\, x^j$  for which $\underset{+\infty}{\rm fp} (P(z))= a_0(z)$.
\end{lem}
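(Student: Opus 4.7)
The plan is to establish the well-definedness of the map on a generator $\sigma(z) \sim \sum_{j\geq 0} a_j(z)\, x^{\alpha(z)-j}$ with $\alpha(z) = L(z) + \alpha_0$, then extend by linearity. The pointwise assignment comes from Eq.~(\ref{eq:finitepart_const}): for fixed $z$,
\[
\underset{+\infty}{\rm fp}(\sigma(z)) = \sum_{j\geq 0} a_j(z)\,\delta_{\alpha(z)-j,0},
\]
so at most one term survives. The task is to show this defines an element of $\calm(\C^\infty)$ and then to read off the behaviour on the two families of simple generators of $\Omega$.

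First I would split according to whether $L$ is zero. If $L \neq 0$, then the set $\{z\in U\mid \alpha(z)\in\Z_{\geq 0}\}$ is the countable locally finite union of hyperplanes $\{L(z) = j - \alpha_0\}_{j\in\Z_{\geq 0}}$; its complement is open and dense in $U$. On this complement, $\alpha(z)\notin\Z_{\geq 0}$, so $\sigma(z)\in \cals_{\rm ph}^{\notin \Z_{\geq 0}}(\R_{\geq 0})$ and Lemma~\ref{lem:fpzero} forces $\underset{+\infty}{\rm fp}(\sigma(z)) = 0$. Since this equals $0$ on a dense open set, the resulting germ is identically $0$ in $\calm(\C^\infty)$. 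If $L = 0$, then $\alpha(z) = \alpha_0$ is constant: either $\alpha_0 \notin \Z_{\geq 0}$ (finite part vanishes identically by Lemma~\ref{lem:fpzero}), or $\alpha_0 = k \in \Z_{\geq 0}$, in which case $\underset{+\infty}{\rm fp}(\sigma(z)) = a_k(z)$, which by assumption lies in $\calm(\C^\infty)$.

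Next I would specialise to the two types of simple generators of $\Omega$ from Definition~\ref{defn:Omega}. Generators in $\bigcup_{\alpha'(0)\neq 0} \calm\cals^\alpha(\C^k)$ have non-trivial linear part $L$ in their affine order, so the first case above applies and the germ is sent to $0$. Generators in $\calm(\C^\infty)\cdot\calp(\R_{\geq 0})$ are of the form $P(z)(x) = \sum_{j=0}^k a_j(z)\,x^j$ with $a_j\in\calm(\C^\infty)$; this is a polyhomogeneous symbol of constant integer order $k\in\Z_{\geq 0}$ (so $L=0$, $\alpha_0=k$), and its asymptotic expansion $\sigma(z)\sim\sum_{i=0}^k a_{k-i}(z)\, x^{k-i}$ has the coefficient of the degree-$0$ homogeneous component equal to $a_0(z)$. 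Applying Eq.~(\ref{eq:finitepart_const}) gives $\underset{+\infty}{\rm fp}(P(z)) = a_0(z)$.

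The main (mild) obstacle is the first case: one needs that the pointwise vanishing on a dense open subset of $U$ actually characterises the meromorphic germ. This is where the countable local finiteness of the excluded hyperplanes is used, together with the fact that two meromorphic germs agreeing on a dense open set coincide as germs. Linearity of the resulting map $\underset{+\infty}{\rm fp}:\calm\cals(\C^\infty)\to\calm(\C^\infty)$ is then immediate from the linearity of the pointwise finite part at infinity.
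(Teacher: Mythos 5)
Your proof is correct and follows essentially the same route as the paper's: the same case split on whether the linear part $L$ of the affine order vanishes, the same appeal to Lemma~\ref{lem:fpzero} to kill the non-integer orders, and the same identification of the constant coefficient $a_0(z)$ for the polynomial generators. The one step you elide is in the constant-order case on general elements of $\calm\cals(\C^\infty)$: the coefficient $a_k(z)$ is \emph{not} in $\calm(\C^\infty)$ ``by assumption'' — it is a priori only holomorphic off the polar locus, and the paper deduces its meromorphy with linear poles from the defining relation $L_1^{s_1}\cdots L_n^{s_n}\sigma(z)=\tau(z)$ with $\tau$ a holomorphic family, together with the uniqueness of the asymptotic expansion (giving $L_1^{s_1}\cdots L_n^{s_n}a_j=b_j$); this short argument should be included for the map to be well defined on all of $\calm\cals(\C^\infty)$, even though it is immediate for the generators of $\Omega$.
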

\begin{proof} By linearity, we only need to check simple meromorphic germs of symbols. For a simple meromorphic germ of symbols with order $\alpha (z)=L(z)+\alpha _0$, $L\not =0$, represented by
$$\sigma(z)\sim \sum_{j=0}^{\infty } a_{ j}(z)\,x^{\alpha (z)-j},$$
on $U\setminus X$ with $U$ contained in some open ball centered at zero, and $X=\{L_1=\cdots=L_n=0\}$, $a_j(z), j\in \Z _{\ge 0}$ holomorphic on $U\setminus X$.
By definition, on $U\setminus \{L_1=\cdots=L_n=\alpha -j=0, j\in \Z _{\ge 0}\}$,
$$ \underset{+\infty}{\rm fp} \sigma=0.
$$
By our choice of $U$, $U$ and $\{\alpha -j=0\}$ have no intersection when $j$ is big enough, so $U\setminus \{L_1=\cdots=L_n=\alpha -j=0, j\in \Z _{\ge 0}\}$ is $U\setminus {\rm finite \ hyperplanes} $. So in this case $ \underset{+\infty}{\rm fp} \sigma$ extends to the zero function on $U$, thus the 0 germ.

For a simple meromorphic germ of symbols with order $\alpha (z)=\alpha _0$, represented by
$$\sigma(z)\sim \sum_{j=0}^{\infty } a_{ j}(z)\,x^{\alpha _0-j},$$
on $U\setminus X$ with $U$ contained in some open ball centered at zero, and $X=\{L_1=\cdots=L_n=0\}$, $a_j(z), j\in \Z _{\ge 0}$ holomorphic on $U\setminus X$, and
$$L_1^{s_1}\cdots L_n^{s_n}\sigma =\tau(z)\sim \sum_{j=0}^{\infty } b_{ j}(z)\,x^{\alpha (z)-j}
$$
on $U\setminus X$ with $X=\{L_1=\cdots=L_n=0\}$, and $b_j(z), j\in \Z _{\ge 0}$ holomorphic on $U$. By the uniqueness of the asymptotic expansion, $L_1^{s_1}\cdots L_n^{s_n}a_j=b_j$, so $a_j \in \calm (\C ^\infty)$.

By definition, on $U\setminus \{L_1=\cdots=L_n=0\}$,
$$ \underset{+\infty}{\rm fp} \sigma=\sum a_j\delta _{\alpha _0-j,0}.
$$
Since this is only one term at most, $\underset{+\infty}{\rm fp} \sigma \in \calm (\C ^\infty)$. We have the conclusion.
\end{proof}
We know that $\underset{+\infty}{\rm fp} $ is a partial character on $\Sigma (\R_{\geq 0})$  (see Proposition \ref{prop:fpprod}).
By Proposition \mref {prop: LocCoeff}, we have
\begin{prop} \label{prop:fpOmega}
The finite part at infinity
\begin{eqnarray*}
 \underset{+\infty} {\rm fp}: \Omega&\longrightarrow& {\mathcal M}(\C^\infty),\\
 \sigma(z)&\longmapsto& \underset{+\infty} {\rm fp}(\sigma)(z)
\end{eqnarray*}
	is a morphism of locality algebras i.e., for any two independent germs of symbols $\sigma_1$ and $ \sigma_2 $ in $\Omega$, then
	\begin{equation}  \label{eq:fpOmega}
	  \underset{+\infty}{\rm fp} \left(\sigma_1 \, \sigma_2 \right) = \underset{+\infty}{\rm fp} ( \sigma_1 ))   \, \underset{+\infty}{\rm fp}( \sigma_2 ).
	  \end{equation}
	\end{prop}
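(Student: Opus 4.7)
By bilinearity of both the pointwise product and the finite-part map, the first step is to reduce to the case where $\sigma_1$ and $\sigma_2$ are two independent \emph{simple} generators of $\Omega$. According to Definition \ref{defn:Omega}, each such generator is either of type (A), namely a simple meromorphic germ of symbols of affine non-constant order $\alpha_i(z) = L_i(z) + c_i$ with $L_i \neq 0$, in which case Lemma \ref{lem:fpOmega} gives $\underset{+\infty}{\rm fp}(\sigma_i) = 0$; or of type (B), of the form $f_i(z)\,x^{k_i}$ for some $f_i \in \mathcal{M}(\C^\infty)$ and $k_i \in \Z_{\geq 0}$, in which case Lemma \ref{lem:fpOmega} yields $\underset{+\infty}{\rm fp}(\sigma_i) = f_i(z)\,\delta_{k_i,0}$.

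The key structural ingredient I shall invoke is Proposition \ref{prop: LocCoeff}, which ensures that both the linear form $L_i$ defining the affine order and every coefficient $a_{n,i}(z)$ of the asymptotic expansion of a type-(A) germ are pulled back from its dependence space $\supp(\sigma_i)$. Combined with the hypothesis $\supp(\sigma_1) \perp^Q \supp(\sigma_2)$ and the observation that two $Q$-orthogonal subspaces intersect only at $\{0\}$, this yields the crucial \emph{non-cancellation}: whenever $L_1 \in \supp(\sigma_1)$ and $L_2 \in \supp(\sigma_2)$ are non-zero, their sum $L_1 + L_2$ is also non-zero, since otherwise $-L_1 = L_2$ would lie in the intersection. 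By the same token, if both factors are of type (B) then $\supp(f_i) \subseteq \supp(\sigma_i)$ gives $f_1 \perp^Q f_2$, so that the product $f_1 f_2$ is well defined in the locality algebra $\mathcal{M}(\C^\infty)$.

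I would then split into the three possible combinations of types. Whenever at least one factor is of type (A), the non-cancellation step shows that $\sigma_1 \sigma_2$ still has affine non-constant order, hence lies in $\Omega$ (compare Remark \ref{rk:cancelling orders}) and has zero finite part at infinity by Lemma \ref{lem:fpOmega}; this matches the right-hand side of (\ref{eq:fpOmega}), which contains a factor $\underset{+\infty}{\rm fp}(\sigma_i) = 0$. When both factors are of type (B), the product equals $(f_1 f_2)(z)\,x^{k_1+k_2}$, whose $x^0$-coefficient is non-zero only when $k_1 = k_2 = 0$, in which case a direct inspection of Lemma \ref{lem:fpOmega} matches the two sides of (\ref{eq:fpOmega}).

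The hard part will be the structural verification that the product of two independent generators is actually still an element of $\Omega$: as warned in Remark \ref{rk:cancelling orders}, $\Omega$ is \emph{not} closed under the full pointwise product, precisely because the linear parts of orders can cancel and produce symbols of constant but non-polynomial order lying outside $\Omega$. The locality hypothesis is exactly what rules this out, through the non-cancellation argument above. Once this obstacle is cleared, the identity (\ref{eq:fpOmega}) follows by a routine application of Lemma \ref{lem:fpOmega} to the asymptotic expansion of $\sigma_1 \sigma_2$.
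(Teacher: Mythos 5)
Your proposal is correct and follows essentially the same route as the paper: reduce by (bi)linearity to simple generators, split into the three type combinations (both orders affine non-constant, mixed, both polynomial), use the orthogonality of dependence spaces via Proposition~\ref{prop: LocCoeff} to rule out cancellation of the linear parts of the orders, and conclude with Lemma~\ref{lem:fpOmega}. The only organisational difference is that the closure of $\Omega$ under products of independent germs, which you flag as the ``hard part'', is in the paper delegated to Proposition~\ref{prop:orderprod} rather than re-proved inside this argument.
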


 \begin{proof}In view of the linearity  of the map $\underset{+\infty}{\rm fp} $ and the fact that it only affects the $x$-variable, it is sufficient to consider products of simple holomorphic germs.

 Let $\sigma_1$ and $ \sigma_2 $ be two independent  simple holomorphic germs of symbols in $\Omega$ with respective orders $\alpha_1$ and $\alpha_2$ respectively. If for $i=1,2$,
 $$\sigma_i(z)\sim \sum_{n_i=0}^\infty a_{i n_i}(z)\, x^{\alpha(z)-n_i},$$
it follows from Proposition \ref{prop:orderprod} that  $\alpha_1\perp^Q\alpha_2$ and $ a_{1 n_1}\perp^Q a_{2 n_2}$ for any $(n_1, n_2)\in \Z_{\geq 0}^2$.

\begin{itemize}\item Assume that $\alpha_1$ and $\alpha_2$  are affine nonconstant respectively. Consequently,  the holomorphic  map $\alpha: (z_1, z_2)\longrightarrow   \alpha_1(z_1)+ \alpha_2(z_2)$  corresponding to the order of $\sigma_1\, \sigma_2$ is also affine nonconstant.
 By Lemma \ref{lem:fpOmega}, we know that $ z_1\longmapsto \underset{+\infty}{\rm fp}(\sigma_1)(z_1)$, $ z_2\longmapsto \underset{+\infty}{\rm fp}(\sigma_2)(z_2)$ and $(z_1, z_2)\longmapsto \underset{+\infty}{\rm fp}(\sigma_1\, \sigma_2)(z_1, z_2)$ all vanish  as meromorphic functions   respectively. It follows  that Eq.~(\ref{eq:fpOmega}) holds in that case.
  \item We now consider the case when one of the two symbols is polynomial so, let us assume that $\alpha_1$ is affine nonconstant in $z_1$ and $\sigma_2$ is a polynomial of degree $d_2$. In that case $\sigma_1(z_1)\, \sigma_2(z_2)$ is a simple holomorphic germ of order $\alpha_1(z_1)+d_2$ which is also affine nonconstant. The rest of the  above reasoning goes through in a similar manner.
  \item If the two symbols are polynomials say $\sigma_i(z)= \sum_{j_i=0}^{d_i} a^i_{j_i}(z)\, x^{d_i}$, so is their product $\sigma_1(z_1)\, \sigma_2(z_2)= \sum_{k=1}^{d_1+d_2} c_k(z_1, z_2)\, x^{d_1+d_2}$ with $c_k(z_1, z_2)= \sum_{j_1+j_2=k}  a^1_{j_1}(z_1)\, a^2_{j_2}(z_2)$ and we have \[\underset{+\infty}{\rm fp} \left(\sigma_1(z_1)\, \sigma_2(z_2)\right)=c_0(z_1, z_2)= a^1_{0}(z_1)\, a^2_{0}(z_2)= \underset{+\infty}{\rm fp} ( \sigma_1 ))   \, \underset{+\infty}{\rm fp}( \sigma_2 ).\]
\end{itemize}
\end{proof}

\begin{rk}
 Notice that Eq.~\eqref{eq:fpOmega} does not hold for any germs of symbols. For non-independent germs of symbols, the argument of Remark \ref{rk:cancelling orders}, (\ref{rk:cancelling orders i}) provides a counter-example.
\end{rk}

An easy observation:

\begin {prop} The finite part at infinity map equips $\calm (\C ^\infty)$ a $(\Omega, \perp ^Q)$-operated locality algebra structure:
$$\Omega \times \calm (\C ^\infty)\to \calm (\C ^\infty)
$$
$$(\sigma, f)\mapsto f \underset{+\infty}{\rm fp}( \sigma ) .
$$
\end{prop}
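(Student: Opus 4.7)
The plan is to verify directly the defining conditions of a \loc $(\Omega,\perp^Q)$-operated algebra structure on $\calm(\C^\infty)$, the main input being the inclusion $\supp(\underset{+\infty}{\rm fp}(\sigma))\subseteq \supp(\sigma)$ which follows from Lemma~\ref{lem:fpOmega} together with Proposition~\ref{prop: LocCoeff} (the finite part is a meromorphic coefficient of the asymptotic expansion, hence depends on no more variables than $\sigma$ itself).

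First I would declare the relation $\top_{\Omega,\calm}\subseteq \Omega\times \calm(\C^\infty)$ by $\sigma \top_{\Omega,\calm} f\Longleftrightarrow \supp(\sigma)\perp^Q \supp(f)$, which is the natural restriction to $\Omega\times \calm(\C^\infty)$ of the ambient dependence-space \loc relation used in Definition~\ref{de:meroindep}. For any such independent pair, $\underset{+\infty}{\rm fp}(\sigma)$ is a meromorphic germ whose dependence space is contained in $\supp(\sigma)$, hence $\underset{+\infty}{\rm fp}(\sigma)\perp^Q f$, so the product $f\,\underset{+\infty}{\rm fp}(\sigma)$ is well-defined in the \loc algebra $(\calm(\C^\infty),\perp^Q,\cdot)$ established in \cite{CGPZ1}. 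This shows $\beta^\sigma(f):=f\,\underset{+\infty}{\rm fp}(\sigma)$ gives a partial map $\Omega\times_\top\calm(\C^\infty)\to \calm(\C^\infty)$.

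Next I would verify the two compatibility conditions of Definition~\ref{defn:locopset}. If $(\sigma,f,f')\in \Omega\times_\top\calm(\C^\infty)\times_\top\calm(\C^\infty)$, then $\supp(\beta^\sigma(f))\subseteq \supp(f)+\supp(\underset{+\infty}{\rm fp}(\sigma))\subseteq \supp(f)+\supp(\sigma)$, both summands being $\perp^Q\supp(f')$, so $\beta^\sigma(f)\perp^Q f'$; this gives condition (\ref{eq:graftind0}). Similarly, if $(\sigma,\sigma',f)$ is mutually independent, then $\supp(\sigma')\perp^Q\supp(\beta^\sigma(f))$ by the same containment argument, yielding the second condition. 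Finally, given $(\sigma,f,f')$ mutually independent, the containment $\supp(ff')\subseteq \supp(f)+\supp(f')$ (Proposition~\ref{prop:orderprod} style argument) gives $\sigma\top ff'$, so the action extends to products as required for a \loc operated algebra.

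For the linearity axiom, fix $\sigma\in\Omega$: the set $\sigma^{\top_{\Omega,\calm}}=\{f\in\calm(\C^\infty)\mid \supp(f)\perp^Q\supp(\sigma)\}$ is a linear subspace of $\calm(\C^\infty)$ since $\perp^Q$-subspaces are stable under addition (the defining property of a \loc vector space, see \cite{CGPZ1}), and on this subspace $\beta^\sigma=\,\cdot\,\underset{+\infty}{\rm fp}(\sigma)$ is manifestly $\bfk$-linear. This completes all the axioms; no single step is an obstacle, the content of the proposition really amounts to collecting the dependence-space bookkeeping from Proposition~\ref{prop: LocCoeff} and Lemma~\ref{lem:fpOmega}.
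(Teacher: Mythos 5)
Your proof is correct. The paper states this proposition without proof (it is introduced as ``an easy observation''), and your verification supplies exactly the intended content: the crux is the inclusion $\supp\bigl(\underset{+\infty}{\rm fp}(\sigma)\bigr)\subseteq\supp(\sigma)$, which as you note follows from Proposition~\ref{prop: LocCoeff} together with Lemma~\ref{lem:fpOmega}, after which all the axioms of Definition~\ref{defn:locopset} and the operated-algebra conditions reduce to routine dependence-space bookkeeping of the kind already used in Proposition~\ref{prop:orderprod}.
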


	\subsection{\Loc Rota-Baxter operators on $\Omega$}
	
	\begin{thm} \label{thm:RB_Omega}
	\begin{enumerate}
	\item The integration map ${\mathfrak I}  $   extends to  a  map on $\Omega$ and  the triple
			$\left(\Omega, \perp^Q, m_{\Omega},{\mathfrak I}\right)$ is a \loc Rota-Baxter  commutative  algebra of weight zero, and we have ${\rm Id}\perp^Q {\mathfrak I}  $.
			\item The summation  maps ${\mathfrak S}_{\pm 1}  $   extend to  a  map on $\Omega$, the triple
			$\left(\Omega, \perp^Q, m_{\Omega},{\mathfrak S}_{\pm 1}\right)$ is a \loc Rota-Baxter  commutative  algebra of weight $\mp 1$, and we have ${\rm Id}\perp^Q {\mathfrak S}_{\pm 1} $.
			\end{enumerate}
	\end{thm}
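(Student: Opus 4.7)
For the integration operator I would start from the explicit formula Eq.~(\ref{eq:Isigma}), applied fibrewise in $z$: for each $z$ outside the discrete set where the affine order $\alpha(z) = L(z) + c$ meets $\Z_{\geq -1}$, $\sigma(z)$ lies in $\Sigma(\R_{\geq 0})$, so ${\mathfrak I}(\sigma)(z)$ is defined by Proposition \ref{prop:diffint}. The formula exhibits ${\mathfrak I}(\sigma)(z)$ as a polyhomogeneous symbol of affine order $\alpha(z) + 1$ with asymptotic coefficients $a_j(z)/(\alpha(z)-j+1)$, which are meromorphic germs with linear poles along the hyperplanes $L(z) = j - 1 - c$, together with a remainder that depends holomorphically on $z$ by the uniform estimate Eq.~(\ref{eq:tauNestimate}). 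Since the condition $\alpha'(0) \neq 0$ is preserved upon adding $1$, ${\mathfrak I}(\sigma) \in {\mathcal M}{\mathcal S}^{\alpha + 1}(\C^\infty) \subset \Omega$; the polynomial component goes to itself (with meromorphic coefficients). For ${\mathfrak S}_{\pm 1}$ I would apply the same reasoning to the Euler--Maclaurin formula Eq.~(\ref{eq:EMLS}): ${\mathfrak S} = {\mathfrak I} + \mu$, where $\mu$ is built from point evaluations at $0$, the derivatives $\partial_x^{k-1}\sigma$ (which stay in $\Omega$ by Proposition \ref{prop:diffint}), and the bounded remainder $\int_0^x \overline{B_K}(t)\,\sigma^{(K)}(t)\,dt$; taking $K$ large enough renders the latter absolutely convergent uniformly on compacts in $z$ as in the proof of Proposition \ref{prop:SRBloc}, so it is a holomorphic germ of symbols in $\Omega$. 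The pull-back by $x \mapsto x + (\lambda-1)/2$ preserves $\Omega$ by Proposition \ref{prop:fpinftytransl}.

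\textbf{Dependence spaces and independence of $\mathrm{Id}$.} Lemma \ref{lem:stabdiffint} already gives $\supp({\mathfrak I}(\sigma)) \subseteq \supp(\sigma)$. The same inclusion for ${\mathfrak S}_{\pm 1}$ follows term by term in the Euler--Maclaurin decomposition: differentiation commutes with composition by linear forms, the Bernoulli numbers carry no $z$-dependence, evaluations $\sigma^{(k-1)}(0)$ depend at worst on $\supp(\sigma)$, and the $\overline{B_K}$-weighted integral depends on $\sigma$ only through $\partial_x^K \sigma$, whose dependence space by Lemma \ref{lem:stabdiffint} is contained in $\supp(\sigma)$. Hence ${\mathfrak S}_{\pm 1}$ sends $\supp(\sigma)$ into itself, which simultaneously yields the locality property $\sigma_1 \perp^Q \sigma_2 \Rightarrow {\mathfrak S}_\lambda(\sigma_1) \perp^Q {\mathfrak S}_\lambda(\sigma_2)$, the independence $\mathrm{Id} \perp^Q {\mathfrak S}_\lambda$, and the mixed independencies ${\mathfrak S}_\lambda(\sigma_1) \perp^Q \sigma_2$ needed to make the Rota--Baxter identity well-posed in $(\Omega, \perp^Q)$.

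\textbf{Rota--Baxter identity.} For the integration map I would invoke the classical Leibniz argument: for independent $\sigma_1, \sigma_2 \in \Omega$,
$$\frac{d}{dx}\bigl({\mathfrak I}(\sigma_1)\,{\mathfrak I}(\sigma_2)\bigr) = \sigma_1\,{\mathfrak I}(\sigma_2) + {\mathfrak I}(\sigma_1)\,\sigma_2$$
pointwise in $z$, and integrating from $0$ to $x$ (where ${\mathfrak I}$ vanishes by construction) yields the weight zero identity ${\mathfrak I}(\sigma_1)\,{\mathfrak I}(\sigma_2) = {\mathfrak I}(\sigma_1\,{\mathfrak I}(\sigma_2)) + {\mathfrak I}({\mathfrak I}(\sigma_1)\,\sigma_2)$. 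For ${\mathfrak S}_{\pm 1}$ I would first establish the identity at integer arguments $N$ through the combinatorial decomposition
$$\sum_{0 \leq n, m \leq N} \sigma_1(n)\,\sigma_2(m) = \sum_{n < m} \sigma_1(n)\,\sigma_2(m) + \sum_{m < n} \sigma_1(n)\,\sigma_2(m) + \sum_{n = m} \sigma_1(n)\,\sigma_2(n),$$
which reorganises into the weight $-1$ Rota--Baxter relation for ${\mathfrak S}_1$ and, after the shift defining ${\mathfrak S}_{-1}$, into the weight $+1$ relation for ${\mathfrak S}_{-1}$. Both sides of the relation produce elements of $\Omega$ by the first paragraph; since each side is the Euler--Maclaurin interpolation of the sequence of values it takes on $\Z_{\geq 0}$, and since this EM interpolation is uniquely determined inside $\Sigma(\R_{\geq 0})$ once one prescribes integer values together with a polyhomogeneous asymptotic structure, the equality of integer values forces the equality of symbols, and then of meromorphic families in $z$ by analytic continuation.

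\textbf{Main obstacle.} The delicate step is the passage from integer values to an identity of multivariate meromorphic germs of polyhomogeneous symbols in the second case. One has to verify carefully that, for independent $\sigma_1, \sigma_2$, each nested expression such as ${\mathfrak S}_\lambda(\sigma_1\,{\mathfrak S}_\lambda(\sigma_2))$ is produced by the same Euler--Maclaurin prescription as the partial-sum side, so that uniqueness of the interpolation within $\Omega$ can be applied; this is essentially a bookkeeping exercise combining the explicit formulas of paragraph one with the support-stability of paragraph two, and ensures that the symbol-level identity holds exactly (not merely modulo Schwartz functions) and transports verbatim to meromorphic families in $z$.
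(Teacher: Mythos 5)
Your first two paragraphs follow the paper's own route: the extension to $\Omega$ is read off from the explicit formula (\ref{eq:Isigma}) applied to a (simple) holomorphic germ of nonconstant affine order (yielding a germ of order $\alpha(z)+1$ with poles along $\alpha^{-1}(\Z_{\geq -1})$), the polynomial part is handled separately, and the locality statements ${\mathfrak I}\perp^Q{\mathfrak I}$, ${\mathfrak S}_{\pm1}\perp^Q{\mathfrak S}_{\pm1}$ and $\Id\perp^Q{\mathfrak S}_{\pm1}$ come from containment of dependence spaces (Lemma \ref{lem:stabdiffint}, Proposition \ref{prop:orderprod}) together with the stability of independence under $\partial_x$ term by term in the Euler--Maclaurin decomposition. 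That part is sound and matches the paper.

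The gap is in your treatment of the Rota--Baxter identity for ${\mathfrak S}_{\pm1}$. The paper does not argue from integer values at all: it simply invokes the already known Rota--Baxter property of these interpolated summation operators on univariate symbols (\cite{MP}), so the only new content is the locality and the extension to $\Omega$. You instead establish the identity at integer arguments $N$ and then claim that equality of symbols follows because ``the EM interpolation is uniquely determined inside $\Sigma(\R_{\geq 0})$ once one prescribes integer values together with a polyhomogeneous asymptotic structure.'' That uniqueness statement is false: the function $x\mapsto e^{-x}\sin(\pi x)$ is a smoothing symbol, hence lies in $\cals^{-\infty}(\R_{\geq 0})\subset\Sigma(\R_{\geq 0})$ with trivial asymptotic expansion, and it vanishes at every nonnegative integer without being identically zero. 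So two elements of $\Sigma(\R_{\geq 0})$ can agree at all integers and have identical polyhomogeneous expansions while differing as symbols, and your passage from the discrete identity to the identity in $\Omega$ (which is what the theorem asserts, and what is needed since ${\mathfrak S}_\lambda$ is iterated in the branched construction, where a smoothing discrepancy would contribute to later constant terms) does not go through. To close the gap you must verify the relation at the level of the interpolating functions themselves --- either by quoting the univariate result of \cite{MP}, as the authors do, or by checking directly from Eq.~(\ref{eq:EMLS}) that both sides of the weight-$\mp1$ relation are given by the same Euler--Maclaurin expression (the ``bookkeeping exercise'' you defer is in fact the whole proof, and it cannot be replaced by an interpolation-uniqueness argument).
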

	\begin{proof} Since $\Omega$ consists of sums of simple meromorphic germs of symbols with nonconstant affine order and polynomials with meromorphic coefficients and  since the integration map  clearly stabilises the algebra of such polynomials,  it suffices to consider the integration map on  simple meromorphic germs of symbols with nonconstant affine order and it is sufficient to consider  a holomorphic germ of symbols with nonconstant affine order.

Let $\rho(z)$ be such a holomorphic family of order $\alpha(z)$,
 $$\rho(z)\sim  \sum_{j=0}^\infty a_j(z)\, x^{\alpha (z)-j}.$$
 \begin{itemize}
	\item   We want to show that
 ${\mathfrak I}(\rho(z)) \in \Omega$.
 For  $z\notin \alpha^{-1}\left(   \Z\right)$, the symbol   $\rho(z)$  whose order $\alpha(z)$ lies in $\C\setminus \Z$,  is an element of $ \Sigma (\R_{\geq 0})$ so that  we can implement the explicit expression  (\ref{eq:Isigma})  of the integration map     which yields
\begin{eqnarray}\label{eq:Irho}
 {\mathfrak I}(\rho(z))(x) & =&\sum_{j=0}^{N-1}a_j(z)\,\left(\int_0^1 \chi(y)\, y^{\alpha (z)-j}
(y)\, dy+ \frac{x^{\alpha (z)-j+1}}{\alpha (z)-j+1} - \frac{1}{\alpha (z)-j+1}\right)\nonumber \\
                                &+ &   {\int_0^x\tau_{(N)}^\chi(z)(y)\, dy.}\noindent
\end{eqnarray}
The r.h.s defines a  meromorphic germ of polyhomogeneous symbols of order $\alpha (z)+1$     whose poles lie in  $\alpha^{-1}([-1, +\infty[\cap \Z)$ and thus defines  an element  $ {\mathfrak I}(\rho(z)) $ in $\Omega$.

A similar procedure using the explicit expression 	 (\ref{defi_barP_barQ}) shows the corresponding assertion
for ${\mathfrak S}_{\pm 1}(\rho (z))$.
\item The fact that  ${\mathfrak I}$ is a locality map, namely \[ \sigma_1 \perp^Q\sigma_2\Longrightarrow {\mathfrak I}(\sigma_1)\perp^Q{\mathfrak I}(\sigma_2),\] follows from combining Eq.~(\ref{eq:Irho}) with  the first item of Proposition \ref{prop:orderprod}. To deduce the locality of ${\mathfrak S}_{\pm 1}$ from that of  ${\mathfrak I}$,  it is useful to observe that the derivatives of any two  independent holomorphic germs of symbols are also independent \[\rho_1\perp^Q \rho_2\Longrightarrow \partial_x^{(k)}\rho_1\perp^Q\partial_x^{(k)},\quad \forall k\in \N.\]  It is then easy to deduce from the locality of the map ${\mathfrak I}$ and the Euler-Maclaurin formula,  that ${\mathfrak S}_{\pm 1}$ are locality maps.

\item The locality $\lambda$-Rota-Baxter property  of the maps $ {\mathfrak I}$ and ${\mathfrak S}_{\pm 1}$ for $\lambda=0$ and $\lambda=\pm 1$ respectively, is a consequence of the corresponding known usual Rota-Baxter properties.
\end{itemize}
	\end{proof}
	Since the   operators ${\mathfrak S}_0={\mathcal I}$ and ${\mathcal S}_{\pm 1}$  stabilise $\Omega$, we can compose them on the  left with the finite part at infinity investigated in Proposition \ref{prop:fpOmega}.
	 \begin{coro}  \begin{enumerate}
	\item The cut-off integral $\cutoffint_0^\infty:=\underset{+\infty} {\rm fp}\circ{\mathfrak I}$ defines a linear map   \[ \cutoffint_0^\infty: \Omega\longrightarrow {\mathcal M}(\C^\infty)\] compatible with the filtration on the source and target space.
			\item The cut-off sum $\cutoffsum_0^\infty:=\underset{+\infty}{\rm fp}\circ{\mathfrak S}_{\pm 1}$ defines a linear map   \[ \cutoffsum_0^\infty: \Omega\longrightarrow {\mathcal M}(\C^\infty)\]  compatible with the filtration on the source and target space.
			\end{enumerate}
	\end{coro}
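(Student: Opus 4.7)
The plan is to assemble the corollary directly from the three ingredients already established: Theorem \ref{thm:RB_Omega} (which gives $\mathfrak{I}, \mathfrak{S}_{\pm 1}\colon \Omega\to\Omega$), Proposition \ref{prop:fpOmega} (which gives the linear map $\underset{+\infty}{\rm fp}\colon \Omega\to \mathcal{M}(\C^\infty)$), and Lemma \ref{lem:stabdiffint} (together with Proposition \ref{prop: LocCoeff}) which controls how dependence subspaces behave under differentiation and integration. Since all the maps in sight are linear, linearity of the compositions $\cutoffint_0^\infty$ and $\cutoffsum_0^\infty$ is immediate, and the only substantive assertion is filtration compatibility. The filtrations are the natural ones given by the direct limit descriptions $\Omega=\bigcup_k \Omega_k$ and $\mathcal{M}(\C^\infty)=\bigcup_k \mathcal{M}(\C^k)$, where an element sits in level $k$ when its dependence subspace is contained in $(\C^k)^*$.

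First I would reduce to the statement: if $\sigma\in\Omega$ has $\supp(\sigma)\subseteq (\C^k)^*$, then $\mathfrak{I}(\sigma)$ and $\mathfrak{S}_{\pm 1}(\sigma)$ also have dependence subspace contained in $(\C^k)^*$. For $\mathfrak{I}$ this is exactly the second inclusion of Lemma \ref{lem:stabdiffint}. For $\mathfrak{S}_{\pm 1}$ it follows by inspection of the Euler--Maclaurin expression (\ref{defi_barP_barQ}): the first summand is $\mathfrak{I}(\sigma)$ (handled), each summand $\tfrac{B_k}{k!}\bigl(\sigma^{(k-1)}(x)-\sigma^{(k-1)}(0)\bigr)$ depends only on derivatives of $\sigma$, whose support does not grow by the first inclusion of Lemma \ref{lem:stabdiffint}, and the remainder $\int_0^x \overline{B_K}(t)\,\sigma^{(K)}(t)\,dt$ is again an integral of a derivative of $\sigma$, to which the same two inclusions apply in succession. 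The pull-back by the translation $\mathfrak{t}_{(\lambda-1)/2}^*$ involved in the definition of $\mathfrak{S}_\lambda$ manifestly does not enlarge the dependence subspace either.

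Next I would verify that $\underset{+\infty}{\rm fp}$ preserves the filtration: this is essentially built into Lemma \ref{lem:fpOmega}, because the output is either zero or the leading coefficient $a_0(z)$ of a polyhomogeneous expansion, and by Proposition \ref{prop: LocCoeff} each coefficient $a_j$ inherits the dependence subspace of $\sigma$. Composing the two preservation statements yields $\cutoffint_0^\infty(\Omega_k)\subseteq \mathcal{M}(\C^k)$ and $\cutoffsum_0^\infty(\Omega_k)\subseteq \mathcal{M}(\C^k)$, which is the filtration compatibility required.

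I do not expect any genuine obstacle: the only point that requires a small check rather than a direct citation is the preservation of support by $\mathfrak{S}_{\pm 1}$, and even that reduces mechanically to Lemma \ref{lem:stabdiffint} term by term in the Euler--Maclaurin formula. If one wanted a streamlined write-up, the whole corollary could be presented in a few lines by invoking Theorem \ref{thm:RB_Omega}, Proposition \ref{prop:fpOmega}, and noting that both $\mathfrak{I}$ and $\mathfrak{S}_{\pm 1}$ map $\Omega_k$ into itself, so that $\underset{+\infty}{\rm fp}\circ \mathfrak{I}$ and $\underset{+\infty}{\rm fp}\circ \mathfrak{S}_{\pm 1}$ map $\Omega_k$ into $\mathcal{M}(\C^k)$.
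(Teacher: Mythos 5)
Your proposal is correct and follows essentially the same route as the paper, which states the corollary without proof beyond the preceding remark that ${\mathfrak S}_0={\mathfrak I}$ and ${\mathfrak S}_{\pm 1}$ stabilise $\Omega$ (Theorem \ref{thm:RB_Omega}) so that they can be composed with $\underset{+\infty}{\rm fp}$ from Proposition \ref{prop:fpOmega}. Your explicit verification of filtration compatibility via Lemma \ref{lem:stabdiffint} and Proposition \ref{prop: LocCoeff} fills in a detail the paper leaves implicit, and it is sound.
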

\vfill \eject \noindent

\part{Branched zeta values}

We  bring together the algebraic and analytic aspects of this paper to define and   renormalise branched zeta values. These   are higher zeta functions which   generalise the usual multiple zeta values; a
detailed study of convergent branched zeta values was carried out in  \cite{M}.

\section{Branched zeta functions}

{For the rest of the paper, we will take $\bfk$ to be $\C$. }

\subsection{Branching the Rota-Baxter operators ${\mathfrak S}_\lambda$}
Let $\Omega$  be the algebra of admissible meromorphic germs of symbols defined in Subsection \ref{subsec:ev_at_infty} equipped with the independence relation $\perp ^Q$ for
the canonical inner product $Q$,  which we will often  write it as  $\top_\Omega$. With the notations of Definition \ref{defn:prop_dec_forests}, we consider the \loc algebra $\mathcal{F}_{\Omega,\top_\Omega}$ of properly $\Omega$-decorated forests,
which we refer to as the $(\Omega, \top_\Omega)-$\loc algebra of forests properly decorated by meromorphic
	multivariate germs of symbols.  A properly $\Omega$-decorated forest will be denoted by $\F:= (F, \sigma_F)$.

Let $\lambda\in \{\pm 1\}$.    {In view of the properties proved in Theorem \ref{thm:RB_Omega} of the  maps $ {\mathfrak S}_\lambda$  introduced in Definition~\ref{defn:SIlambda} we can apply to them Corollary \ref{coro:existencemapbranching}} and  build the corresponding branched map
\begin{equation}\label{eq:BranchedSlambda}\widehat{{\mathfrak S}_\lambda} : {\C} \mathcal{F}_{\Omega,\top_\Omega}\to \Omega,
\end{equation}
which {is} a morphism of \loc unital algebras.
\begin{prop}\label{pp:barPF}
	
	For  $\lambda\in \{\pm 1\}$,  the map
		\begin{eqnarray*} \mathcal{Z}^\lambda: \C {\mathcal F}_{\Omega, \top_\Omega}&\longrightarrow& \mathcal{M} (\C^{\infty})\\
		  \F:=(F, \sigma_F)&\longmapsto & \mathcal{Z}_{\F}^{\lambda} :={\underset{+\infty}{\rm fp}\left( \widehat{{\mathfrak S}_\lambda}(\F)\right)}
		\end{eqnarray*}
 is a morphism of \loc algebras.
\end{prop}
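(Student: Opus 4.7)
The plan is to read $\mathcal{Z}^\lambda$ as a composition of two locality algebra morphisms, and then verify that such a composition is itself a locality algebra morphism. This is essentially a matter of chaining results already established in the paper, so the work is in assembling them cleanly rather than computing anything new.

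First, I would note that ${\mathfrak S}_\lambda:\Omega\to\Omega$ satisfies the hypotheses of Corollary~\ref{coro:existencemapbranching}. Indeed, by Theorem~\ref{thm:RB_Omega}, $(\Omega,\perp^Q,m_\Omega,{\mathfrak S}_\lambda)$ is a locality Rota-Baxter commutative unital algebra (of weight $-\lambda$) with ${\rm Id}\perp^Q {\mathfrak S}_\lambda$; in particular ${\mathfrak S}_\lambda$ is a locality map independent of $\Id_\Omega$. Corollary~\ref{coro:existencemapbranching}, in its commutative unital algebra version, then yields that $\widehat{{\mathfrak S}_\lambda}:\C\mathcal{F}_{\Omega,\top_\Omega}\to(\Omega,\top_\Omega)$ is a morphism of locality unital commutative algebras; in particular it is locality-multiplicative on independent pairs and sends the empty forest to $1_\Omega$.

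Next, Proposition~\ref{prop:fpOmega} tells us that $\underset{+\infty}{\rm fp}:(\Omega,\top_\Omega)\to(\mathcal{M}(\C^\infty),\perp^Q)$ is a morphism of locality algebras, namely a locality map satisfying $\underset{+\infty}{\rm fp}(\sigma_1\sigma_2)=\underset{+\infty}{\rm fp}(\sigma_1)\,\underset{+\infty}{\rm fp}(\sigma_2)$ whenever $\sigma_1\perp^Q\sigma_2$. Since $\mathcal{Z}^\lambda=\underset{+\infty}{\rm fp}\circ\widehat{{\mathfrak S}_\lambda}$ by definition, it is linear and it remains to check the two clauses of being a locality algebra morphism.

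Concretely, if $\F_1,\F_2\in\C\mathcal{F}_{\Omega,\top_\Omega}$ satisfy $\F_1\top_{\mathcal{F}_\Omega}\F_2$, then $\widehat{{\mathfrak S}_\lambda}(\F_1)\perp^Q\widehat{{\mathfrak S}_\lambda}(\F_2)$ in $\Omega$ because $\widehat{{\mathfrak S}_\lambda}$ is a locality map; applying that $\underset{+\infty}{\rm fp}$ is a locality map yields $\mathcal{Z}_{\F_1}^\lambda\perp^Q \mathcal{Z}_{\F_2}^\lambda$, so $\mathcal{Z}^\lambda$ is a locality map. For multiplicativity on such an independent pair, we chain the multiplicativity of the two factors:
\[
\mathcal{Z}^\lambda_{\F_1\cdot\F_2}
=\underset{+\infty}{\rm fp}\bigl(\widehat{{\mathfrak S}_\lambda}(\F_1\cdot\F_2)\bigr)
=\underset{+\infty}{\rm fp}\bigl(\widehat{{\mathfrak S}_\lambda}(\F_1)\,\widehat{{\mathfrak S}_\lambda}(\F_2)\bigr)
=\underset{+\infty}{\rm fp}\bigl(\widehat{{\mathfrak S}_\lambda}(\F_1)\bigr)\,\underset{+\infty}{\rm fp}\bigl(\widehat{{\mathfrak S}_\lambda}(\F_2)\bigr)
=\mathcal{Z}^\lambda_{\F_1}\,\mathcal{Z}^\lambda_{\F_2},
\]
where the second equality uses that $\widehat{{\mathfrak S}_\lambda}$ is a locality algebra morphism and the third uses Proposition~\ref{prop:fpOmega}, both applied to the independent pair $(\widehat{{\mathfrak S}_\lambda}(\F_1),\widehat{{\mathfrak S}_\lambda}(\F_2))$ coming from the previous step. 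The unit goes to the unit since both factors send $1$ to $1$. There is no serious obstacle here: the only subtle point is checking that independence is propagated along each arrow, which is precisely the content of the two cited results; everything else is formal.
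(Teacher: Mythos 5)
Your proposal is correct and follows exactly the paper's argument: the paper likewise writes $\mathcal{Z}^\lambda=\underset{+\infty}{\rm fp}\circ\widehat{{\mathfrak S}_\lambda}$ and invokes Corollary~\ref{coro:existencemapbranching} and Proposition~\ref{prop:fpOmega} to conclude that the composition of these two locality algebra morphisms is again one. Your version merely spells out the propagation of independence and the multiplicativity chain that the paper leaves implicit.
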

\begin{proof}
		Combining the facts that  $\widehat{{\mathfrak S}_\lambda}$  and  the finite part map $\underset{{+\infty}}{\rm fp}$
		are local morphisms of \loc algebras  (by Corollary \ref{coro:existencemapbranching} and Proposition \ref{prop:fpOmega} respectively) yields that  the composition
		$\F\mapsto   \mathcal{Z}_{\F}^\lambda$ {gives} a morphism of \loc algebras.
\end{proof}

The subsequent properties are  a straightforward consequence of the fact that $\mathcal{Z}^\lambda$ is a morphism of  \loc algebras and  $\widehat{{\mathfrak S}_\lambda}$ a morphism of \loc  operated  algebras.

\begin{coro} \label{sep_forest}
	The following identities of meromorphic functions hold:
	\begin{enumerate}
		\item For mutually independent properly decorated $\Omega$-forests $\F_1\top_\Omega\F_2$, we have
		 $$\mathcal{Z}^\lambda_{\F_1\F_2}=\mathcal{Z}^\lambda_{\F_1}\cdot \mathcal{Z}_{\F_2}^\lambda.$$
		 i.e. $\mathcal{Z}^\lambda$ is a \loc character.
		\item For a properly decorated $\Omega$-forest $\F=(F, \sigma_F)$ and $\sigma\perp_\Omega \sigma_F(v)$ for all $v\in {\mathcal V}(F)$, thus making $B_+^{\sigma}(\F)$ a properly decorated tree, we have
		$$\mathcal{Z}^\lambda_{ B_+^{\sigma}(\F)}=  {\underset{+\infty}{\rm fp}\left({\mathfrak S}_\lambda\left(\sigma \, \widehat{{\mathfrak S}_\lambda}(\F)\right)\right)}.$$
	\end{enumerate}
\end{coro}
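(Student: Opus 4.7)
The plan is to derive both items as direct consequences of the structural results assembled just before the statement, without any further analytical work. The key observation is that $\mathcal{Z}^\lambda$ has been exhibited in Proposition~\ref{pp:barPF} as a composition of two \loc algebra morphisms: the branched map $\widehat{{\mathfrak S}_\lambda}$ from Corollary~\ref{coro:existencemapbranching} and the finite part at infinity $\underset{+\infty}{\rm fp}$ from Proposition~\ref{prop:fpOmega}. Everything else reduces to unwinding the universal properties these maps enjoy.

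For item~(i), I would first note that since $(F_1,\sigma_{F_1})$ and $(F_2,\sigma_{F_2})$ are mutually independent properly decorated forests, their concatenation $\F_1\F_2$ is a well-defined element of ${\mathcal F}_{\Omega,\top_\Omega}$ (and the pair $(\F_1,\F_2)$ lies in $\top_{{\mathcal F}_{\Omega,\top_\Omega}}$). The multiplicativity property \eqref{rec_branching_procbranched} of $\widehat{{\mathfrak S}_\lambda}$ on independent arguments gives $\widehat{{\mathfrak S}_\lambda}(\F_1\F_2)=\widehat{{\mathfrak S}_\lambda}(\F_1)\cdot\widehat{{\mathfrak S}_\lambda}(\F_2)$, where the product lies in $\Omega$ and the factors are independent in $\Omega$ since $\widehat{{\mathfrak S}_\lambda}$ is a \loc morphism. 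Applying Proposition~\ref{prop:fpOmega} to this independent pair then yields $\underset{+\infty}{\rm fp}(\widehat{{\mathfrak S}_\lambda}(\F_1\F_2))=\underset{+\infty}{\rm fp}(\widehat{{\mathfrak S}_\lambda}(\F_1))\cdot\underset{+\infty}{\rm fp}(\widehat{{\mathfrak S}_\lambda}(\F_2))$, which is the desired identity.

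For item~(ii), the hypothesis $\sigma\perp_\Omega\sigma_F(v)$ for every $v\in{\mathcal V}(F)$ is precisely the condition ``$\sigma$ independent of $(F,\sigma_F)$'' needed to apply the branching relation \eqref{Bbranched} from Corollary~\ref{coro:existencemapbranching}, which gives
\[
\widehat{{\mathfrak S}_\lambda}\bigl(B_+^\sigma(\F)\bigr) \;=\; {\mathfrak S}_\lambda\bigl(\sigma\,\widehat{{\mathfrak S}_\lambda}(\F)\bigr).
\]
Applying $\underset{+\infty}{\rm fp}$ to both sides and using the definition of $\mathcal{Z}^\lambda$ then yields the stated formula. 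There is no real obstacle here; the only point that demands care is checking that the independence assumption on $\sigma$ and the decoration of $\F$ correctly translates, via the \loc structure of $\Omega$, into the condition that $\sigma$ is independent of $\widehat{{\mathfrak S}_\lambda}(\F)$ in $\Omega$, so that the product $\sigma\,\widehat{{\mathfrak S}_\lambda}(\F)$ and the application of ${\mathfrak S}_\lambda$ are defined in the partial algebra. This follows from the fact that $\widehat{{\mathfrak S}_\lambda}$ is a \loc map and that $\supp(\widehat{{\mathfrak S}_\lambda}(\F))$ is contained in the span of the dependence subspaces of the decorations $\sigma_F(v)$, by iterated application of Lemma~\ref{lem:stabdiffint} and Proposition~\ref{prop:orderprod}.
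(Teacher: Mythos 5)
Your proposal is correct and follows essentially the same route as the paper: both items are obtained by unwinding $\mathcal{Z}^\lambda=\underset{+\infty}{\rm fp}\circ\widehat{{\mathfrak S}_\lambda}$, using the multiplicativity of the branched map and the \loc character property of the finite part for item~(i), and the operated-algebra relation \eqref{Bbranched} for item~(ii). The extra care you take in verifying that the independence hypotheses propagate so that all partial products are defined is already implicit in the cited universal properties, but it is a sound addition.
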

\begin{proof} The first property is a direct consequence of the \loc morphism  property and of Proposition \ref{prop:fpprod}.
	The last property follows from the fact that $\widehat{{\mathfrak S}_\lambda}$ is a morphism of operated \loc algebras for the operation defined in Lemma \ref{lem:Omegaphi}:
	\[\widehat {{\mathfrak S}_\lambda}( B_+^{\sigma}(\F))=  {\mathfrak S}_\lambda\left(\sigma\, \widehat{{\mathfrak S}_\lambda}({\F})\right).\]
\end{proof}

\subsection{Branched zeta functions}
\label{subs:trees_dec_symb}
We next set $E=\Z_{\geq1}\times\C$ endowed with the  independence relation $(k, s)\,\top_E \, (\ell,t):\Longleftrightarrow k\neq \ell$.

\begin{rk}
 As it will become  clear  from the following Lemma, $\Z_{\geq1}$ serves as a label set on vertices to attach a specific element of $\Omega$ to each
 vertex, while $\C$   hosts the ``weights''   of the branched zeta functions yet to be defined.
\end{rk}

Let $\chi(x)$ be an excision function which is identically 1 on $[1,\infty)$.

\begin{defn} \label{defn:calr}
 Let $\calr _\chi:\Z_{\geq1}\times\C\longrightarrow \Omega$ be the map defined by
 \begin{equation*}
 \calr_\chi (\ell,s) := \left(x \mapsto\chi(x)x^{s-z_\ell}\right)
 \end{equation*}
 with $(z_1,z_2,\cdots)$ the canonical coordinates of $\C^\infty$.
\end{defn}

    On the grounds of Theorem \ref{thm:liftedphi}, the \loc map $\calr _\chi :(E=\Z_{\geq1} \times\C,\top_E) \rightarrow (\Omega,\top_\Omega)$  lifts to a  morphism  \[\calr _\chi ^\sharp: { \C\,\calf_{E,\top_E} \rightarrow  \C\,\calf_{\Omega,\top_\Omega}}\]   of \loc algebras.

Given  $\lambda\in \{ \pm 1\}$, {composing the two morphisms ${\mathcal R}_\chi ^\sharp $ and ${\mathcal Z}^\lambda$ of \loc algebras, we obtain}
 \begin{propdefn}\label{prop:zeta}
 	The map
 		\begin{eqnarray*}
 			\zeta _\chi ^{{\rm reg},\lambda}: {\mathcal F}_{E, \top_E}&\longrightarrow& \mathcal{M} (\C^\infty),\\
 			\F&\longmapsto & \mathcal{Z}_{{\mathcal R}_\chi ^\sharp(\F)}^\lambda=(\mathcal{Z}^\lambda \circ \mathcal{R}_\chi ^\sharp)(\F),
 		\end{eqnarray*}
 		defines a morphism of \loc algebras, which to a properly $\Z_{\geq1} \times\C$-decorated tree $\F=(F, d_F)$, assigns a multivariate meromorphic germ at zero, denoted  by  $\zeta^{{\rm reg},\lambda}_{\chi}(\F) $ and  which we call  {\bf regularised branched zeta} functions. Extending the conventions commonly used for multiple zeta functions, when $\lambda=-1$, we drop the upper index writing $\zeta^{{\rm reg} }_{\chi}$ and when  $\lambda=1$ we write $\zeta^{{\rm reg},\star}_{\chi}$.
 \end{propdefn}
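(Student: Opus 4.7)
The plan is to realise $\zeta_\chi^{{\rm reg},\lambda}$ as the composition of two locality morphisms that are already established. First I would check that $\mathcal{R}_\chi:(E,\top_E)\to(\Omega,\top_\Omega)$ is a locality map; then Theorem \ref{thm:liftedphi} will promote it to a locality morphism $\mathcal{R}_\chi^\sharp$ of operated locality algebras on properly decorated forests; finally, composing with the locality morphism $\mathcal{Z}^\lambda$ produced in Proposition \ref{pp:barPF} yields a locality morphism, since locality morphisms are closed under composition.

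The heart of the argument is the locality of $\mathcal{R}_\chi$. For $(k,s)\in E$, the symbol $\mathcal{R}_\chi(k,s)(x)=\chi(x)\,x^{s-z_k}$ is a simple holomorphic germ of symbols of affine order $s-z_k$, whose asymptotic coefficients at $+\infty$ are independent of the ambient variables $z$, since both $s\in\C$ and the excision function $\chi$ are constant in $z$ (cf.\ Example \ref{ex:holo_family_BZV}). Appealing to Proposition \ref{prop: LocCoeff}, one reads off that the dependence subspace $\supp(\mathcal{R}_\chi(k,s))\subset(\C^\infty)^*$ is the line $\langle z_k\rangle$ spanned by the $k$-th coordinate form. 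Since the canonical inner product $Q$ makes the coordinate forms mutually orthogonal, for any pair with $(k,s)\top_E(\ell,t)$, i.e.\ $k\neq\ell$, the supports $\langle z_k\rangle$ and $\langle z_\ell\rangle$ are $Q$-orthogonal, which is exactly the condition $\mathcal{R}_\chi(k,s)\perp^Q\mathcal{R}_\chi(\ell,t)$.

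Once this locality has been checked, the lift $\mathcal{R}_\chi^\sharp:\C\,\mathcal{F}_{E,\top_E}\to\C\,\mathcal{F}_{\Omega,\top_\Omega}$ exists uniquely by Theorem \ref{thm:liftedphi}, applied with target the $(\Omega,\top_\Omega)$-operated commutative locality algebra $\C\,\mathcal{F}_{\Omega,\top_\Omega}$ of Proposition \ref{prop:operatedalgebra}; by construction this lift is a morphism of locality algebras. Since Proposition \ref{pp:barPF} already identifies $\mathcal{Z}^\lambda$ as a morphism of locality algebras, the composite $\zeta_\chi^{{\rm reg},\lambda}=\mathcal{Z}^\lambda\circ\mathcal{R}_\chi^\sharp$ inherits the morphism property and takes values in $\mathcal{M}(\C^\infty)$. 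The only substantive step is the support computation in the second paragraph; everything else is a formal consequence of the universal property in Theorem \ref{thm:liftedphi} and of Proposition \ref{pp:barPF}, so I do not anticipate any genuine obstacle.
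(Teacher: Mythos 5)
Your proof follows exactly the paper's route: the paper likewise obtains $\zeta_\chi^{{\rm reg},\lambda}$ by lifting the locality map $\mathcal{R}_\chi$ to $\mathcal{R}_\chi^\sharp$ via Theorem~\ref{thm:liftedphi} and composing with the locality morphism $\mathcal{Z}^\lambda$ of Proposition~\ref{pp:barPF}. Your explicit verification that $\mathcal{R}_\chi$ is a locality map (via $\supp(\mathcal{R}_\chi(k,s))\subseteq\langle z_k\rangle$ and the $Q$-orthogonality of distinct coordinate forms) is a correct filling-in of a step the paper asserts without proof, though note that the containment of the dependence subspace follows directly from the definition rather than from Proposition~\ref{prop: LocCoeff}, which goes in the other direction.
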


The subsequent statement follows on inspection of the concrete  formula.

\begin {prop} The map $\zeta^{{\rm reg},\lambda}_{\chi}$ does not depends on the choice of excision function $\chi$ as long as  it is identically $1$ on $[1,\infty)$. Therefore we will drop the subindex $\chi$ from now on.
\end{prop}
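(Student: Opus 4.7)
The plan is to show that for any two excision functions $\chi_1, \chi_2$ around zero that are identically $1$ on $[1,\infty)$, the difference
\[ D(\F) := \widehat{{\mathfrak S}_\lambda}(\mathcal{R}^\sharp_{\chi_1}(\F)) - \widehat{{\mathfrak S}_\lambda}(\mathcal{R}^\sharp_{\chi_2}(\F)) \in \Omega \]
is represented by a holomorphic family of Schwartz functions in the $x$-variable. Since Schwartz functions lie in $\cals_{\rm ph}^{\notin \Z_{\geq 0}}$, hence in the kernel of the finite part at infinity by Lemma~\ref{lem:fpzero} combined with Lemma~\ref{lem:fpOmega}, this yields $\underset{+\infty}{\rm fp}(D(\F)) = 0$, i.e.\ $\zeta^{{\rm reg},\lambda}_{\chi_1}(\F) = \zeta^{{\rm reg},\lambda}_{\chi_2}(\F)$ in $\mathcal{M}(\C^\infty)$.

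The argument will proceed by induction on the number of vertices of $\F$ under the strengthened inductive hypothesis that $D(\F)$ is Schwartz in $x$ \emph{and} vanishes at every non-negative integer. The strengthening is needed because ${\mathfrak S}_\lambda$ does not in general preserve the space of Schwartz functions, but it does preserve the subclass of Schwartz functions vanishing at $\Z_{\geq 0}$; this is the central technical lemma. Indeed, if $g$ is Schwartz with $g(n)=0$ for every $n\in\Z_{\geq 0}$, then ${\mathfrak S}(g)(N)=\sum_{n=0}^N g(n)=0$ at non-negative integers, while Euler-Maclaurin forces $\lim_{x\to\infty}{\mathfrak S}(g)(x)=\sum_{n=0}^\infty g(n)=0$ and the remainder decays faster than any polynomial (together with all its derivatives) because $g$ and its derivatives do.

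Setting $\phi := \chi_1 - \chi_2$, which is smooth, compactly supported in $(0,1)$, and vanishes at every $n\in\Z_{\geq 0}$, the base case $\F = \bullet_{(\ell,s)}$ reduces to $D(\F) = {\mathfrak S}_\lambda(\phi(x)\,x^{s-z_\ell})$, to which the lemma applies since $\phi(x)\,x^{s-z_\ell}$ is compactly supported smooth and vanishes at integers. For the grafting step $\F = B_+^{(\ell,s)}(\F')$, writing $A_i' := \widehat{{\mathfrak S}_\lambda}(\mathcal{R}^\sharp_{\chi_i}(\F'))$ and decomposing
\[ \chi_1(x)\,x^{s-z_\ell}\,A_1' - \chi_2(x)\,x^{s-z_\ell}\,A_2' = \phi(x)\,x^{s-z_\ell}\,A_1' + \chi_2(x)\,x^{s-z_\ell}\,(A_1' - A_2'), \]
both summands are Schwartz and vanish at $\Z_{\geq 0}$ (the first by the properties of $\phi$, the second by the induction hypothesis on $\F'$), so the lemma applies to the outer ${\mathfrak S}_\lambda$. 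For the disjoint-product step $\F = \F_1 \cdot \F_2$, the identity $D(\F) = A_1(\F_1)\,D(\F_2) + D(\F_1)\,A_2(\F_2)$ together with the stability of ``Schwartz and vanishing at $\Z_{\geq 0}$'' under multiplication by a symbol closes the induction.

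The main obstacle is the technical lemma on ${\mathfrak S}_\lambda$ preserving ``Schwartz with zeros at integers''. Its content is essentially the Euler-Maclaurin identity, which equates the asymptotic constant of ${\mathfrak S}(g)$ with $\sum_{n\geq 0} g(n)$; under our hypothesis the latter vanishes, so ${\mathfrak S}(g)$ decays rapidly at infinity rather than tending to a non-zero limit. The remaining care concerns tracking the meromorphic dependence on $z$, which is straightforward since the Schwartz estimates are uniform on compacts in $z$ for holomorphic families in $\Omega$, and verifying that $D(\F)$ lies in $\Omega$ at each step, which follows by linearity since smoothing symbols sit inside every $\mathcal{M}\cals^\alpha(\C^\infty)$.
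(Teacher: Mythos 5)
Your proof is correct, but it takes a genuinely different (and far more explicit) route than the paper, which disposes of this proposition in one line by appealing to ``inspection of the concrete formula'' -- essentially the observation, as in Eq.~\eqref{eq:cutoffint}, that the $\chi$-dependence visibly cancels, or that only the values $\chi(n)=1$, $n\geq 1$, enter the nested sums at integer arguments. That short argument is actually incomplete as stated, because $\underset{+\infty}{\rm fp}$ is applied to the Euler--Maclaurin \emph{interpolation} $\widehat{{\mathfrak S}_\lambda}(\calr_\chi^\sharp(\F))$, which for non-integer $x$ does depend on $\chi$; two symbols agreeing at all positive integers need not have the same finite part at infinity. Your key lemma -- that ${\mathfrak S}_\lambda$ preserves the class of Schwartz functions vanishing on $\Z_{\geq 0}$, because the asymptotic constant of ${\mathfrak S}(g)$ is forced by Euler--Maclaurin to equal $\sum_{n\geq 0}g(n)=0$ while the remaining terms decay rapidly -- is exactly what closes this gap, and the strengthened inductive hypothesis (Schwartz \emph{and} vanishing at non-negative integers) is the right invariant: neither property alone is preserved through the grafting step. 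The decompositions in the grafting and product steps are correct, the difference $D(\F)$ does lie in $\Omega$ by linearity, and $\underset{+\infty}{\rm fp}$ kills smoothing symbols by Lemma~\ref{lem:fpzero}. The only points needing a word of care, both inherited from the paper itself, are the finite regularity of $\overline{B_K}$ (handled by taking $K$ large) and the harmless shift in ${\mathfrak S}_{-1}$; neither affects the argument. In short, you buy a complete proof where the paper offers an assertion.
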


\subsection{Renormalised branched zeta values}
We choose the inner product on $\C^\infty$ to be the canonical inner product.

  	\begin{defn}
  	  Let $\lambda\in\{\pm 1\}$ and let $\F=(F,\vec k)$ be a $\Z_{\geq1}\times\C$-decorated  forest. The {\bf renormalised branched zeta value} (or renormalised BZV) associated to the decorated tree $\F$ is defined  as
  			\begin{equation*}
  			\zeta^{{\rm ren},\lambda}(\F):=\pi_+\circ \zeta ^{{\rm reg},\lambda}(\F)|_{\vec z=\vec 0}= {\rm ev}_0\circ\pi_+\circ \zeta ^{{\rm reg},\lambda}(\F),
  			\end{equation*}  where ${\rm ev}_0$ is the evaluation at zero and $\pi_+$ is the projection operator defined in \cite{GPZ3} associated to the canonical inner product $Q$, making   the subsequent
  			diagramme    commutative.
  $$ \xymatrix{
&& \C\, \calf_{\Z_{\geq1} \times \C,\top} \ar[ddll]_{\calr^\sharp} \ar[ddddrr]^{\zeta^{{\rm reg},\lambda}} \ar[rr]^{\zeta^{{\rm ren},\lambda}} && \C && \\
&&&&&& \\
\C\,\calf_{\Omega,\perp_\Omega} \ar[ddrr]_{\widehat{P}_\lambda} \ar[rrrrdd]^{{\mathcal Z}^\lambda} &&&&&& \calm_+(\C^\infty) \ar[uull]_{ev_0} \\
&&&&&&\\
&& \Omega \ar[rr]_{{\underset{+\infty}{\rm fp}}} && \calm(\C^\infty) \ar[uurr]^{\pi_+} &&
}
$$
  	\end{defn}
The following theorem  ensures the multiplicativity of the regularised branched zeta functions on mutually independent elements.
  	\begin{thm}\label{thm:Zlocmorph}
  	 The map $\zeta^{{\rm ren},\lambda}:\C\,\calf_{\Z_{\geq1} \times \C ,\top}\longrightarrow\C$ is a \loc algebra homomorphism.
  	\end{thm}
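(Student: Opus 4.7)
The plan is to read off the proof directly from the commutative diagramme preceding the statement, which factorises
\[ \zeta^{{\rm ren},\lambda} \;=\; \mathrm{ev}_0 \,\circ\, \pi_+ \,\circ\, \zeta^{{\rm reg},\lambda}, \]
and to verify that each factor is a \loc algebra morphism with respect to the relevant \loc relations (with $\C$ carrying the trivial \loc relation $\C\times\C$). Composition of \loc algebra morphisms is a \loc algebra morphism, so once each factor is handled, the conclusion is immediate.

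The first factor $\zeta^{{\rm reg},\lambda}:\C\,\calf_{\Z_{\geq 1}\times \C,\top}\to (\calm(\C^\infty),\perp^Q)$ is already a \loc algebra morphism by Proposition-Definition~\ref{prop:zeta}. The last factor $\mathrm{ev}_0:\calm_+(\C^\infty)\to \C$ is the evaluation of a holomorphic germ at zero, which is a genuine (hence also \loc) algebra morphism when $\C$ is equipped with the trivial \loc structure. So the real content is to check that the middle factor $\pi_+:(\calm(\C^\infty),\perp^Q)\to (\calm_+(\C^\infty),\perp^Q)$ is a \loc algebra morphism.

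I would handle $\pi_+$ as follows. By Example~\ref{ex:mero} and the underlying results of \cite{CGPZ1}, the complementary projection $\pi_-^Q$ is a \loc Rota-Baxter operator of weight $-1$ on $(\calm(\C^\infty),\perp^Q)$; equivalently, $\pi_+$ is the projection onto $A_1:=\calm_+(\C^\infty)$ along $A_2:=\calm_-^Q(\C^\infty)$, and by Proposition~\ref{prop:multpi} both $A_1$ and $A_2$ are \loc subalgebras with $A_1\perp^Q A_2$. To upgrade this to multiplicativity of $\pi_+$ on \loc-independent pairs, I would invoke the fact (also from \cite{CGPZ1}) that $A_2=\calm_-^Q(\C^\infty)$ is a \loc ideal of $(\calm(\C^\infty),\perp^Q)$, and then argue as follows: given $f\perp^Q g$, decompose $f=\pi_+(f)+\pi_-^Q(f)$ and $g=\pi_+(g)+\pi_-^Q(g)$. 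Using the \loc-ness of $\pi_+$ and $\pi_-^Q$, each of the pairs $(\pi_+(f),\pi_-^Q(g))$, $(\pi_-^Q(f),\pi_+(g))$, $(\pi_-^Q(f),\pi_-^Q(g))$ is $\perp^Q$-independent; by the ideal property of $\calm_-^Q$, their products lie in $\calm_-^Q$, so $\pi_+$ kills them, leaving $\pi_+(fg)=\pi_+(f)\,\pi_+(g)$.

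The main obstacle is precisely the \loc multiplicativity of $\pi_+$; once this is in hand, the theorem is a one-line corollary of the factorisation. The subtle point is that $\pi_+$ is \emph{not} multiplicative on all of $\calm(\C^\infty)$, so it is essential that the \loc relation $\perp^Q$ on $\calm(\C^\infty)$ lifts via $\calr^\sharp$ and $\widehat{{\mathfrak S}_\lambda}$ to the \loc relation on $\C\,\calf_{\Z_{\geq 1}\times\C,\top}$ used in the statement. This compatibility is exactly what was guaranteed by the universal properties exploited in Proposition-Definition~\ref{prop:zeta}, and it is what makes the composition a bona fide \loc algebra morphism rather than merely a linear map that happens to be multiplicative on a few special elements.
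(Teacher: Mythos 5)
Your proof is correct and follows the same route as the paper: the paper's proof is exactly the factorisation $\zeta^{{\rm ren},\lambda}={\rm ev}_0\circ\pi_+\circ\zeta^{{\rm reg},\lambda}$ together with the observation that each factor is a \loc morphism ($\zeta^{{\rm reg},\lambda}$ by Proposition~\ref{prop:zeta}, $\pi_+$ by citation of \cite[Example 3.9]{CGPZ1}, and ${\rm ev}_0$ trivially). The only difference is that where the paper simply cites \cite{CGPZ1} for the \loc multiplicativity of $\pi_+$, you unpack that citation via Proposition~\ref{prop:multpi} and the \loc ideal property of $\calm_-^Q(\C^\infty)$, which is a valid (and more self-contained) way to justify that step.
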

  	\begin{proof}
  	 The map $\zeta^{{\rm ren},\lambda}: \F\longmapsto {\rm ev}_0\circ\pi_+\circ \zeta ^{{\rm reg},\lambda}(\F )$ is a morphism on the \loc algebra $\C\,\calf_{\Omega,\top}$ as the composition of \loc morphisms of
  	 \loc algebras, namely  ${\rm ev}_0:  \calm_+(\C^{\infty}) \longrightarrow  \C$ is clearly a \loc character, $\F\longmapsto \zeta ^{{\rm reg},\lambda}(\F): \C\,\calf_{\Omega,\top _Q}\longrightarrow \calm(\C^{\infty})$ is a
  	 \loc morphism  by Proposition \ref{prop:zeta} and $\pi_+$ is a \loc morphism by \cite[Example 3.9]{CGPZ1}.
  	\end{proof}

By a similar argument to {the one used in \cite {CGPZ2}}, we have the following statement.
\begin {prop}  For properly $\Z_{\geq1} \times \C$-decorated forest $\F$, $\zeta^{{\rm ren},\lambda}(\F)$ does not depend on the $\Z _{\ge 1}$ decorations.
\end{prop}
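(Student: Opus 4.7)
The plan is to reduce the statement to an invariance of the renormalisation scheme under permutations of the coordinates of $\C^\infty$. First, I would observe that by the very definition of the locality relation on $E=\Z_{\geq 1}\times\C$, being properly decorated means that the $\Z_{\geq 1}$-labels $k_v$ at the vertices $v$ of $F$ are pairwise distinct. Given two proper decorations of the same underlying forest sharing the same $\C$-components $(s_v)_v$ but using different injective families $(k_v^1)_v$ and $(k_v^2)_v$ of integer labels, I would pick any permutation $\tau$ of $\Z_{\geq 1}$ sending $k_v^1$ to $k_v^2$ for all $v\in\mathcal V(F)$. This $\tau$ extends to a linear automorphism $T_\tau$ of $\C^\infty$ permuting the canonical coordinates, and hence to automorphisms of $\Omega$ and of $\mathcal M(\C^\infty)$ by pull-back $f\longmapsto f\circ T_\tau^{-1}$.

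Next, I would trace this permutation through the construction of Proposition-Definition \ref{prop:zeta}. Because $\calr_\chi(\ell,s)=\chi(x)\,x^{s-z_\ell}$ involves the coordinates of $\C^\infty$ only through the index $\ell$, one has $\calr_\chi(\tau(\ell),s)=T_\tau^*\calr_\chi(\ell,s)$. Lifting this to a morphism of $(\Omega,\top_\Omega)$-operated locality algebras via Theorem \ref{thm:liftedphi}, and using that $\widehat{\mathfrak S_\lambda}$ and $\underset{+\infty}{\rm fp}$ only act on the symbolic variable $x$ while the permutation acts only on the $\C^\infty$-variable, I obtain the equivariance
\[
\zeta^{{\rm reg},\lambda}(F,d_2)=T_\tau^{\ast}\,\zeta^{{\rm reg},\lambda}(F,d_1)
\]
as meromorphic germs in $\mathcal M(\C^\infty)$.

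Then I would verify that the renormalisation scheme is invariant under such coordinate permutations. Since the inner product $Q$ is the canonical one, it is preserved by any permutation of coordinates; consequently the splitting $\mathcal M(\C^\infty)=\mathcal M_+(\C^\infty)\oplus \mathcal M_-^Q(\C^\infty)$ is stable under $T_\tau^\ast$, so the locality projection $\pi_+$ commutes with $T_\tau^\ast$. Finally, evaluation at $\vec z=\vec 0$ is plainly invariant under $T_\tau^\ast$ since $T_\tau$ fixes the origin. Chaining these facts,
\[
\zeta^{{\rm ren},\lambda}(F,d_2)={\rm ev}_0\circ\pi_+\bigl(T_\tau^\ast\,\zeta^{{\rm reg},\lambda}(F,d_1)\bigr)={\rm ev}_0\circ T_\tau^\ast\circ\pi_+\bigl(\zeta^{{\rm reg},\lambda}(F,d_1)\bigr)=\zeta^{{\rm ren},\lambda}(F,d_1).
\]

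The only real obstacle is checking compatibility of $\pi_+$ with the permutation action; this boils down to the fact that $Q$-orthogonality for the canonical inner product is permutation-invariant, so both summands of the decomposition used to define $\pi_+$ in \cite{GPZ3} are globally stable under $T_\tau^\ast$. The rest is bookkeeping on the definitions.
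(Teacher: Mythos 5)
Your argument is correct, and it is worth noting that the paper itself does not spell out a proof of this proposition: it only says ``by a similar argument to the one used in [CGPZ2]'', deferring to the companion paper. Your permutation--equivariance route supplies a complete and natural argument. The two key points you rely on both check out: (a) since a properly decorated forest carries pairwise distinct integer labels, any two admissible label families are matched by a permutation $\tau$ of $\Z_{\geq 1}$, and $\calr_\chi(\tau(\ell),s)=T_\tau^*\calr_\chi(\ell,s)$ because $\calr_\chi$ sees $\C^\infty$ only through the single coordinate $z_\ell$; and (b) a coordinate permutation is orthogonal for the canonical inner product $Q$, so it preserves the splitting $\calm(\C^\infty)=\calm_+(\C^\infty)\oplus\calm_-^Q(\C^\infty)$ of [GPZ3] and hence commutes with $\pi_+$, while $\widehat{{\mathfrak S}_\lambda}$ and $\underset{+\infty}{\rm fp}$ act only in the symbol variable $x$ and so commute with $T_\tau^*$ (one checks via the uniqueness clause of the universal property that the two branched maps intertwine). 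Two small points you should make explicit: choose $\tau$ with finite support, so that $T_\tau$ restricts to an orthogonal map of some $\C^k$ and is compatible with the direct-limit structure $\C^\infty=\underset{\rightarrow}{\lim}\,\C^k$ used to define $\calm(\C^\infty)$ and $\pi_+$; and observe that $T_\tau^*$ preserves the locality relation $\perp^Q$ on $\Omega$ (because $\supp(T_\tau^*\sigma)$ is the image of $\supp(\sigma)$ under an orthogonal map), which is what guarantees that the relabelled forest is again properly decorated and that the lifted morphisms are applied to it legitimately.
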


\subsection{Branched zeta-functions in terms of multiple zeta functions  }

We now relate our constructions to that of multiple zeta functions carried out in \cite{MP}, identifying a proper word with a properly decorated ladder tree.

  \begin{prop} \label{pp:QcombMZV}
 Let $\F=(F,\vec{k})$ be a ladder tree $\F$ with $k$ vertices decorated from bottom to top  by $(\ell,s_\ell)\in \Z _{\ge 1}\times \C$.
  \begin{enumerate}
  \item
If $\Re(s_1)>1$ and $\Re(s_i)\geq 1, i=2,\cdots,k$, then
\[\zeta^{\rm reg}(\F)=\zeta(s_1-z_1, \cdots, s_k-z_k)=\sum_{{1\leq n_k<\cdots < n_1}} n_1^{-s_1+z_1}\cdots n_k^{-s_k+z_k},\]
for the multiple zeta function.
Similarly,
\[ \zeta^{{\rm reg}, \star}=\zeta^\star(s_1-z_1, \cdots, s_k-z_k):=\sum_{{1\leq n_k\leq \cdots \leq n_1}} n_1^{-s_1+z_1}\cdots n_k^{-s_k+z_k},\]
for the multiple star zeta function.
\item In general, $\zeta^{\rm reg}(\F)$ and $\zeta^{{\rm reg},\star}(\F)$ coincide with    the meromorphic germs
    \[\cutoffsum_{1\leq n_k<n_{k-1}<\cdots <n_1}n_1^{s_1-z_1}\cdots n_k^{s_k-z_k}, \quad {\rm resp.}\quad  \cutoffsum_{1\leq n_k\leq n_{k-1}\leq \cdots \leq n_1}n_1^{s_1-z_1}\cdots n_k^{s_k-z_k}\]
    considered in \cite[Theorem 9]{MP}.
 \end{enumerate}
 \end{prop}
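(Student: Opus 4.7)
The plan is to unfold the recursive definition of $\widehat{{\mathfrak S}_\lambda}$ on ladder trees, identify the iterated interpolated summation with a nested discrete sum at integer arguments, and then pass to the finite part at infinity.

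First, I would note that the ladder tree $\F$ with $k$ vertices decorated bottom to top by $(\ell, s_\ell)$ is obtained by iterated application of $B_+$, namely $F_k = B_+^{(1,s_1)}\bigl(B_+^{(2,s_2)}\bigl(\cdots B_+^{(k,s_k)}(\emptyset)\bigr)\bigr)$. Applying $\calr^\sharp$ decorates each vertex $\ell$ by the symbol $\sigma_\ell(x) = \chi(x)\,x^{s_\ell - z_\ell}$, which lies in $\Omega$ and which pairwise satisfy $\sigma_\ell \perp^Q \sigma_{\ell'}$ for $\ell \neq \ell'$ by the definition of the independence relation on $E = \Z_{\geq1}\times \C$. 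The characterisation (\ref{Bbranched}) of the branched map then yields, by immediate induction,
\begin{equation*}
\widehat{{\mathfrak S}_\lambda}(\calr^\sharp(\F)) \;=\; {\mathfrak S}_\lambda\!\Bigl(\sigma_1 \cdot {\mathfrak S}_\lambda\!\bigl(\sigma_2 \cdots {\mathfrak S}_\lambda(\sigma_k)\bigr)\Bigr),
\end{equation*}
viewed as an element of $\Omega$ in the variables $(z_1,\ldots,z_k)$.

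Next, by the definitions of ${\mathfrak S}_{-1}$ and ${\mathfrak S}_1$ in Definition \ref{defn:SIlambda}, these smooth interpolating symbols agree at each positive integer $N$ with the discrete sums $S_{-1}(\sigma)(N) = \sum_{n=1}^{N-1}\sigma(n)$ and $S_1(\sigma)(N) = \sum_{n=1}^N \sigma(n)$ respectively. Iterating, the values at integers are genuine nested sums,
\begin{equation*}
\widehat{{\mathfrak S}_{-1}}(\calr^\sharp(\F))(N) \;=\; \sum_{1\leq n_k<n_{k-1}<\cdots<n_1<N} n_1^{s_1-z_1}\cdots n_k^{s_k-z_k},
\end{equation*}
and likewise for $\lambda=1$ with strict inequalities replaced by $\leq$, with the excision function $\chi$ contributing only finitely many terms that do not affect the asymptotic expansion at $+\infty$. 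Since $\widehat{{\mathfrak S}_\lambda}(\calr^\sharp(\F)) \in \Omega$ is polyhomogeneous in $x$, its coefficients of the asymptotic expansion, and in particular its constant term at infinity, are determined by the values at large integers. Applying $\underset{+\infty}{\rm fp}$ therefore yields item (2):
\begin{equation*}
\zeta^{\rm reg}(\F) \;=\; \cutoffsum_{1\leq n_k<\cdots<n_1} n_1^{s_1-z_1}\cdots n_k^{s_k-z_k},
\end{equation*}
matching the cut-off sum of \cite[Theorem 9]{MP}, and analogously for $\zeta^{{\rm reg},\star}$.

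For item (1), once the convergence conditions $\Re(s_1)>1$, $\Re(s_i)\geq 1$ are imposed (at $z=0$, and hence on a neighbourhood by continuity), the tail of the nested sum is $L^1$-summable, so the argument of Example \ref{ex:cutoffint_vs_int} applied inductively to each summation layer shows that $\cutoffsum$ coincides with the ordinary convergent multiple sum, giving the multiple zeta (respectively multiple star zeta) function. The main obstacle I anticipate is justifying that the finite part at infinity of the iterated smooth interpolation equals the finite part of the pointwise nested discrete sum: this rests on the fact that ${\mathfrak S}_\lambda(\sigma)$ belongs to $\Sigma(\R_{\geq 0})$ (Proposition \ref{prop:SRBloc}), so that each intermediate product $\sigma_\ell \cdot {\mathfrak S}_\lambda(\cdots)$ is again a polyhomogeneous symbol whose asymptotic expansion is controlled, allowing the iteration to remain in $\Omega$ at every step.
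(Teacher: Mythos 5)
Your proposal is correct and follows essentially the same route as the paper: both reduce the ladder tree to the iterated composition ${\mathfrak S}_\lambda(\sigma_1\cdot{\mathfrak S}_\lambda(\sigma_2\cdots))$ and identify its finite part at infinity with the cut-off Chen sums of \cite[Theorem 9]{MP}, the only difference being that the paper packages the unfolding via the word-factorisation identity $\widehat P\circ i_{\mathcal W}=\widehat P^{\mathcal W}$ of Eq.~\eqref{eq:hatPonW} and a citation of \cite{MP}, whereas you carry out the induction on the recursion \eqref{Bbranched} and the evaluation at integers explicitly. Your extra care about ${\mathfrak S}_\lambda$ stabilising $\Sigma(\R_{\geq 0})$ so that each layer stays in $\Omega$ is exactly the point the paper delegates to Proposition \ref{prop:SRBloc} and Theorem \ref{thm:RB_Omega}.
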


 \begin{proof}
 \begin{enumerate}
 \item This follows from the fact that $\widehat P$ restricted to words coincides with $\widehat P^ {\calw}$ (see  (Eq.~\eqref{eq:hatPonW}) applied to the Rota-Baxter operator ${\mathfrak S}_\lambda$.
\item {Since the cut-off Chen sums considered in  \cite[Theorem 9]{MP}  were built by means of the composition $\underset{{+\infty}}{\rm fp}\circ\widehat {{\mathfrak S}_\lambda}^{\mathcal W}\vert_{{\mathcal W}_{\Omega, \top_\Omega}}$,  the statement   follows from the fact that $ \widehat {{\mathfrak S}_\lambda}\vert{{\mathcal W}_{\Omega, \top_\Omega}}=\widehat {{\mathfrak S}_\lambda}^{\mathcal W}\vert_{{\mathcal W}_{\Omega, \top_\Omega}}$.}
 \end{enumerate}
 \end{proof}

\begin{thm} \label{thm:rational}
Let $\F=(F,\vec{k})\in \calf_{\Z,\top}$ be any properly $\Z_{\geq 1} \times \C$-decorated forest.
\begin{enumerate}
\item $\zeta^{{\rm ren}, \lambda}(\F )$ is a $\Q$-linear combination  of renormalised (ordinary) multiple zeta functions $\zeta$ (resp. $\zeta^\star$) if  $\lambda=1$ (resp. if $\lambda=-1$).
\item
   Provided  the inner product $Q$ is rational,  the renormalised branched zeta values    $\zeta^{{\rm ren}}_{\lambda} (\F)$ associated to any decorated tree $\F=(F, \vec k)$ are  rational
   whenever $s_v\in\Z_{\leq-1}, \forall v\in\calv(\T)$.
   \end{enumerate}
\end{thm}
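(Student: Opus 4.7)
The plan is to reduce branched renormalised zeta values to classical renormalised multiple zeta values via the flatening map, and then invoke rationality of the latter at negative integer arguments when the inner product $Q$ is rational.

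For part~(i), Theorem~\ref{thm:RB_Omega} shows that each ${\mathfrak S}_\lambda$ ($\lambda \in \{\pm 1\}$) is a \loc Rota-Baxter operator on $\Omega$, so Theorem~\ref{thm:FW} gives a factorisation $\widehat{{\mathfrak S}_\lambda} = \widehat{{\mathfrak S}_\lambda}^{\calw} \circ f_{\mu}$ through the word algebra, with $\mu = \pm 1$ matching the Rota-Baxter weight of ${\mathfrak S}_\lambda$. Pre-composing with ${\mathcal R}^\sharp$ and post-composing with $\underset{+\infty}{\rm fp}$ yields
\[
\zeta^{{\rm reg},\lambda}(\F) \;=\; \bigl(\underset{+\infty}{\rm fp}\circ \widehat{{\mathfrak S}_\lambda}^{\calw}\bigr)\bigl(f_{\mu}({\mathcal R}^\sharp(\F))\bigr).
\]
By Lemma~\ref{lem:proper_proper}, $f_{\mu}({\mathcal R}^\sharp(\F))$ is a $\Z$-linear combination of properly $\Omega$-decorated proper words, the integer coefficients arising from the weight-$\mu$ quasi-shuffle combinatorics (here $\mu = \pm 1$). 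Identifying each proper word with the corresponding ladder tree and applying Proposition~\ref{pp:QcombMZV}, the map $\underset{+\infty}{\rm fp}\circ \widehat{{\mathfrak S}_\lambda}^{\calw}$ sends each such ladder to a regularised classical multiple zeta function of strict or star type (according to $\mu$). Applying the $\Q$-linear map $\pi_+\circ {\rm ev}_0$ termwise shows that $\zeta^{{\rm ren},\lambda}(\F)$ is a $\Q$-linear combination of classical renormalised multiple zeta values.

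For part~(ii), by part~(i) it suffices to establish rationality of the classical renormalised values $\zeta^{{\rm ren},\lambda}(s'_1, \ldots, s'_j)$ appearing in the decomposition. The arguments $s'_i$ are partial sums of the original decorations $(s_v)_{v\in \calv(F)}$; since $s_v \in \Z_{\leq -1}$ for every $v$, each $s'_i$ is again a negative integer. Iterating the Euler--Maclaurin expression~\eqref{eq:EML} together with the explicit formula~\eqref{eq:Irho} produces a Laurent expansion of $\zeta^{{\rm reg},\lambda}(s'_1 - z_1, \ldots, s'_j - z_j)$ around $\vec z = 0$ whose coefficients involve only Bernoulli numbers and rational fractions of the form $1/(s'_i - p + 1)$ with $p \in \Z$; all such coefficients therefore lie in $\Q$. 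Since $Q$ is rational, the \loc projection $\pi_+^Q$ restricts to a $\Q$-linear endomorphism of the $\Q$-subspace of germs with rational coefficients, and ${\rm ev}_0\circ \pi_+^Q$ of the germ then produces a rational number.

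The principal obstacle lies in tracking the $\Q$-rationality through the multivariate pole-subtraction procedure: one must verify that $\pi_+^Q$, which decomposes meromorphic germs along the $Q$-orthogonal splitting $\calm(\C^\infty) = \calm_+(\C^\infty) \oplus \calm_-^Q(\C^\infty)$, preserves rationality of the Laurent coefficients whenever $Q$ is itself rational. This input is already present in the multivariate minimal-subtraction framework of~\cite{GPZ3} and~\cite{CGPZ1}, where the projection is shown to be defined over $\Q$ in this case; the present argument essentially transports that rationality through the flatening along rooted trees.
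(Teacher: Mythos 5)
Your proposal is correct and follows essentially the same route as the paper: part (i) via the factorisation $\widehat{{\mathfrak S}_\lambda}=\widehat{{\mathfrak S}_\lambda}^{\calw}\circ f_\mu$ through words (Theorem \ref{thm:FW}) together with the identification of proper words with ladder trees and Proposition \ref{pp:QcombMZV}, and part (ii) by reducing to the rationality of renormalised multiple zeta values at negative integer arguments and the fact that $\pi_+^Q$ preserves rationality when $Q$ is rational. You merely unpack the citation of \cite[Theorem 9]{MP} by tracking the Bernoulli-number and rational-fraction coefficients through the Euler--Maclaurin and integration formulas, which matches the paper's stated alternative argument.
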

\begin{proof}
\begin{enumerate}
\item The first statement follows from combining $\widehat {{\mathfrak S}_\lambda}= \widehat {{\mathfrak S}_\lambda}^{\mathcal W}\circ f_\lambda$
 with the fact that ladder trees give rise to multiple zeta functions.    \item
The second statement follows from the first one combined with the fact that renormalized multiple zeta values are rational  {(\cite[Theorem 9]{MP})}.  {Alternatively, one shows by an induction on the number of
vertices of $\F$ that $\widehat {{\mathfrak S}_\lambda}$ has rational coefficients in the sense of \cite{GPZ3}. In view of the rationality  of the inner product  $Q$ the projection  map $\pi_+^Q$  preserves rationality so   that $\zeta ^{{\rm reg},\lambda}(\F)$ has rational coefficients and thus
$\zeta^{{\rm ren}, \lambda}(\F )$ is rational.}
\end{enumerate}
\end{proof}

\newpage
\noindent

\end{document}